\documentclass{article}

\usepackage[english]{babel}
\usepackage{natbib}
\usepackage[letterpaper,top=2cm,bottom=2cm,left=3cm,right=3cm,marginparwidth=1.75cm]{geometry}
\usepackage{amsmath, amsthm, amssymb}
\usepackage{graphicx}
\usepackage[colorlinks=true, allcolors=blue]{hyperref}
\usepackage{subcaption}
\usepackage{mathtools}
\usepackage{wrapfig}
\usepackage{booktabs}
\usepackage{multirow}
\usepackage{multicol}
\usepackage[table]{xcolor}
\usepackage{tabularx}
\makeatletter
\@ifundefined{insert@pcolumn}{\let\insert@pcolumn\insert@column}{}
\makeatother
\graphicspath{{pictures/}}
\usepackage{subcaption}

\usepackage{algorithm}
\usepackage{algpseudocode}

\newcommand{\blind}{1}%JASA-template
\newcommand{\jbes}{0}

\DeclareMathOperator*{\argmax}{argmax}

\DeclareMathOperator{\logit}{logit}
\DeclareMathOperator{\expit}{expit}
\newtheorem{theorem}{Theorem}[section]
\newtheorem{proposition}[theorem]{Proposition}

\newtheorem{assumption}[theorem]{Assumption}
\newtheorem{lemma}[theorem]{Lemma}

\usepackage{pifont}
\newcommand{\cmark}{\ding{51}}
\newcommand{\xmark}{\ding{55}}

\newcolumntype{L}[1]{>{\raggedright\arraybackslash}p{#1}}
\newcolumntype{C}{>{\centering\arraybackslash}X}

\title{A Censored Transformed Model for Proportional Outcomes with Boundary Mass and an Application to Loss Given Default Modeling}
\author{
  Yuan Christopher Qiang\footnotemark[1] \\
  \and
  Fabio Sigrist\footnotemark[1] \footnotemark[3]% chktex 42
}

\begin{document}
\maketitle

\begin{abstract}
We introduce the zero-one censored transformed normal (ZOC-TN) model for proportional responses with potential probability mass at the boundaries 0 and 1. The model combines a censored Gaussian variable with a two-parameter affine-logit transformation on the interior (0,1). We characterize the transformation parameters, establish large-sample properties, and relate the affine-logit specification to broader classes of interior distributions. Theoretical and experimental results demonstrate that the proposed model can capture a wider range of qualitative density shapes than several benchmark models while remaining parsimonious, computationally efficient, and numerically stable. Furthermore, the ZOC-TN model can be extended (i) to account for nonlinearities and interactions in a tree-boosting machine learning framework and (ii) to explicitly model residual spatio-temporal variability. We apply the ZOC-TN model to loss given default (LGD) modeling for a large dataset of U.S. residential mortgages and compare it to multiple benchmark models. We find that a tree-boosted ZOC-TN model with a spatio-temporal frailty Gaussian process delivers the strongest out-of-sample performance, indicating that mortgage losses are shaped by nonlinear covariate effects and by unaccounted-for space-time variation.
\end{abstract}

\noindent%
{\it Keywords:} fractional response, bounded response, spatio-temporal model, machine learning

% The resulting likelihood is parsimonious, computationally stable, and flexible enough to capture a wide range of asymmetric and nonstandard interior density shapes commonly observed in fractional-response data.

\footnotetext[1]{Seminar for Statistics, ETH Zurich}
% Personal email address: christopher.yq@gmail.com
\footnotetext[3]{Corresponding author: fabio.sigrist@stat.math.ethz.ch}

\section{Introduction}
Responses constrained to the unit interval arise in many fields such as econometrics and risk modeling, with rates, proportions, shares, utilization indices, recovery rates, and standardized losses all taking values in $[0,1]$. Such outcomes are commonly referred to as \emph{fractional} or \emph{bounded} response variables, or simply \emph{proportional data}; outside of finance and economics, they are common in fields such as neuroscience and ecology. The main modeling challenges are that these responses are bounded, can be asymmetric and multimodal, and may exhibit mass at the boundaries. %As a result, bounded-response regression constitutes a distinct modeling problem for which ordinary Gaussian methods are generally inappropriate.
 % \citep{Geissinger2022, Leng2021, Douma2019}
 
 Classical linear regression models can yield predicted values outside $[0,1]$, do not reflect the changing variance structure near the boundaries, and are often too rigid for the skewness and kurtosis observed in proportional data. Transformations such as the logit or probit of a rescaled response mitigate the support problem only partially, while introducing ad hoc preprocessing choices and reducing interpretability on the original scale. To address this, one strand of the fractional-response literature models only the conditional mean $\mathbb{E}[Y\mid \mathbf{x}]$. \citet{Papke1996} proposed quasi-likelihood estimation using a bounded response function, typically logistic, and subsequent work extended this framework in semiparametric and two-part directions \citep{Ramalho2011}. These models can handle boundary observations for estimation and are useful when inference on average effects is the primary objective, but they are less suited when the full predictive distribution or boundary probabilities are of interest.

An alternative option is to adopt fully parametric likelihoods for bounded outcomes. For example, beta regression can model fractional responses with support in $(0,1)$ \citep{Ferrari2004}. If observed responses attain the boundary values at 0 or 1, \citet{Smithson2006} propose the technique of ``nudging" to transform data into the interior before fitting a beta model. To explicitly model proportional data with boundary masses, two common approaches are mixture (two-stage) constructions and censored models. Mixture models, such as the zero-one-inflated beta model of \citet{Ospina2010}, treat boundary and interior outcomes as arising from separate components, whereas censored models, such as the two-limit Tobit model of \citet{Rosett1975}, model all observations through a single latent variable that is censored at $0$ and $1$. The latter approach is more parsimonious and yields a unified interpretation of covariate effects, which can be useful even when no explicit censoring occurs in the data collection. The censored Gamma model of \citet{Sigrist2011} is an extension based on a shifted and censored Gamma distribution, and \citet{Kosmidis2025} proposed extended-support beta models using a censored and transformed four-parameter beta family. Recently, \citet{lee2025} developed cobin and micobin regression models as robust extensions of beta regression.  
 % Within this class, the two-limit Tobit model can be too restrictive.

In this article, we develop a novel censored model for proportional outcomes denoted as the \emph{zero-one censored transformed normal} (ZOC-TN) distribution. This distribution arises from a censored normal random variable whose interior part is transformed according to an affine-logit transformation. The latter enables the interior density to capture a broad spectrum of distribution shapes including skewness and kurtosis. We characterize the transformation parameters, establish large-sample properties, derive the Fisher information, and show how the model approximates and can be extended to general distributions on the interior $(0,1)$. Moreover, we demonstrate experimentally and theoretically that the model can capture a wider range of qualitative density shapes than several benchmark models while remaining parsimonious, computationally efficient, and numerically stable. This computational tractability is an important contribution: it makes full-likelihood distributional models feasible in settings where repeated estimation, nonlinear machine learning predictors, or latent dependence structures would otherwise be difficult to combine with flexible censored likelihoods. We apply the proposed model to loss given default (LGD) forecasting using a large U.S. mortgage dataset. We find that a tree-boosted ZOC-TN model with a spatio-temporal frailty Gaussian process yields the highest out-of-sample prediction accuracy. This provides strong evidence of nonlinear effects and interactions in the determination of LGD as well as residual correlations across time and space.

% Moreover, the censored transformed normal construction can be extended naturally to include more general and flexible families of proportional likelihoods beyond our chosen affine-logit model. %In addition to performing a simulation study to evaluate model fit against established benchmark models, in the second half of the paper we apply our findings to an empirical loss given default dataset in the task of forecasting loss severity on single-family mortgage defaults in the continental United States. 

% In summary, this article makes three contributions. First, we introduce the zero-one censored transformed normal likelihood for proportional responses with potential boundary mass.  Second, through simulations, we show that ZOC-TN provides competitive accuracy across a range of censored and mixture data-generating processes, while retaining a clear computational advantage over censored beta and gamma alternatives. 

\subsection{Background on Loss Given Default Modeling}
The loss given default (LGD) is the proportion of the outstanding balance lost in the event a borrower defaults. Under Basel II, the LGD is a key risk parameter, alongside the probability of default (PD) and the exposure at default (EAD), in the provisioning and pricing of loans, whereby the total nominal loss is the product of the PD, EAD, and LGD. Through the Internal Ratings-Based (IRB) approach, Basel II and the 2017 Basel III reforms permit financial institutions to develop and deploy proprietary models to estimate the three components PD, EAD, and LGD. Despite this apparent need for robust modeling methods, the LGD has received notably less attention in the academic literature than the PD, especially in the case of retail exposure \citep{Leow2012, Bellotti2012}. We aim to address this by investigating a battery of fractional response models including traditional regression models and more modern machine learning methods.
% \citep{BaselIII}

% Early work on LGD modeling treats mortgage default as an embedded option on homeowner equity, and derives implications for loss severity from contingent-claims models \citep{Lekkas1993, Kau1999}. In this way, these models serve only to estimate loss given “optimal” default, and fail to account for nonfinancial contributors (or indeed inhibitors) of default. A discussion and empirical analysis on the realism of these ruthless models is provided in \cite{Ambrose2001}.

For our analysis, we model the LGD as a fractional response variable, taking values in $[0, 1]$, with $0$ representing a complete recovery, and $1$ being a total loss of the principal at the time of default. Several fractional response regression methods have been used in the literature. \cite{Bellotti2012}, for example, forecast LGD on a U.K. credit card dataset using Tobit regression, while \cite{Bastos2010} estimates the conditional mean of LGD using a Bernoulli quasi-likelihood approach. \citet{Sigrist2011} develop a censored gamma model and apply it to an insurance LGD dataset. \cite{Calabrese2014} considers recovery rates ($=1 - \mathrm{LGD}$) on Italian bank loans, building a mixture distribution with a Bernoulli random variable for the boundaries and a beta random variable on the interior. \cite{Yao2017} use a least squares support vector classifier to detect boundary observations and consider the effects of a wide array of interior regression models. Recent papers have proposed to apply machine learning methods to the task of forecasting LGD \citep{Tobback2014, Loterman2012, Kellner2022}. However, the matter of accounting for spatial correlations in LGD modeling has not been addressed before. As a secondary objective of this paper, we fill this gap by explicitly modeling residual spatial and temporal correlations in LGD through a latent Gaussian process.
%This approach follows from the work of \cite{Kundig2025} which develops a novel machine learning model for forecasting PD by combining tree-boosting with a latent spatio-temporal Gaussian process.

\section{The Zero-One Censored Transformed Normal (ZOC-TN) Model}\label{sec:zoctn_model}
As in the two-limit Tobit model \citep{Rosett1975}, we assume that a latent Gaussian variable is censored at $0$ and $1$:
\begin{equation}\label{def_Z}
    Z = \max\{\min(Z^*, 1), 0\}, ~~ Z^* \sim \mathcal{N}(\mu, \sigma^2),~~ \mu\in \mathbb{R}, \sigma^2>0.
\end{equation}
In a second step, the interior portion is mapped through a strictly monotone transformation
\begin{equation}\label{eq:zoctn_affine}
    Y = g_{a, b}(Z) := \begin{cases}
    0, & Z=0 \\
    \mathrm{expit}\left(a + b\,\mathrm{logit}(Z)\right), &Z\in(0, 1) \\
    1, &Z=1
    \end{cases}~~\text{for } a\in \mathbb{R}, b>0,
\end{equation}
where $\mathrm{logit}(z) = \log(z/(1-z))$ and $\mathrm{expit}(x) = \frac{1}{1 + e^{-x}}$. This transformation does not affect the boundary mass at $0$ and $1$, while allowing flexible skewness and kurtosis in the interior of the unit interval. The corresponding density is
\begin{equation}\label{eq:zoctn_pdf}
    f^{\mathrm{ZOC\text{-}TN}}(y \mid \mu,\sigma,a, b)
    =
    \begin{cases}
            \Phi\!\left(-\dfrac{\mu}{\sigma}\right),
        & y = 0 \\
            \phi\!\left(
                \dfrac{z_{a, b}(y) - \mu}{\sigma}
            \right)
            \dfrac{z_{a, b}(y)(1-z_{a, b}(y))}{\,b\sigma y(1-y)},
        & y \in (0,1) \\
            1 -
            \Phi\!\left(
                \dfrac{1-\mu}{\sigma}
            \right),
        & y = 1
    \end{cases}
\end{equation}
where $z_{a, b}(y) := g_{a, b}^{-1}(y) = \mathrm{expit}\left((\text{logit}(y) - a)/b\right)$ for $y\in(0,1 )$, with respect to the dominating measure $\delta_0+\lambda_{(0,1)}+\delta_1$, where \(\lambda_{(0,1)}\) is the Lebesgue measure on \((0,1)\), and $\delta_0$ and $\delta_1$ are Dirac measures on $0$ and $1$, respectively. 

\subsection{Parameter Interpretation}
We refer to \(a\) as an \emph{interior location} parameter and \(b\) as an \emph{interior concentration} parameter. Together with \((\mu,\sigma)\), these transformation parameters determine the shape of the density on the interior \((0,1)\). Proposition~\ref{prop:zoc-tn-interpretation} provides an interpretation of the parameters $a$ and $b$  through the interior distribution on the logit scale. In particular, using the relation
\[
\logit(Y)=a+b\,\logit(Z)~~ \Leftrightarrow ~~\frac{Y}{1-Y} = e^a\left(\frac{Z}{1-Z}\right)^b \quad\text{on }\{0<Y<1\},
\]
we see that, for a fixed $b$, $a<0$ shifts the interior distribution toward $0$, while $a > 0$ pushes it to $1$. Adjusting $b$, meanwhile, controls the central concentration of the interior density: values \(0<b<1\) pull interior observations toward the center $0.5$, while values \(b>1\) push them toward the boundaries. The stochastic ordering result further clarifies that increasing \(a\) moves the entire interior distribution upward in a strong sense, namely by first-order stochastic dominance. %Likewise, the median formula shows that the transformation preserves the ordering structure of the latent interior variable and maps its central location directly through the affine-logit transformation.
% \footnote{This is because $Z < 0.5 \implies \left(\frac{Z}{1-Z}\right) <1 \implies \left(\frac{Z}{1-Z}\right)^b > \left(\frac{Z}{1-Z}\right)$ for $b \in (0, 1)$, and vice versa for $Z > 0.5$.}
% Thus, \(a\) shifts the interior distribution to the right or left without changing its qualitative shape on the logit scale, while \(b\) expands or compresses the spread of the interior distribution around its center.

\begin{proposition}[Interpretation of the transformation parameters on the interior]
\label{prop:zoc-tn-interpretation}
Define $Z$ and $Z^*$ as in \eqref{def_Z}, and let $Z^\circ$ denote an interior draw from $Z$, $Z^\circ \sim Z\mid(0<Z<1)$. Under the ZOC-TN model, define the corresponding interior response by $Y^\circ=g_{a,b}(Z^\circ) = \expit\{a+b\logit(Z^\circ)\}$, and let $
U=\logit(Z^\circ)$, $V=\logit(Y^\circ)$. Then $V=a+bU$ a.s. and the following holds:

\begin{itemize}
    % \item[(i)] \textbf{Location-scale interpretation on the logit scale.}
    % For every integer $m\ge 1$ with $\mathbb E|U|^m<\infty$,
    % $
    % \mathbb{E}[V]
    % =
    % a+b\,\mathbb{E}[U]
    % $ holds.
    % Moreover, for every integer $m\ge 2$ such that $\mathbb E|U|^m<\infty$,
    % $
    % \mathbb E\!\left[(V-\mathbb EV)^m\right]
    % =
    % b^m
    % \mathbb E\!\left[(U-\mathbb EU)^m\right].
    % $
    % In particular,
    % $
    % \operatorname{Var}(V)
    % =
    % b^2\operatorname{Var}(U).
    % $

    \item %\textbf{Monotonicity in the shift parameter $a$.}
    For fixed $b>0$, if $a_2>a_1$, then
    $
    Y^\circ_{a_2,b}
    \;\ge_{\mathrm{st}}\;
    Y^\circ_{a_1,b}.
    $
    % Equivalently,
    % $
    % Y_{a_2,b}\mid(0<Y_{a_2,b}<1)
    % \;\ge_{\mathrm{st}}\;
    % Y_{a_1,b}\mid(0<Y_{a_1,b}<1).
    % $
    That is, increasing \(a\) shifts the interior conditional law to the right in the sense of first-order stochastic dominance.

    % \item \textbf{Median transformation.}
    % If the median $m_U$ of $U$ exists and is unique, then the median of
    % $Y^\circ$, equivalently the conditional median of $Y\mid(0<Y<1)$, is
    % $
    % m_Y=\expit\!\bigl(a+b\,m_U\bigr).
    % $
\end{itemize}
\end{proposition}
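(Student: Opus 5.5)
The plan is to first establish the almost-sure identity $V=a+bU$, and then obtain stochastic dominance by comparing distribution functions pointwise, using the fact that $Z^\circ$ does not depend on $(a,b)$.

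\textbf{Step 1: the identity.} Since $Z^*\sim\mathcal N(\mu,\sigma^2)$ is continuous, $P(0<Z<1)=\Phi((1-\mu)/\sigma)-\Phi(-\mu/\sigma)>0$. Hence the conditional law of $Z^\circ$ is well defined, and it is supported on $(0,1)$ almost surely. On this open interval $\logit$ and $\expit$ are mutual inverses. So $V=\logit(\expit\{a+b\,\logit(Z^\circ)\})=a+bU$ almost surely.

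\textbf{Step 2: reduction to a fixed latent variable.} The key point for dominance is that the law of $Z^\circ$ depends only on $(\mu,\sigma)$ and not on $(a,b)$. We may therefore realize $Y^\circ_{a_1,b}$ and $Y^\circ_{a_2,b}$ on a common probability space from the same $Z^\circ$, giving the coupling
\[
Y^\circ_{a_i,b}=\expit(a_i+bU),\qquad i=1,2.
\]

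\textbf{Step 3: dominance.} Because $\expit$ is strictly increasing and $a_2>a_1$, we have $Y^\circ_{a_2,b}>Y^\circ_{a_1,b}$ pointwise. Pointwise ordering under a coupling implies first-order stochastic dominance. To avoid relying on the coupling characterization, we can also argue directly with distribution functions. For $t\in(0,1)$,
\[
P(Y^\circ_{a,b}\le t)=P\bigl(U\le (\logit t-a)/b\bigr).
\]
Since $b>0$, the threshold $(\logit t-a)/b$ is decreasing in $a$, so this probability is nonincreasing in $a$. For $t\le 0$ the distribution function is $0$, and for $t\ge1$ it is $1$, for every value of $a$. Together these give $F_{a_2}(t)\le F_{a_1}(t)$ for all $t$, which is the definition of $\ge_{\mathrm{st}}$.

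\textbf{Main obstacle.} The only delicate point is making explicit that conditioning on the interior event does not interact with $a$. This holds because the event $\{0<Y<1\}=\{0<Z<1\}$ is determined by $Z$ alone, and the transformation leaves the boundary mass untouched. Once that is noted, the remaining steps are routine monotonicity arguments.
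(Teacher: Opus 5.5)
Your proposal is correct and follows essentially the same route as the paper: the identity $V=a+bU$ comes from $\logit$ and $\expit$ being mutual inverses on $(0,1)$, and the dominance comes from coupling $Y^\circ_{a_1,b}$ and $Y^\circ_{a_2,b}$ through the same $Z^\circ$, using that $\expit$ is strictly increasing. Your extra check via distribution functions, $P(Y^\circ_{a,b}\le t)=P\{U\le(\logit t-a)/b\}$, is a harmless supplement that makes the last implication self-contained.
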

A proof of Proposition~\ref{prop:zoc-tn-interpretation} can be found in Appendix~\ref{app:proof_of_zoctn_interpretation}. Figure~\ref{fig:zoctn_graphical_distributions} in Appendix~\ref{app:density_shapes} illustrates the effects of the four parameters on the ZOC-TN density. We see that increasing $a$ results in a rightward shift of the center of mass in the interior, while increasing $b$ produces a displacement effect away from the center and toward the boundaries. This breadth of qualitative behaviors suggests that the affine-logit transformation provides a flexible low-dimensional mechanism for reshaping the interior density on \((0,1)\). This is the subject of the following subsection.

\subsection{Approximation and Extension to General Interior Distributions}
The affine-logit transformation defined in \eqref{eq:zoctn_affine} can be viewed as a parsimonious member of a much broader class of smooth monotone transformations on \((0,1)\). If one allows for arbitrary increasing functions \(h\), transformations of the form
\[
T_h(z)=\mathrm{expit}\!\bigl(h(\mathrm{logit}\, z)\bigr)
\]
are sufficiently rich to generate, in principle, any strictly positive continuous density on \((0,1)\). The ZOC-TN specification corresponds to the special case in which \(h\) is affine, \(h(t)=a+bt\). A direct extension would be to replace the affine map \(h(t)=a+bt\) by a more flexible monotone transformation, for example using constrained Bernstein polynomials, monotone \(I\)-splines, integrated exponentiated spline bases, or monotone rational-quadratic splines. These parameterizations preserve the increasing nature of \(h\), and hence of \(T_h\), while allowing substantially richer interior distributional shapes.

%A natural extension would be to approximate h, on a suitably chosen compact logit range or after an additional rescaling, by a constrained Bernstein-polynomial expansion, with coefficient restrictions imposed to preserve monotonicity \citep{lorentz1986bernstein, farouki2012bernstein}.

The following Proposition~\ref{prop:zoc-tn-local-density}, which is proved in Appendix~\ref{app:local_density_proof}, shows that the affine-logit family provides a first-order local approximation, not only to the underlying transformation \(T_h\), but also to the induced interior density on compact subsets of \((0,1)\). Thus, while the ZOC-TN family is not globally universal for continuous densities on \((0,1)\), it may still be interpreted as a locally flexible low-dimensional approximation to that broader class.

\begin{proposition}[Local approximation of induced interior densities]
\label{prop:zoc-tn-local-density}
Let \(Z\) be a random variable with continuously differentiable density \(f_Z\) on \((0,1)\), and let
\[
Y_h=T_h(Z):=\mathrm{expit}\!\bigl(h(\mathrm{logit}\, Z)\bigr),
\]
where \(h : \mathbb{R} \to \mathbb{R}\) is twice continuously differentiable with \(h'(t)>0\) for all \(t\in\mathbb{R}\).
Let \(K \subset (0,1)\) be compact, and assume that \(f_Z\) is bounded and Lipschitz on a
neighborhood of \(T_h^{-1}(K) \cup T_{a_0,b_0}^{-1}(K)\), where \(a_0 := h(t_0) - h'(t_0)t_0\), \(b_0 := h'(t_0) > 0\), for some \(t_0 \in \mathbb{R}\),
and
\[
T_{a_0,b_0}(z):=\mathrm{expit}\!\bigl(a_0+b_0\mathrm{logit}\, z\bigr).
\]
Let \(f_h\) and \(f_{a_0,b_0}\) denote the densities of \(Y_h\) and \(Y_{a_0,b_0}:=T_{a_0,b_0}(Z)\), respectively. Then there exists a constant \(C_K < \infty\), depending only on \(K\), \(h\), \(f_Z\), and \(t_0\), such that
\[
\sup_{y\in K}\bigl|f_h(y)-f_{a_0,b_0}(y)\bigr|
\le
C_K
\sup_{t\in \mathrm{logit}(T_h^{-1}(K)\cup T_{a_0,b_0}^{-1}(K))}|t-t_0|.
\]
In particular, the affine-logit ZOC-TN transformation induces a first-order local approximation to the target interior density generated by any smooth monotone logit-scale transformation \(h\).
\end{proposition}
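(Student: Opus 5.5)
We work on the logit scale and write both densities by the change-of-variables formula. For $y\in(0,1)$ put $s=\logit(y)$. Since $h$ is a strictly increasing $C^2$ map, $T_h^{-1}(y)=\expit\bigl(h^{-1}(s)\bigr)$ on the range of $T_h$. Write $\ell(t):=h_0(t):=a_0+b_0t$ for the tangent line of $h$ at $t_0$, so that $T_{a_0,b_0}=T_{\ell}$. Using $\frac{d}{dz}\logit z=1/(z(1-z))$ and $\frac{d}{dy}\logit y=1/(y(1-y))$, we get for $k\in\{h,\ell\}$
\[
f_k(y)=f_Z\bigl(z_k(y)\bigr)\,\frac{z_k(y)\bigl(1-z_k(y)\bigr)}{y(1-y)\,k'\bigl(t_k(y)\bigr)},\qquad t_k(y):=k^{-1}(\logit y),\quad z_k(y):=\expit\bigl(t_k(y)\bigr).
\]
This is the same structure as the interior part of \eqref{eq:zoctn_pdf}. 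The factor $1/(y(1-y))$ is common to both densities and bounded on $K$, so it only contributes a multiplicative constant. It therefore suffices to bound
\[
\bigl|F(t_h)-F(t_\ell)\bigr|
\]
plus a derivative term, where $F(t):=f_Z(\expit t)\,\expit(t)(1-\expit t)$.

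The decomposition is
\begin{align*}
\Bigl|\tfrac{F(t_h)}{h'(t_h)}-\tfrac{F(t_\ell)}{b_0}\Bigr|
\le \frac{|F(t_h)-F(t_\ell)|}{h'(t_h)}+|F(t_\ell)|\,\Bigl|\tfrac{1}{h'(t_h)}-\tfrac{1}{b_0}\Bigr|.
\end{align*}
We control the two pieces as follows.

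\emph{Second term.} Set $J:=\logit\bigl(T_h^{-1}(K)\cup T_{a_0,b_0}^{-1}(K)\bigr)$; it is compact. The function $F$ is bounded there because $f_Z$ is bounded. Next, $h'$ is continuous and positive, so $h'\ge c_J>0$ on the convex hull of $J\cup\{t_0\}$. Since $h\in C^2$, $|h'(t_h)-h'(t_0)|\le \|h''\|_{\infty}\,|t_h-t_0|$, and $b_0=h'(t_0)$. Together these give a bound of the form $C\,|t_h-t_0|$.

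\emph{First term.} Here $F$ is Lipschitz on a neighbourhood of $J$. This holds because $f_Z$ is Lipschitz near $T_h^{-1}(K)\cup T_{a_0,b_0}^{-1}(K)$, $\expit$ is $1/4$-Lipschitz, and $\expit(1-\expit)$ is bounded and Lipschitz. So the first term is at most $C\,|t_h(y)-t_\ell(y)|$. We need a bound $|t_h-t_\ell|\le C(|t_h-t_0|+|t_\ell-t_0|)$. The crude triangle inequality already gives exactly this, and both quantities are bounded by $\sup_{t\in J}|t-t_0|$ since $t_h(y),t_\ell(y)\in J$ for $y\in K$. A sharper argument would start from $h(t_h)=\ell(t_\ell)=s$, giving $b_0(t_\ell-t_h)=h(t_h)-\ell(t_h)=O\bigl((t_h-t_0)^2\bigr)$ by Taylor's theorem. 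The crude bound suffices for the first-order statement as written.

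\emph{Constants and main obstacle.} Collecting terms, all constants depend only on $K$, $h$ (through $\inf h'$ and $\sup|h''|$ on the hull of $J\cup\{t_0\}$), $f_Z$ (through its sup and Lipschitz constant), and $t_0$. This yields $C_K$. The main obstacle is bookkeeping the domains. One must ensure that $t_h(y)$ and $t_\ell(y)$ are well defined for $y\in K$. If $h$ is not onto $\mathbb{R}$, restrict to $K\cap T_h\bigl((0,1)\bigr)$ and take the supremum only where $f_h$ is defined; $\ell$ is always onto $\mathbb{R}$. One must also ensure that the segment between $t_h$ and $t_\ell$ stays inside the neighbourhood where $f_Z$ is Lipschitz. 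If it does not, apply the Lipschitz bound only when $|t_h-t_\ell|$ is small. Otherwise use boundedness: $|F(t_h)-F(t_\ell)|\le 2\sup F\le C'|t_h-t_\ell|$ once $|t_h-t_\ell|$ exceeds the neighbourhood radius $\delta$, with $C'=2\sup F/\delta$.
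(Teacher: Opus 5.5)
Your proof is correct. It is a more elementary variant of the paper's argument rather than a copy of it.

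\textbf{What is shared.} Both proofs start from the change-of-variables formula and use the same add-and-subtract of a cross term. The Lipschitz property of $f_Z$ controls one piece and the Jacobian discrepancy controls the other.

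\textbf{How the paper proceeds.} The paper works on the $z$-scale with the inverse maps. It first invokes Proposition~\ref{prop:zoc-tn-local-approx}, which gives uniform $O(\delta_K^2)$ closeness of $T_h$ and $T_{a_0,b_0}$, where $\delta_K=\sup_{t\in J}|t-t_0|$ in your notation. It then appeals, without further detail, to the inverse function theorem and compactness to get $\sup_K|T_h^{-1}-T_{a_0,b_0}^{-1}|\le C_2\delta_K^2$ and $\sup_K|(T_h^{-1})'-(T_{a_0,b_0}^{-1})'|\le C_3\delta_K$.

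\textbf{How you proceed.} You parametrize by $t_k(y)=k^{-1}(\logit y)$, which makes the Jacobian explicit as $1/h'(t_h)$ versus $1/b_0$. The derivative estimate is then just the mean value theorem for $h'$. You also replace the paper's second-order inverse bound by the triangle inequality $|t_h-t_\ell|\le 2\delta_K$. This loses nothing, because the Jacobian term is only first order anyway.

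\textbf{What each route buys.} Your route bypasses Proposition~\ref{prop:zoc-tn-local-approx} and the inverse-function step entirely. It yields an explicit $C_K$ in terms of $\inf h'$ and $\sup|h''|$ on the hull of $J\cup\{t_0\}$, $\sup F$, the Lipschitz constant of $F$, and $\min_{y\in K}y(1-y)$. The paper's route instead ties the density result to the transformation-level result. It also records the sharper second-order closeness of the inverse maps, which is exactly your side remark $b_0(t_\ell-t_h)=h(t_h)-\ell(t_h)=O((t_h-t_0)^2)$.

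\textbf{Two minor points.}
First, under the usual reading of ``Lipschitz on a neighbourhood'' (an inequality holding for all pairs of points of that set), your segment/$\delta$ fallback is unnecessary. Both $z_h(y)\in T_h^{-1}(K)$ and $z_\ell(y)\in T_{a_0,b_0}^{-1}(K)$ already lie in that set.
Second, like the paper, you tacitly need $K\subset T_h((0,1))$, which is automatic if $h$ is onto $\mathbb{R}$. This is what makes $T_h^{-1}$ defined on all of $K$, and it also guarantees that $J$ is compact.
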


% Proposition~\ref{prop:zoc-tn-local-density} highlights the value of the ZOC-TN model in the case of modeling empirical distributions where the exact form of the underlying generating process is not known \textit{a priori}. In this way the ZOC-TN distribution represents a reasonable ``one-size-fits-all" modeling choice that can be fit to a wide range of fractional response datasets. 

% (in our regression we include an intercept term to get $\hat{\mathbf{x}}_i = (1 \; \mathbf{x}_i)^\top$)
\subsection{Linear Regression ZOC-TN Model}\label{subsec:zoctn_independent_linear}
To use the ZOC-TN likelihood for regression modeling, we assume that there are $N$ independent samples $(Y_i, \mathbf{X}_i)_{i=1}^N$ consisting of labels $Y_i \in [0, 1]$ and covariates $\mathbf{X}_i \in \mathbb{R}^d$. We then assume that
\begin{equation}\label{eq:ind_lm}
    Y_i \sim \text{ZOC-TN}(\mu_i=\mathbf{X}_i^\top\boldsymbol{\beta}, \sigma, a, b)
\end{equation}
where $\boldsymbol{\beta} \in \mathbb{R}^{d}$. The maximum likelihood estimator (MLE) for this model is given by 
\begin{align}
    (\hat{\boldsymbol{\beta}}, \hat{\sigma}, \hat{a}, \hat{b}) %&= \argmax_{(\boldsymbol{\beta}, \hat{\sigma}, \hat{a}, \hat{b})} \mathcal{L}_N(\boldsymbol{\beta} , \hat{\sigma}, \hat{a}, \hat{b}) \nonumber \\
    &= \argmax_{(\boldsymbol{\beta}, {\sigma}, {a}, {b})} \sum_{i=1}^N \ell^{\text{ZOC-TN}}(\boldsymbol{\beta}, {\sigma}, {a}, {b} \mid Y_i, \mathbf{X}_i) \label{eq:ind_ll}
\end{align}
where $\ell^{\text{ZOC-TN}}(\boldsymbol{\beta}, {\sigma}, {a}, {b} \mid Y_i, \mathbf{X}_i)=\log(f^{\mathrm{ZOC\text{-}TN}}(Y_i \mid \mathbf{X}_i^\top\boldsymbol{\beta},\sigma,a, b))$ is the log-likelihood of the $i^{\text{th}}$ observation, and $f^{\mathrm{ZOC\text{-}TN}}(y \mid \mu,\sigma,a, b)$ is given in Equation~\eqref{eq:zoctn_pdf}. The following result holds for the MLE of this model. 

\begin{theorem}[Existence and large-sample behavior of the MLE / quasi-MLE]\label{thm:zoctn_mle_qmle}
Let $\Theta \subset \mathbb{R}^d \times (0,\infty)\times \mathbb{R}\times (0,\infty)$ be compact, and suppose that $\boldsymbol{\theta}=(\boldsymbol{\beta},\sigma,a,b)\mapsto \ell^{\mathrm{ZOC-TN}}(\boldsymbol{\theta} \mid y, \mathbf{x})$ is continuous for every $(y,\mathbf{x})$, with $\sigma$ and $b$ uniformly bounded away from $0$ on $\Theta$. For i.i.d.\ observations $(Y_i,\mathbf{X}_i)_{i=1}^N$, define 
$
\mathcal{L}^{\text{ZOC-TN}}_N(\boldsymbol{\theta})=\sum_{i=1}^N \ell^{\mathrm{ZOC-TN}}(\boldsymbol{\theta} \mid Y_i, \mathbf{X}_i),$
$
\hat{\boldsymbol{\theta}}_N \in \arg\max_{\boldsymbol{\theta}\in\Theta} \mathcal{L}^{\text{ZOC-TN}}_N(\boldsymbol{\theta})
$, $\mathbf{A} =-\mathbb{E}\!\left[\nabla_{\boldsymbol{\theta}}^2 \ell^{\mathrm{ZOC-TN}}(\boldsymbol{\theta}^\star\mid Y, \mathbf{X})\right]$, and $\mathbf{B}=\mathbb{E}\!\left[\nabla_{\boldsymbol{\theta}} \ell^{\mathrm{ZOC-TN}}(\boldsymbol{\theta}^\star \mid Y, \mathbf{X})\nabla_{\boldsymbol{\theta}} \ell^{\mathrm{ZOC-TN}}(\boldsymbol{\theta}^\star \mid Y, \mathbf{X})^\top\right]$.
Assume that:
\begin{enumerate}
    \item $\mathbb{E}[\sup_{\boldsymbol{\theta}\in\Theta} |\ell^{\mathrm{ZOC-TN}}(\boldsymbol{\theta}\mid Y, \mathbf{X})|] < \infty$;
    \item The population criterion $ M(\boldsymbol{\theta})=\mathbb{E}[\ell^{\mathrm{ZOC-TN}}(\boldsymbol{\theta} \mid Y, \mathbf{X})]$ has a unique maximizer $\boldsymbol{\theta}^\star \in \Theta$, $\boldsymbol{\theta}^\star$ lies in the interior of $\Theta$, and $\ell^{\mathrm{ZOC-TN}}(\boldsymbol{\theta} \mid y, \mathbf{x})$ is twice continuously differentiable in a neighborhood of $\boldsymbol{\theta}^\star$ for almost every $(y,\mathbf{x})$;
    \item The score and Hessian admit integrable envelopes in a neighborhood of $\boldsymbol{\theta}^\star$;
    \item The matrix $\mathbf{A}$ exists and is nonsingular, and $\mathbf{B}$  exists and is finite.
\end{enumerate}

Then:
\begin{enumerate}
    \item[(i)] For every $N$, a maximizer $\hat{\boldsymbol{\theta}}_N$ exists.
    \item[(ii)] $\hat{\boldsymbol{\theta}}_N \xrightarrow{P} \boldsymbol{\theta}^\star$.
    \item[(iii)] $ \sqrt{N}\,(\hat{\boldsymbol{\theta}}_N-\boldsymbol{\theta}^\star) \xrightarrow{d} \mathcal{N}\!\left(0,\; \mathbf{A}^{-1}\mathbf{B}\mathbf{A}^{-1}\right)$.
\end{enumerate}
If the ZOC-TN model is correctly specified and identifiable, then $\boldsymbol{\theta}^\star=\boldsymbol{\theta}_0$ is the true parameter, and, by standard results, $\mathbf{A}=\mathbf{B}=\mathcal{I}(\boldsymbol{\theta}_0)$, where $\mathcal{I}(\boldsymbol{\theta}_0)$ is the Fisher information matrix at $\boldsymbol{\theta}_0$, so that
\[
\sqrt{N}\,(\hat{\boldsymbol{\theta}}_N-\boldsymbol{\theta}_0)
\xrightarrow{d}
\mathcal{N}\!\left(\boldsymbol{0},\; \mathcal{I}(\boldsymbol{\theta}_0)^{-1}\right).
\]
If the model is misspecified, then $\boldsymbol{\theta}^\star$ is the unique pseudo-true parameter, i.e.\ the unique maximizer of $M(\boldsymbol{\theta})=\mathbb{E}[\ell^{\mathrm{ZOC-TN}}(\boldsymbol{\theta} \mid Y, \mathbf{X})]$, i.e., the Kullback-Leibler projection of the true conditional law onto the ZOC-TN family.
\end{theorem}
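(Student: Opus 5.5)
The argument is the standard M-estimation route (Wald consistency, then a Taylor expansion of the score), specialized to the ZOC-TN log-likelihood. We treat the three claims in order.

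\emph{Existence (i).} For each fixed sample, $\boldsymbol{\theta}\mapsto \mathcal{L}^{\text{ZOC-TN}}_N(\boldsymbol{\theta})$ is a finite sum of functions that are continuous in $\boldsymbol{\theta}$ by assumption. Continuity is unproblematic here for two reasons. First, $\sigma$ and $b$ are bounded away from $0$ on $\Theta$, so every term of \eqref{eq:zoctn_pdf} is a composition of continuous maps. Second, the boundary probabilities $\Phi(-\mu/\sigma)$ and $1-\Phi((1-\mu)/\sigma)$ are strictly positive, so their logarithms are finite. Since $\Theta$ is compact, the Weierstrass extreme value theorem gives a maximizer. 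If one wants $\hat{\boldsymbol{\theta}}_N$ to be a random variable, a measurable selection can be taken via the Kuratowski--Ryll-Nardzewski theorem, or the result can be stated for any maximizer under outer probability.

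\emph{Consistency (ii).} We use Wald's argument in the form of a uniform law of large numbers.
\begin{enumerate}
\item Continuity of $\ell^{\mathrm{ZOC-TN}}(\cdot\mid y,\mathbf{x})$, compactness of $\Theta$, and the envelope condition 1 together give, via the ULLN (e.g.\ Newey--McFadden, Lemma 2.4), that $\sup_{\Theta}|N^{-1}\mathcal{L}^{\text{ZOC-TN}}_N-M|\to 0$ in probability, and that $M$ is continuous.
\item Continuity of $M$, compactness of $\Theta$, and the unique maximizer $\boldsymbol{\theta}^\star$ from condition 2 make the maximum well separated: for every $\varepsilon>0$, $\sup_{\|\boldsymbol{\theta}-\boldsymbol{\theta}^\star\|\ge\varepsilon}M(\boldsymbol{\theta})<M(\boldsymbol{\theta}^\star)$.
\item The argmax consistency theorem (van der Vaart, Thm.~5.7) then yields $\hat{\boldsymbol{\theta}}_N\to\boldsymbol{\theta}^\star$ in probability.
\end{enumerate}

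\emph{Asymptotic normality (iii).} Because $\boldsymbol{\theta}^\star$ is interior and $\hat{\boldsymbol{\theta}}_N$ is consistent, with probability tending to one $\hat{\boldsymbol{\theta}}_N$ is an interior point and satisfies the score equation $\nabla\mathcal{L}_N(\hat{\boldsymbol{\theta}}_N)=0$. A mean-value expansion of each score component around $\boldsymbol{\theta}^\star$ gives
\[
0=N^{-1/2}\nabla\mathcal{L}_N(\boldsymbol{\theta}^\star)+\Bigl[N^{-1}\nabla^2\mathcal{L}_N(\bar{\boldsymbol{\theta}}_N)\Bigr]\sqrt N(\hat{\boldsymbol{\theta}}_N-\boldsymbol{\theta}^\star),
\]
where $\bar{\boldsymbol{\theta}}_N$ lies between $\hat{\boldsymbol{\theta}}_N$ and $\boldsymbol{\theta}^\star$ (row by row). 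The two bracketed pieces are handled as follows.
\begin{itemize}
\item \textbf{Score term.} The score envelope lets us differentiate under the integral. Since $\boldsymbol{\theta}^\star$ is an interior maximizer of $M$, this gives $\mathbb{E}\nabla\ell(\boldsymbol{\theta}^\star)=\nabla M(\boldsymbol{\theta}^\star)=0$. Finiteness of $\mathbf{B}$ and the multivariate CLT then give $N^{-1/2}\nabla\mathcal{L}_N(\boldsymbol{\theta}^\star)\to_d\mathcal{N}(0,\mathbf{B})$.
\item \textbf{Hessian term.} The Hessian envelope gives a ULLN for $N^{-1}\nabla^2\mathcal{L}_N$ on a neighborhood of $\boldsymbol{\theta}^\star$. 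The limit $\boldsymbol{\theta}\mapsto\mathbb{E}\nabla^2\ell(\boldsymbol{\theta})$ is continuous there by dominated convergence. Combined with $\bar{\boldsymbol{\theta}}_N\to\boldsymbol{\theta}^\star$, this yields $N^{-1}\nabla^2\mathcal{L}_N(\bar{\boldsymbol{\theta}}_N)\to_p-\mathbf{A}$.
\end{itemize}
Because $\mathbf{A}$ is nonsingular, the sample Hessian is invertible with probability tending to one. Slutsky's lemma then gives the sandwich limit $\mathcal{N}(0,\mathbf{A}^{-1}\mathbf{B}\mathbf{A}^{-1})$.

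\emph{Correctly specified and misspecified cases.} Under correct specification, $\boldsymbol{\theta}^\star=\boldsymbol{\theta}_0$. To see this, note that $M(\boldsymbol{\theta}_0)-M(\boldsymbol{\theta})$ is an expected conditional Kullback--Leibler divergence, which is nonnegative and, by identifiability, vanishes only at $\boldsymbol{\theta}_0$. The information identity $\mathbf{A}=\mathbf{B}$ follows by differentiating $\int f(y\mid\boldsymbol{\theta})\,d(\delta_0+\lambda+\delta_1)=1$ twice under the integral, which the envelopes justify. Under misspecification, the same KL identity shows that $\boldsymbol{\theta}^\star$ minimizes the KL divergence from the true conditional law to the ZOC-TN family.

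\emph{Main obstacle.} The only step requiring care is the handling of the random intermediate point $\bar{\boldsymbol{\theta}}_N$ in the Hessian term. A mean-value point is not a single vector for a vector-valued function, and the Hessian must converge uniformly rather than only pointwise. We would handle this by applying the mean value theorem row by row and using the local uniform convergence of the sample Hessian obtained from the Hessian envelope, as in Newey--McFadden, Theorem 3.1.
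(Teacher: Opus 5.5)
Your proposal is correct and follows essentially the same route as the paper's proof: Weierstrass for existence, a uniform LLN plus the unique, well-separated maximizer for consistency, a row-wise mean-value expansion of the score (CLT for the score, local uniform LLN for the Hessian) for the sandwich limit, and the conditional Kullback--Leibler argument with identifiability for $\boldsymbol{\theta}^\star=\boldsymbol{\theta}_0$; the paper's appendix differs mainly in that it also derives conditions 1--4 for the ZOC-TN likelihood from primitive moment assumptions, which the theorem as stated simply takes as hypotheses. One small caveat on your justification of $\mathbf{A}=\mathbf{B}$: the envelopes in condition 3 are integrable under the sampling law, not under $\nu=\delta_0+\lambda_{(0,1)}+\delta_1$, so on their own they do not license differentiating $\int p_{\boldsymbol{\theta}}\,d\nu=1$ twice under the integral; you need $\nu$-integrable local envelopes for $\nabla p_{\boldsymbol{\theta}}$ and $\nabla^2 p_{\boldsymbol{\theta}}$, which do hold for ZOC-TN and are what the paper's ``by standard results'' implicitly invokes.
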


Theorem~\ref{thm:zoctn_mle_qmle} shows that, under standard compactness, identification, and smoothness conditions, the ZOC-TN maximum likelihood estimator has the usual M-estimation properties. In particular, the estimator is consistent for the unique population maximizer and is asymptotically normal with sandwich covariance matrix. Under correct specification, this population maximizer coincides with the true parameter, and the sandwich covariance reduces to the inverse Fisher information matrix. Under misspecification, the estimator instead converges to the unique pseudo-true parameter, corresponding to the Kullback--Leibler projection of the true conditional law onto the ZOC-TN family. A proof is provided in Appendix~\ref{proof:thm_zoctn_mle_qmle}. Additionally, Proposition~\ref{prop:zoc-tn-information} in Appendix \ref{proof:zoc-tn-information} derives the Fisher information, including the contribution of individual observation types (boundary/interior).

\subsection{Comparing the ZOC-TN Model to Existing Models}\label{subsec:zoctn_vs_benchmarks}
Before moving on to empirical experiments, we present a qualitative comparison of the proposed ZOC-TN regression model to a series of established fractional response regression models including the Bernoulli quasi-likelihood \citep{Papke1996}, two-limit censored normal  \citep{Rosett1975}, zero-and-one inflated beta \citep{Ospina2010}, zero-one censored shifted gamma \citep{Sigrist2011}, and zero-one censored transformed beta \citep{Kosmidis2025} models. We refer to these likelihoods and the corresponding linear regression models with the following abbreviations: Quasi-B, ZOC-N, BE-INF, ZOC-SG, ZOC-TB. Table~\ref{tab:simstudy_models} in Appendix~\ref{app:simstudy_notation} contains the density functions associated with each of these models alongside the choice of regression parameter.

Compared to these existing fractional response models, the ZOC-TN model offers several structural advantages in terms of flexibility. We now highlight some of these key differences. For fixed \(\mathbf{x}\) and \(\mu=\mathbf{x}^\top\boldsymbol{\beta}\), the following properties hold:

% \begin{proposition}[Where ZOC-TN strictly improves on standard benchmarks]
% \label{prop:zoc-tn-benchmark-dominance}

% Let \(\mathcal P_{\mathrm{TN}}(x)\), \(\mathcal P_{\mathrm{N}}(x)\), \(\mathcal P_{\mathrm{SG}}(x)\), and \(\mathcal P_{\mathrm{TB}}(x)\) denote the conditional model classes generated, respectively, by the ZOC-TN, ZOC-N (Tobit), censored shifted gamma, and censored transformed beta likelihoods at covariate value \(x\).

\begin{enumerate}
    \item[1)] \textbf{ZOC-TN vs. Quasi-B:} 
    The Bernoulli quasi-likelihood model specifies only a conditional mean, $\mathbb{E}[Y_i \mid \mathbf{x}_i] = \expit(\mathbf{x}_i^\top \boldsymbol{\beta})$, rather than a full predictive distribution on \([0,1]\). Moreover, due to the restricted range of the logit-link, $\expit:\mathbb{R}\to(0, 1)$, it is unable to predict boundary values. The ZOC-TN likelihood has full support over $[0, 1]$, with positive probability at $0$ and $1$, and it models the full distribution.

    \item[2)] \textbf{ZOC-TN vs. ZOC-N:}  
    The ZOC-TN model decouples the boundary fit from the interior shape. In particular, under the ZOC-TN model, the boundary masses depend only on \((\mu,\sigma)\), while the transformation parameters \((a,b)\) affect only the interior law. Thus, \((a,b)\) allow for reshaping of the interior density without altering the fitted masses at \(0\) and \(1\). This is impossible in the Tobit/ZOC-N model, which has no additional interior-shape parameters once \((\mu,\sigma)\) are fixed.

    \item[3)] \textbf{ZOC-TN vs. ZOC-SG, ZOC-TB, BE-INF}: Unlike the ZOC-SG, ZOC-TB, and BE-INF distributions which have at most one interior stationary point, ZOC-TN can have multiple ones. This provides one source of additional flexibility for the ZOC-TN regression model. This claim is formally proved in Appendix~\ref{app:zoctn_vs_benchmark_proof}.
\end{enumerate}
% \end{proposition}

\newcolumntype{D}[1]{>{\centering\arraybackslash}p{#1}}
\newcolumntype{Y}{>{\raggedright\arraybackslash}X}

\newcommand{\typentry}[3]{%
\begin{minipage}[c]{0.35\linewidth}
    \centering
    \includegraphics[width=\linewidth]{#1}
\end{minipage}%
\hfill
\begin{minipage}[c]{0.6\linewidth}
    \raggedright
    \small
    \textit{#2}: #3
\end{minipage}%
}

\begin{table}[ht!]
\centering
\caption{Qualitative comparison of models for different fractional response distributions. Yellow bars represent atoms at $0$, $1$, shaded purple curves show interior density.}
\label{tab:qualitative_fractional_response}
\scriptsize
\begin{tabularx}{\textwidth}{|Y|D{0.9cm}|D{0.9cm}|D{0.9cm}|D{0.9cm}|D{0.9cm}|}
\toprule
\textbf{Type (shape)}
& \textbf{ZOC-N}
& \textbf{BE-INF}
& \textbf{ZOC-SG}
& \textbf{ZOC-TB}
& \textbf{ZOC-TN} \\
\midrule\midrule
\typentry
{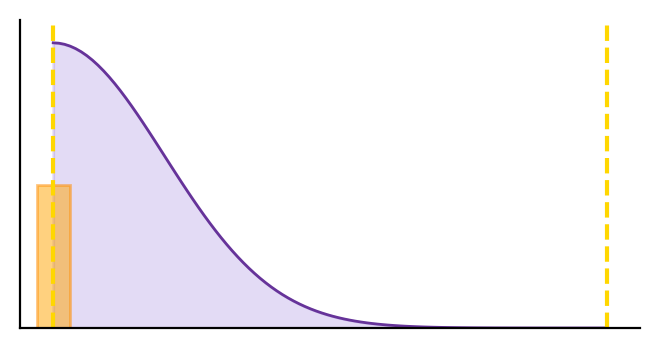}
{Lopsided}
{Nonzero and vanishing point-masses on opposing boundaries, nonzero density over interior.}
& \cmark& \cmark& \cmark& \cmark& \cmark \\\midrule
\typentry
{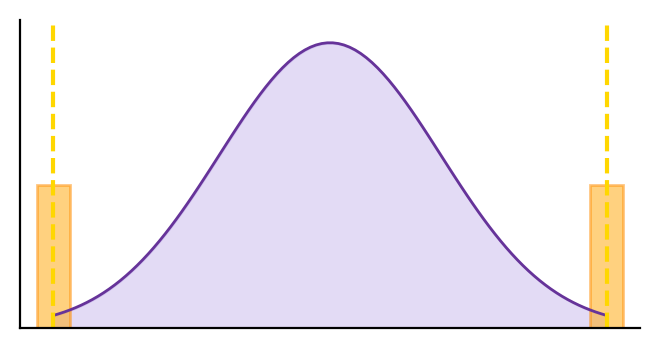}
{Trident}
{Point-masses on boundaries, bell-shaped density over interior.}
& \cmark& \cmark& \cmark& \cmark& \cmark \\\midrule
\typentry
{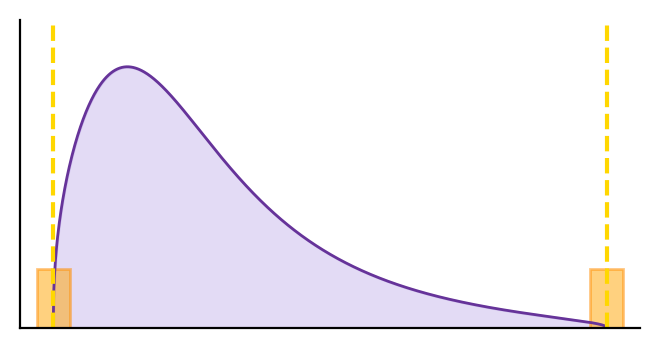}
{Shifted Trident}
{Point-masses on boundaries, off-centre single peak over interior.}
& \cmark& \cmark& \cmark& \cmark& \cmark \\\midrule
% \typentry
% {Figures/Simulation_Study/example_slope.png}
% {Slope}
% {Point-masses on boundaries, non-zero monotone sloping density over interior.}
% & \cmark& \cmark& \cmark& \cmark& \cmark \\ \midrule
\typentry
{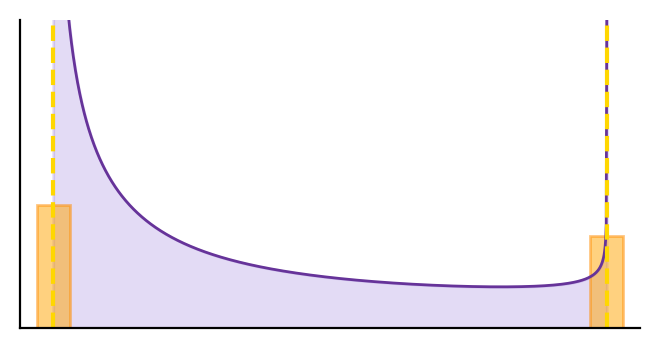}
{Asymmetric U}
{Point-masses on boundaries, non-symmetric U-shaped density over interior.}
& \xmark& \cmark& \xmark& \cmark& \cmark \\\midrule
\typentry
{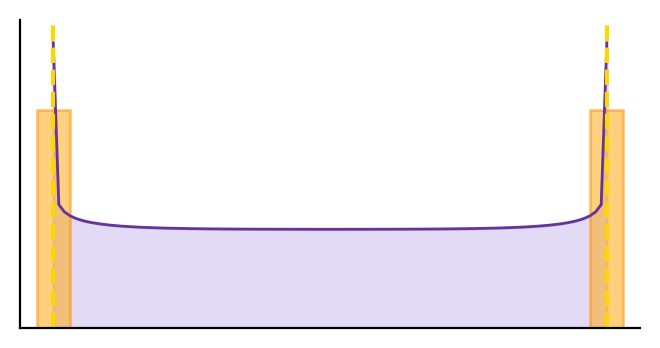}
{U}
{Point-masses on boundaries, U-shaped density over interior.}
& \xmark& \cmark& \xmark& \cmark& \cmark \\\midrule
\typentry
{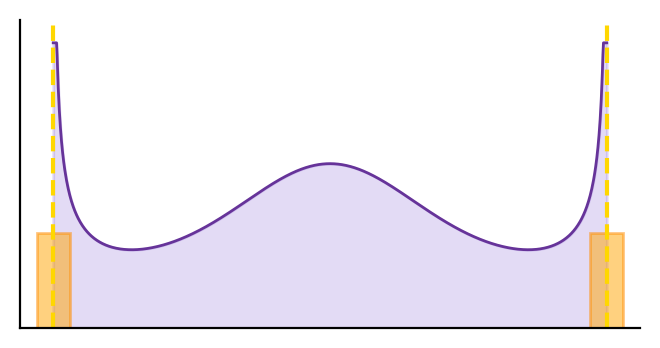}
{W}
{Point-masses on boundaries, ``trimodal'' nonzero density over interior.}
& \xmark& \xmark& \xmark& \xmark& \cmark \\\midrule
\bottomrule
\end{tabularx}
\end{table}

Table~\ref{tab:qualitative_fractional_response} visualizes some of the properties mentioned above by showcasing a selection of distribution shapes for fractional data. These categories are not exhaustive but are intended to represent some of the qualitative distribution types each likelihood is able to reproduce. Of all the models considered, only the ZOC-TN distribution is able to capture all distribution shapes shown in the table.

\section{Simulation Study}\label{sec:sim_study}
In this section, we conduct a simulation study to analyze how the ZOC-TN model compares to existing methods. The fractional response models we choose to examine in this section are the same as those considered in the last section (see Table~\ref{tab:simstudy_models} in Appendix \ref{app:simstudy_notation}). The models are compared across three censored and two mixture likelihoods. Table~\ref{tab:simstudy_dgps} and Appendix~\ref{app:simstudy_mds} contain detailed descriptions of the five data generating processes (DGPs), and Figure~\ref{fig:simulated_empirical_distribution} in Appendix~\ref{app:simstudy_mds} illustrates the five DGPs. The DGPs are chosen to provide targeted stress tests under correct specification, distributional misspecification, boundary-mass misspecification, and multimodal interior behavior.

We simulate a total of $100$ datasets for each DGP, comprising $1,000$ training observations and $1,000$ test observations. We compare the models' goodness-of-fit using the Bayesian information criterion (BIC) on the training data. Moreover, we assess the out-of-sample prediction accuracy on the test data using the mean square error (MSE), the negative log-likelihood (log score), the continuous ranked probability score (CRPS), probability integral transform (PIT) reliability diagrams, and absolute errors of boundary probabilities $\mathrm{AE}(p_0)$ and $\mathrm{AE}(p_1)$. For the test MSE, the mean of the predictive distribution is used as a point prediction. Both the log score and the CRPS measure the accuracy of probabilistic predictions, while PIT diagrams and $\mathrm{AE}(p_0)$ and $\mathrm{AE}(p_1)$ measure the calibration. The latter two measure the calibration of the models at the boundaries. They are defined as $\mathrm{AE}(p) = \sum_{i=1}^{1000} |\hat{p}(\hat{\mathbf{x}}_i) - p(\hat{\mathbf{x}}_i)|$. In addition, we measure the wall-clock time in seconds for estimating the models. The simulation study was conducted in Python using the \texttt{JAX} library to facilitate automatic differentiation of the log-likelihood functions; minimization was done using SciPy's \texttt{L-BFGS-B} implementation. All runtimes reported in this paper are based on wall clock times on an Apple M1 Pro MacBook with 16 GB of random-access memory. The code for reproducing the results presented in this paper is available at \if1\blind{ \url{https://github.com/EnCue/zoctn_paper}}\else BLIND PLACEHOLDER \fi.
% \citep{virtanen2020scipy}
% \citep{jax2018github} 

\subsection{Simulation Study Results}
Table~\ref{tab:simstudy_summary_all} in Appendix \ref{app:simstudy_vanillapit} reports average performance measures and standard errors for each model across the $100$ trials. The Bernoulli QMLE achieves the lowest MSE across all datasets, which is not surprising given its focus solely on the conditional mean. Among the models that produce a full predictive distribution, the ZOC-TN model delivers the best or second-best performance across all data regimes with respect to the BIC, log score, and CRPS. In the cases where it is second-best, the ZOC-TN model offers the best results among all misspecified models, except under the right-skewed mixture distribution, where the BE-INF model achieves a lower BIC and log score. The latter is due to the fact that the beta interior of the BE-INF model can be made to match the single-component beta mixture of the $\text{MD}_{\text{R}}$ DGP. However, the lack of accuracy of the BE-INF model at the boundaries is reflected in the average $\mathrm{AE}(p_0)$ and $\mathrm{AE}(p_1)$ scores, for which the ZOC-TN model achieves lower values. We also find that the censored beta and gamma models have substantially longer runtimes than the ZOC-TN model. Specifically, the ZOC-TN model provides an average $8.6$x speed-up relative to the censored beta model, and an even larger speed-up compared to the censored gamma model. This is due to the need for evaluating the regularized incomplete beta and lower incomplete gamma functions to compute boundary probability masses in the ZOC-TB and ZOC-SG models. % Overall, the Bernoulli QMLE converges fastest across all DGPs, followed by the two-limit Tobit model, while the ZOC-TN model delivers runtimes on par with the inflated beta model.
Moreover, the probability integral transform (PIT) reliability residual diagrams in Figure~\ref{fig:simstudy_pitresidual} in Appendix \ref{app:simstudy_vanillapit} similarly show robust calibration of the ZOC-TN model across all DGPs. These diagrams plot the difference between empirical CDF of the pooled test-sets and their predicted percentiles under each model. In all cases the ZOC-TN model produces a curve close to the horizontal zero line, unlike the BE-INF model, which struggles on the W-Shape MD, and two-limit Tobit model, which falters on the ZOC-TN and right-skewed DGPs.

In summary, the main practical implication from this simulation study is that the preferred model depends on the forecasting target. The Bernoulli quasi-likelihood is attractive when the interest is restricted to conditional-mean predictions, as reflected by its low MSE and short runtime. For distributional forecasting, however, the ZOC-TN model gives the most stable overall performance. It is best or near-best among all considered scenarios, and it is much less costly to estimate than the censored beta and shifted-gamma alternatives. This makes the ZOC-TN model a useful default candidate when boundary probabilities and predictive distributions are of interest but the shape of the bounded response distribution is not known a priori.

\section{Extensions: Nonlinearities and Dependence}\label{sec:zoctn_regression}
In the following, we show how the independent ZOC-TN linear regression model can be extended (i) to account for nonlinearities and interactions in a machine learning framework and (ii) to explicitly model unaccounted spatial and spatio-temporal dependence using Gaussian processes. 

\subsection{Data-driven Nonlinearities and Interactions with Tree-boosting}\label{subsec:zoctn_independent_tb}

We first extend the linear regression model to account for nonlinearities and interactions through the use of gradient tree-boosting \citep{friedman2001greedy}, which is a widely used machine learning technique that achieves state-of-the-art prediction accuracy on tabular data \citep{januschowski2022forecasting}. Tree boosting can learn nonlinearities, discontinuities, and complex interactions in a data-driven manner. It is also invariant to monotone transformations of predictor variables, can accommodate missing predictor values without explicit imputation, and is insensitive to multicollinearity and outliers among predictors. Specifically, we assume
\begin{equation}\label{eq:ind_tbm}
    Y_i \sim \text{ZOC-TN}(\mu_i=F(\mathbf{X}_i), \sigma, a, b),
\end{equation}
where \(F \in \mathcal{H}\), and \(\mathcal{H}\) is the linear span of a set of base learners. In our case, the base learners consist of regression trees \(f:\mathbb{R}^d \to \mathbb{R}\). The function \(F\) is estimated by minimizing the empirical negative log-likelihood
\[
    \mathcal{R}(F,\sigma,a,b)=-\sum_{i=1}^N  \ell^{\mathrm{ZOC-TN}}(\mu_i=F(\mathbf{X}_i),\sigma,a,b\mid Y_i).
\]
Starting from an initial constant prediction \(F_0\), gradient boosting constructs an additive expansion
\[
    F_M(\mathbf{x})
    =
    F_0(\mathbf{x})
    +
    \sum_{m=1}^M \nu f_m(\mathbf{x}),
\]
where \(f_m\) is a regression tree fitted at iteration \(m\), and \(\nu\in(0,1]\) is a learning rate. At each iteration, the tree is fitted to the current likelihood gradients with respect to the location predictor \(\mu_i=F(\mathbf{X}_i)\). In a second-order implementation \citep{sigrist2021gradient}, both first and second derivatives of the ZOC-TN log-likelihood with respect to \(\mu_i\) are used to form a local quadratic approximation to the objective. The required derivatives are given in Appendix~\ref{app:zoc_tn_boosting_gradients}.

\subsection{Modeling Dependence with Gaussian Process Models}\label{subsec:zoctn_gpms}
Next, we relax the conditional independence assumptions in both linear regression and tree-boosted ZOC-TN models described so far. We do this by adding a latent frailty in the form of a zero-mean Gaussian process, to model spatial and spatio-temporal variability which is not accounted for by observable predictor variables. Specifically, we replace Equations \eqref{eq:ind_lm} and \eqref{eq:ind_tbm} with
\begin{equation}\label{eq:gpm}
    Y_i \sim \text{ZOC-TN}(\mu_i=F(\mathbf{X}_i) + \mathcal{G}(\mathbf{s}_i), \sigma, a, b)
\end{equation}
for a linear or tree-boosted regression function $F:\mathbb{R}^d \to \mathbb{R}$, and Gaussian process $\mathcal{G}(\mathbf{s}_i)$ depending on input locations $\mathbf{s}_i\in \mathbb{R}^p$. In our application below, we consider spatial and spatio-temporal Gaussian processes, i.e., $\mathbf{s}_i$ consists of spatial coordinates and time. But, in general, $\mathbf{s}_i$ could also consist of other covariates. The zero-mean latent Gaussian process is specified by a parametric covariance function $\mathrm{Cov}(\mathcal{G}(\mathbf{s}_i), \mathcal{G}(\mathbf{s}_j)) = c_{\boldsymbol{\gamma}}(\mathbf{s}_i, \mathbf{s}_j)$, where $\boldsymbol{\gamma} \in \mathbb{R}^{n_{\mathrm{cov}}}$ are the GP's hyperparameters. 

For such Gaussian process models, the joint marginal likelihood no longer factorizes as a product of individual observations, as in Equation~\eqref{eq:ind_ll} and is given by
\begin{align} \label{eq:gp_likelihood}
    \mathcal{L}(F, \boldsymbol{\gamma}, \boldsymbol{\alpha}) 
    &= \int \left(\prod_{i=1}^N \mathcal{L}^{\text{ZOC-TN}}(F(\mathbf{X}_i) + \mathcal{G}(\mathbf{s}_i), \sigma, a, b)\right) p(\boldsymbol{\mathcal{G}} | \boldsymbol{\gamma}) d\boldsymbol{\mathcal{G}}
\end{align}
where $\boldsymbol{\mathcal{G}} = (\mathcal{G}(\mathbf{s}_1), ..., \mathcal{G}(\mathbf{s}_N))^\top$, $\boldsymbol{\alpha} = (\sigma, a, b)$ are the remaining ZOC-TN parameters, $p(\cdot \mid \boldsymbol{\gamma})$ is the density of the GP, and $\mathcal{L}^{\text{ZOC-TN}}(\cdot)$ is the ZOC-TN likelihood. Implicit in \eqref{eq:gp_likelihood} is the dependence of our chosen regression function $F$ on a set of parameters that must also be learned from the data. For the two types of regression functions used in this paper, linear models depend simply on the set of linear coefficients, $\boldsymbol{\beta}$, while in the case of tree-boosting models, parameters include the structure and leaf weights of each tree in the ensemble. 

We train these models in the \texttt{GPBoost} Python package which handles estimating the set of all optimal parameters, $(\hat{F}, \hat{\boldsymbol{\gamma}}, \hat{\boldsymbol{\alpha}})$, simultaneously by maximizing the Laplace-approximated marginal likelihood $\mathcal{L}^{\text{ZOC-TN}}(\cdot)$. In the case of boosting models, training is done via a type of functional gradient descent referred to as GPBoost algorithm, which iteratively alternates between optimizing the regression function $F$, and the latent GP covariance parameters $\boldsymbol{\gamma}$; see  \citep{Sigrist2022GPBoost} for more details. In addition to the Laplace approximation, we approximate the Gaussian process $\boldsymbol{\mathcal{G}}\in \mathbb{R}^N$ using a Vecchia approximation \citep{Vecchia1988}. This approximation works by simplifying the dependence structure of the coupled random effects to interactions only between a local neighborhood of each latent variable. That is,
% ?: j \ne i ?
\begin{align*}
    p(\boldsymbol{\mathcal{G}} | \boldsymbol{\gamma}) &= \prod_{i=1}^N p(\mathcal{G}_i | (\mathcal{G}_1, ..., \mathcal{G}_{i-1}), \boldsymbol{\gamma}) \approx \prod_{i=1}^N p(\mathcal{G}_i | (\mathcal{G}_j)_{j \in N_{\tilde{m}}(\mathcal{G}_i)}, \boldsymbol{\gamma})
\end{align*}
where $N_{\tilde{m}}(\mathcal{G}_i)$ is the collection of the nearest ${\tilde{m}}$ neighbors of $\mathcal{G}_i$. This produces a sparse representation of the latent GP's precision matrix, which incurs a computational cost of order $\mathcal{O}(N\tilde{m}^3)$. In our application, we set ${\tilde{m}}=20$. For a detailed discussion on the Vecchia-Laplace approximations, we refer to \citet{Kundig2024}.

\section{Application to Mortgage Loss Given Default Modeling}\label{lgd_data}
In the following, we consider the task of modeling the loss given default (LGD) of U.S. 30-year fixed-rate mortgages on single-family homes as issued by the Federal Home Loan Mortgage Corporation, also known as (and referred to hereinafter as) Freddie Mac. As a government-sponsored enterprise (GSE), Freddie Mac maintains the publicly available Single Family Loan-Level Dataset (SFLLD) comprising mortgages purchased or guaranteed by the GSE from January 1, 1999, through June 30, 2025. In addition to providing information on the origination of the approximately 48.3 million mortgages, Freddie Mac publishes a table of monthly performance logs for each contract over its lifespan. The loss given default is a natural candidate for fractional response models. When the ZOC-TN model, or another censored model, is applied to modeling LGD data, we can interpret the uncensored normal variable $Z^*$ as a \textit{loss potential} attached to a loan. In this way, although the LGD is between $[0, 1]$, the loss potential can exceed these bounds to represent the financial outlook of the loan at the time of default. 

 %A similar interpretation is common in meteorological forecasts of precipitation accumulation, wherein rainfall is modeled through a censored latent random variable even as negative rainfall is physically impossible \citep{Scheuerer2015}.

\subsection{Default and Loss Given Default Definitions Used}\label{subsec:lgd_def}
Because Freddie Mac provides no explicit indication whether a loan defaulted or on the LGD incurred by a defaulted mortgage, we use the following definitions. First, we define a mortgage to be in default only once it has been delinquent for 6 months, or at the earliest date available after this time. This default definition is consistent with U.S. Federal Reserve’s Code of Federal Regulations and the Basel framework. %This 180-day delinquency cut-off is consistent with the Basel Framework, which provides national discretion in the case of default of retail or public sector entity obligations. Hence,  %Although a simplification of the official default monitoring process, which relies on the specific circumstances of a given mortgage, at the discretion of ``prudent business judgment" \citep[§~9202.1(a)]{freddiemac9202.1}, this provides us with a reasonable operational definition for default, subject to the following consideration for calculating LGD.
% \citep{FedDefault} which states that “A retail exposure of a Board-regulated institution is in default if: [the] exposure is 180 days past due, in the case of a residential mortgage exposure”
 % \citep{BaselFrameworkCRE3668, BCBSNationalDiscretions}
The LGD is defined as the ratio of the net loss incurred through the foreclosure process (\texttt{actual\_loss}) and the unpaid balance left on the mortgage at the time of default (\texttt{current\_upb}). If recoveries exceed the unpaid principal at default, or additional losses are incurred and not offset (i.e., measured LGD is not in [0, 1]), we censor these observations to the nearest boundary. I.e., the LGD is calculated as
\begin{equation}\label{eq:lgd_def}
    \text{LGD} = \max\left(\min\left( \frac{\texttt{actual\_loss}}{\texttt{upb\_at\_default}}, 1\right), 0\right).
\end{equation}
In addition, Freddie Mac only reports the actual loss on a subset of mortgages, specifically those terminated with a \textit{zero-balance code} (ZBC) relating to one of the following events: third-party sale, short sale or charge-off, REO disposition, or whole loan sale. Hence, we further restrict our dataset to defaulted mortgages with one of these four termination events. It should be noted that although \texttt{actual\_loss} is measured only after the foreclosure process has been completed – often multiple quarters after the mortgage defaults – we do not apply explicit discounting in the calculation of \eqref{eq:lgd_def}. This is because Freddie Mac includes a term for delinquent accrued interest in the calculation of actual loss that we treat as a sufficient offsetting term. %This is broadly consistent with the literature \citep{Leow2012, An2021}.

\subsection{Dataset Description}\label{subsec:dataset_desc}
The dataset we use contains a total of $N_{\text{total}} = 366,859$ defaulted mortgages, which have originated between January 1999 and December 2022 across the mainland United States and defaulted between January 2000 and December 2022. We exclude from our dataset mortgages defaulting in the years 2023, 2024, and 2025 due to delays in the foreclosure process. This is because the \texttt{actual\_loss} variable that is needed for our calculation of LGD only becomes available once loan termination is finalized. The median liquidation time across the dataset, measured as the number of months between default and final loan termination, was a delay of 15 months (empirical distribution shown in Appendix~\ref{app:data_exploration}, Figure~\ref{fig:lgd_liquidationdist}). %To avoid potential biases from including only mortgages with shorter foreclosure timelines, we only consider defaults occurring a full three years prior to the writing of the paper.

As covariates, we use a set of loan origination features and macroeconomic indicators taken from the year prior to default summarized in Table \ref{tab:predictors} in Appendix \ref{app:data_exploration}. Mortgage-specific information such as loan value, borrower credit score, and default insurance coverage are all taken from the SFLLD directly. These features are augmented by state and national macroeconomic statistics sourced from public datasets maintained by the U.S. Bureau of Economic Analysis. National interest rate data is also included in the calculation of \texttt{ir\_spread}, taking the difference between a given mortgage rate and the average 30-year fixed-rate offered by Freddie Mac that year, as reported by the Federal Reserve Bank of St. Louis. For spatial coordinates in GP models, we follow \cite{Kundig2025} and map property ZIP3 codes (the first three digits of the address's postal code) to longitude and latitude centroid coordinates. In total, our dataset includes defaults occurring in $874$ ZIP3 regions across the continental United States.
%\citep{FREDmortgage30us}

Figure~\ref{fig:lgd_dist} shows the empirical distribution of the LGD in our dataset. Roughly $13.1\%$ of all mortgages are associated with a boundary LGD observation (either $0$ or $1$), of which a larger number suffer a complete loss. Over the interior $(0,1)$, the LGD first decreases rapidly and reaches a trough at around $0.1$, then starts to increase peaking around $0.5$ after which it decreases again. These features, in particular the spike at the lower boundary, highlight the value of flexible distributions in modeling LGD. 
\begin{figure}[ht!]
  \centering
  \includegraphics[width=0.6\linewidth]{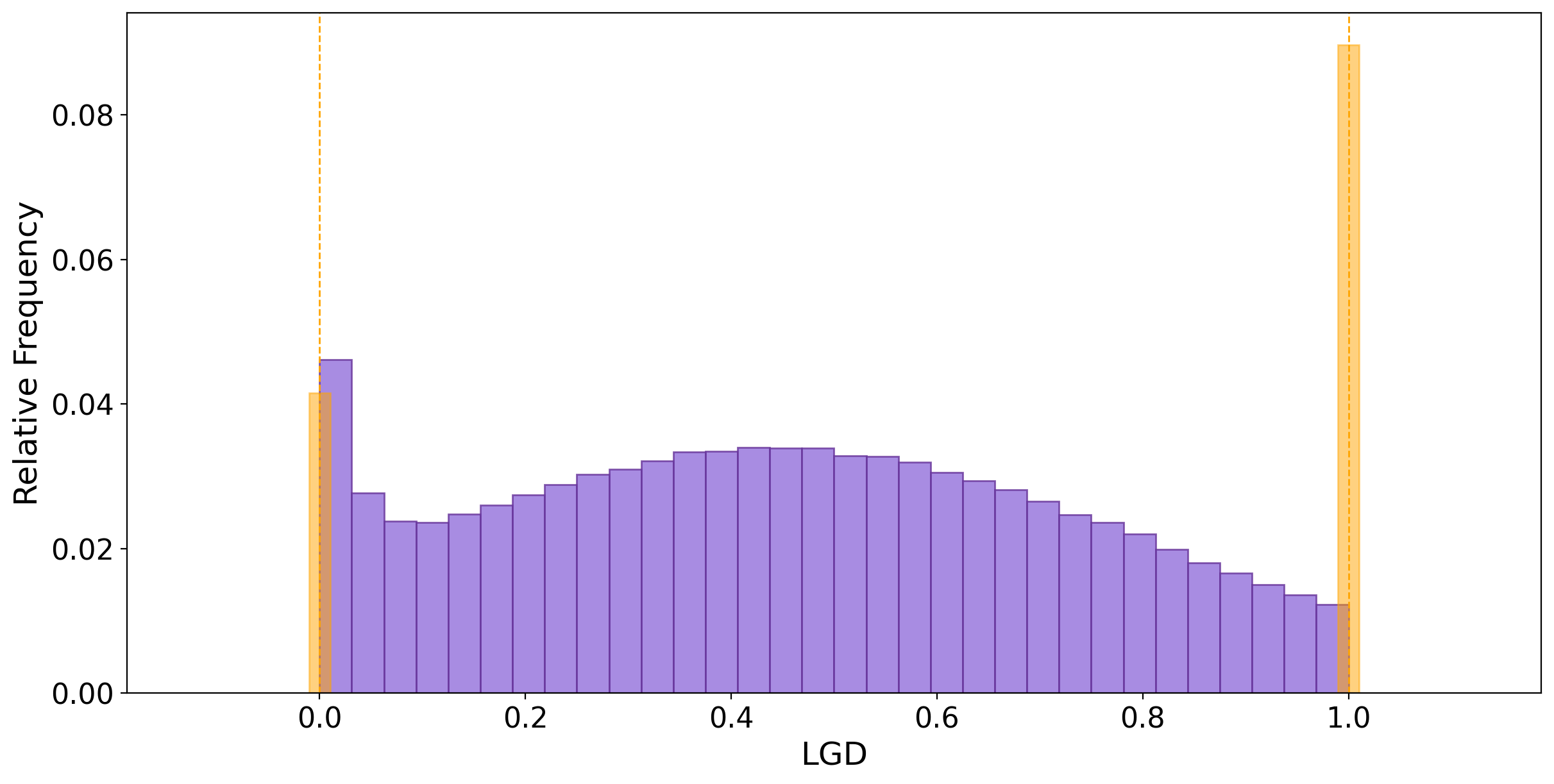}
  \caption{Empirical LGD distribution. Yellow bars represent boundary observations.}
  \label{fig:lgd_dist}
\end{figure}

Figure~\ref{fig:lgd_spatial_heatmap} contains a spatial map with average LGDs in the ZIP3 locations and a time series plot of average LGDs, and Figure \ref{fig:lgd_spatiotemporal_heatmap} in Appendix \ref{app:data_exploration} visualizes spatio-temporal average LGDs. These figures show the clear presence of trends in LGD across space and time. Concerning spatial structure, we observe areas of high regional LGD around Florida, New Mexico, and northern Appalachia into the so-called ``Rustbelt" including Ohio, Michigan, and Indiana. Moreover, we see an increase in the LGD in the fallout of the 2008 subprime mortgage crisis. There is also a very small uptick in LGD in 2021, potentially attributable to the COVID-19 pandemic. %On this we note that the absence of a spike in number of defaults in our dataset in 2020 is due to a large number of defaulted observations in the original SFLLD still being unresolved as of late 2025.
\begin{figure}[ht!]
  \centering
  \includegraphics[width=0.49\linewidth]{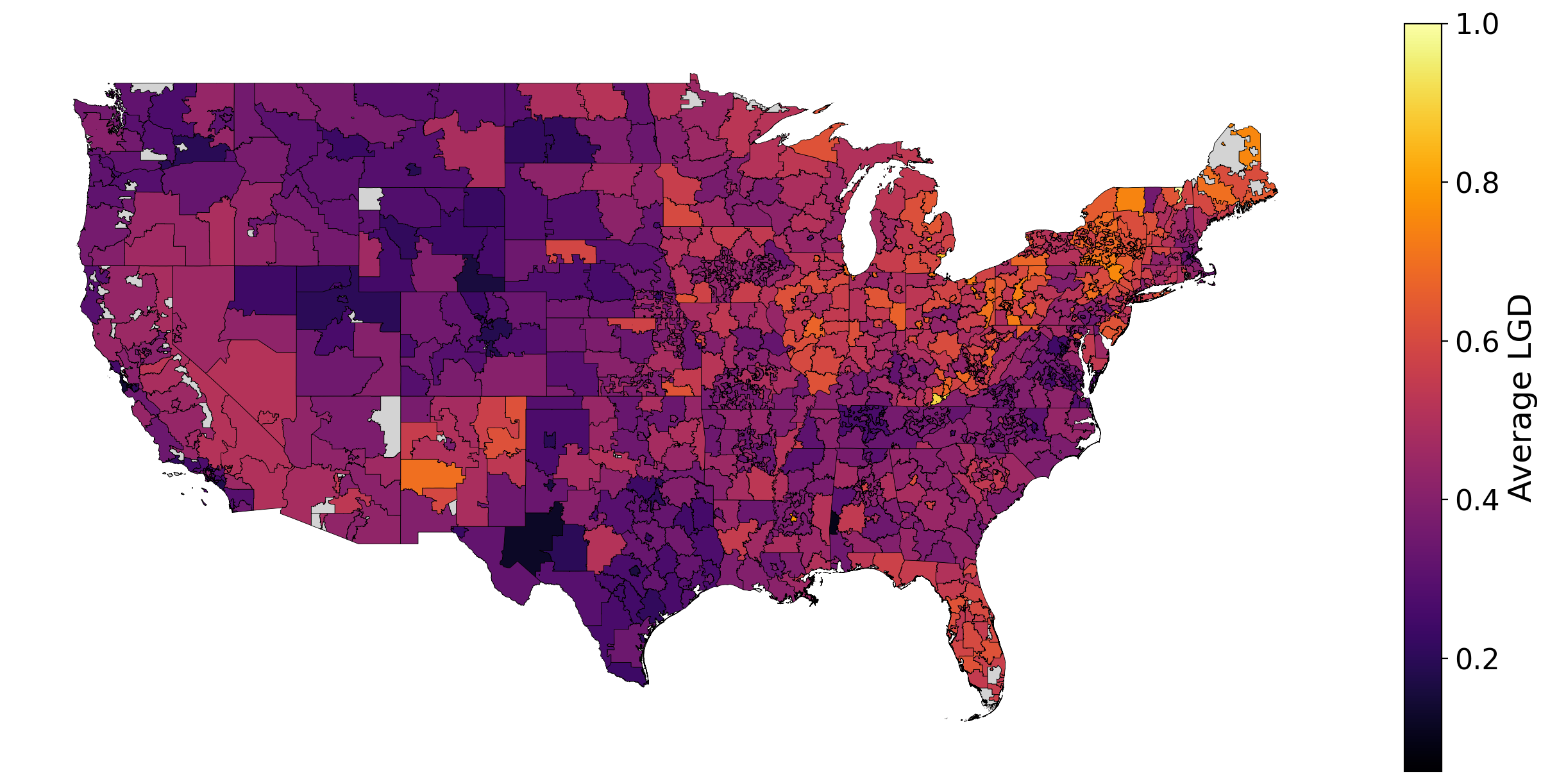}
  \includegraphics[width=0.49\linewidth]{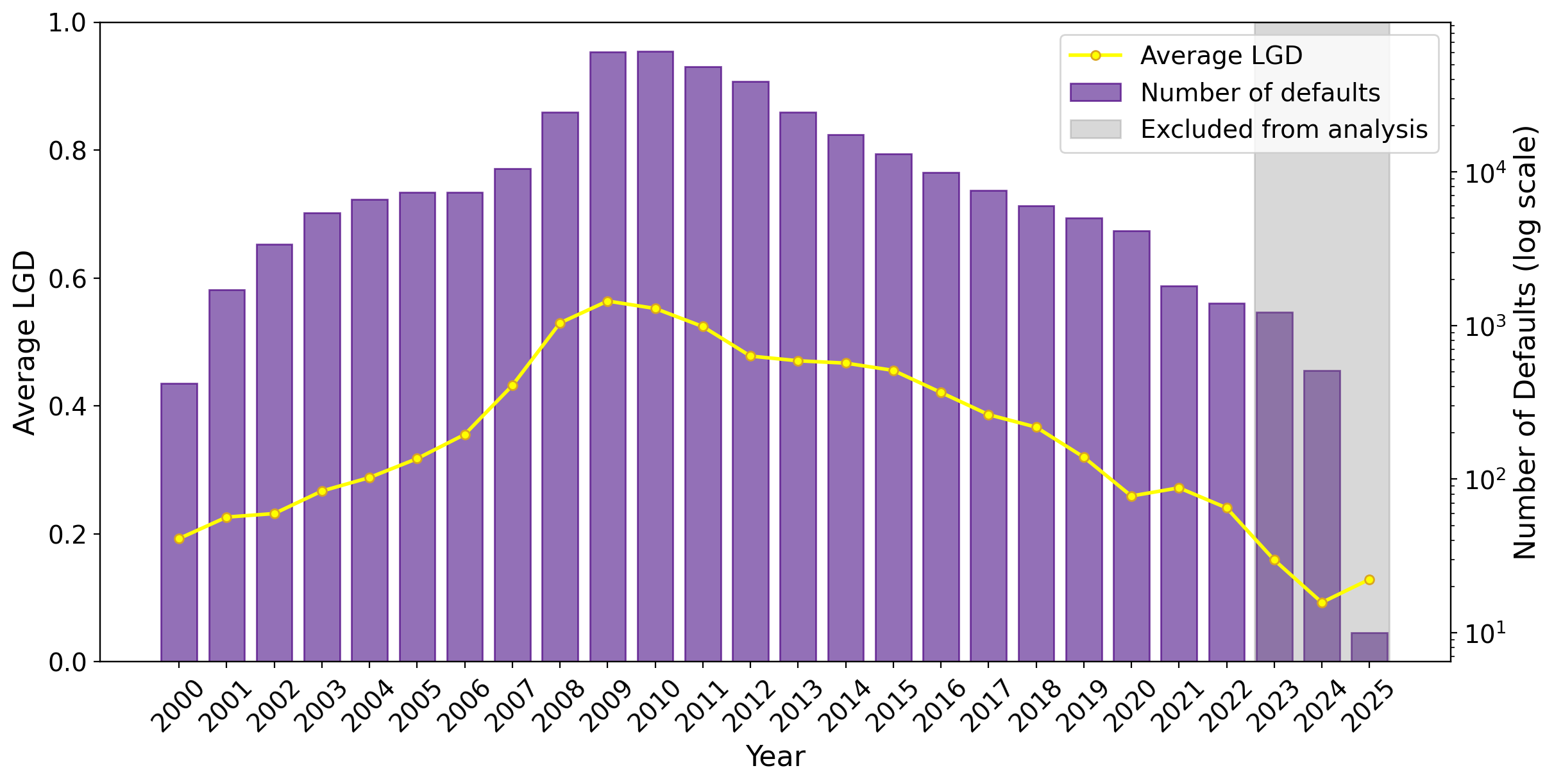}
  \caption{Left plot: Heatmap of average LGDs by ZIP3 locality (no data for gray regions). Right plot: average LGD (left axis) and number of defaults (right axis, log-scale) by year.}
  \label{fig:lgd_spatial_heatmap}
\end{figure}

\subsection{Results of Linear Regression Models}\label{subsec:stage2_results}
To start, we compare the fit of the (independent) linear regression models considered in Sections \ref{subsec:zoctn_vs_benchmarks} and \ref{sec:sim_study} and listed in Table~\ref{tab:simstudy_models} in Appendix \ref{app:simstudy_notation} on the full LGD dataset including all observations from the start of 2000 to the end of 2022. Table \ref{tab:stage1_model_performance} reports the negative log-likelihood (NLL), the Akaike information criterion (AIC), and the Bayesian information criterion (BIC). We find that the ZOC-TN clearly yields the best fit in all considered metrics. In addition, Figure~\ref{fig:stage1_aic} in Appendix \ref{app:lgd_ilms} shows how the AIC evolves as more data is introduced across the years. We observe that the ranking of the model's quality-of-fit is very stable across time. The estimated coefficients and additional parameters for each model are presented in Appendix~\ref{app:lgd_ilms}. The runtimes in Table~\ref{tab:stage1_model_performance} for fitting each of the independent linear regression models on the complete LGD dataset are qualitatively similar to those in the simulation study. We see that among the three models (ZOC-TB, ZOC-SG, and ZOC-TN) with the best goodness-of-fit (AIC, BIC), our proposed censored transformed normal distribution is clearly the fastest.% See Appendix~\ref{app:lgd_ilms} for runtimes across the expanding window dataset.
\begin{table}[ht!]
\centering
\caption{Goodness-of-fit metrics for independent linear models, based on fit of entire LGD dataset. Runtime is based on numerical minimization of NLL. NLL, AIC, and BIC rounded to nearest integer.}
\label{tab:stage1_model_performance}
\footnotesize
\begin{tabular}{lrrrr}
\toprule
  & Runtime & NLL & AIC & BIC \\
\midrule
Quasi-B & $\mathbf{0.51}$ & $232,477$ & $465,009$ & $465,312$ \\
ZOC-N & $1.24$ & $117,164$ & $234,386$ & $234,699$ \\
BE-INF & $1.29$ & $128,114$ & $256,289$ & $256,624$ \\
ZOC-SG & $60.8$ & $110,998$ &	$222,055$ &	$222,380$ \\
ZOC-TB &  $155$  & $114,238$ &	$228,535$ & $228,859$ \\
ZOC-TN &  $2.06$ &  $\mathbf{105,519}$ & $\mathbf{211,100}$ & $\mathbf{211,435}$ \\
\bottomrule
\end{tabular}
\end{table}

In addition, we present modified \textit{suspended rootograms} \citep{Kleiber_2016} in Figure~\ref{fig:stage1_rootograms} to visually inspect the goodness-of-fit of the models. These figures show in the foreground the difference in the square-root of fitted and observed frequencies across a total of 18 bins (16 evenly spaced across the interior, and one at each of the boundaries). The empirical distribution is additionally shown in the background.% at a finer resolution (32 inner bins) to highlight differences between models. The square-root scale is used to adjust for scale differences across the bins.
\begin{figure}[ht!]
  \centering
  \includegraphics[width=1\linewidth]{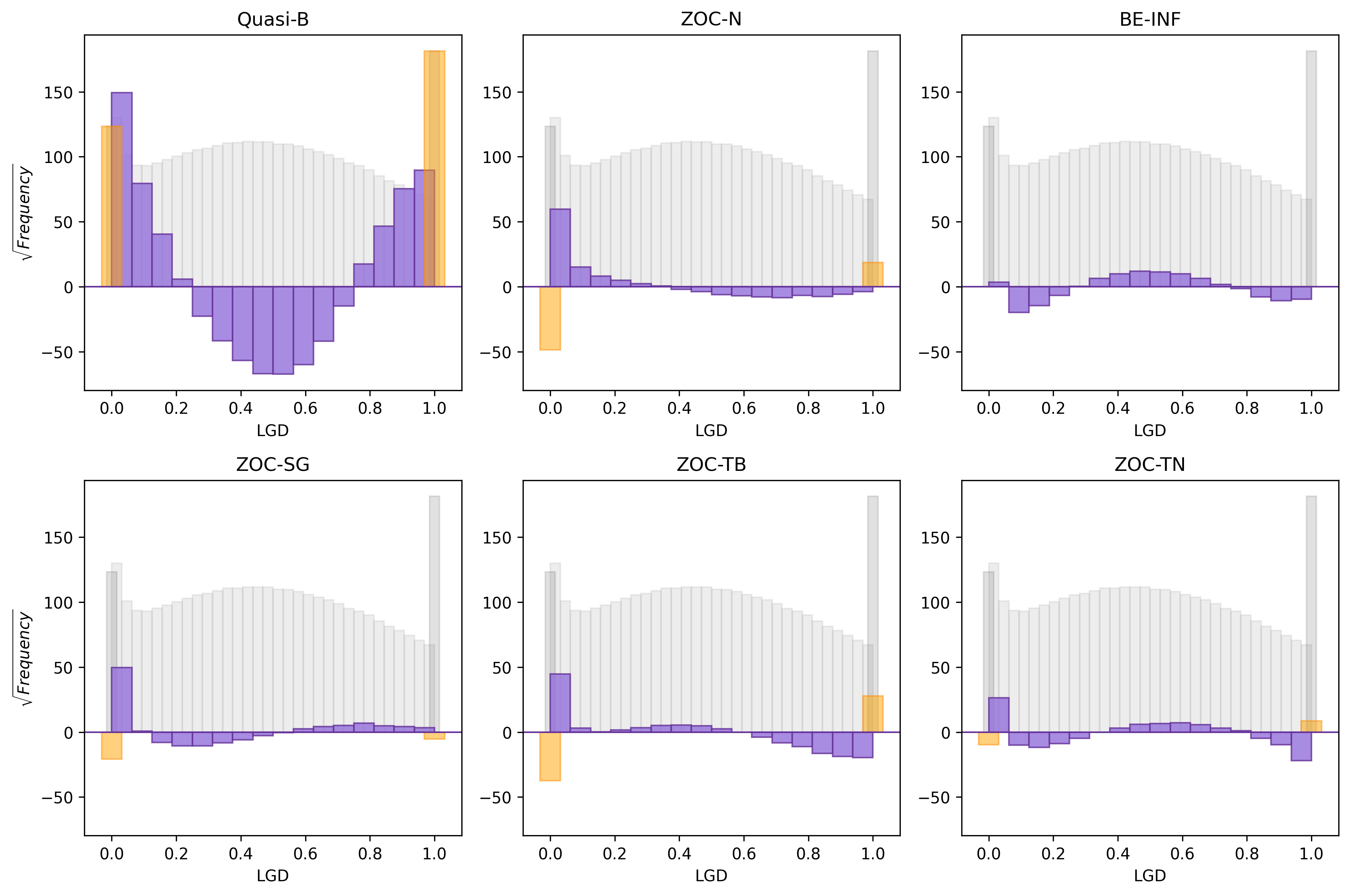}
  \caption{Suspended rootograms for independent linear models compared in Table~\ref{tab:stage1_model_performance}. Colored bars represent $\sqrt{O_j} - \sqrt{E_j}$ for $O_j$ observations in $j^{\text{th}}$ bin, and $E_j$ expected frequency under fitted models.}
  \label{fig:stage1_rootograms}
\end{figure}
Figure~\ref{fig:stage1_rootograms} shows that the ZOC-TN model achieves smaller residuals at the boundary probability masses than the other likelihoods, except for the zero-and-one inflated beta (BE-INF) model. In the interior, the zero-and-one inflated beta model and the ZOC-TN model achieve the best marginal fit. Note that the zero-and-one inflated beta model has zero residuals at the boundaries $0$ and $1$, since these probabilities are estimated globally in the BE-INF model through \((\hat{\gamma},\hat{\alpha})\), and hence fixed across all observations. This means that although at the aggregate level -- at which the suspended rootograms in Figure~\ref{fig:stage1_rootograms} are computed -- the BE-INF likelihood is able to fit the data perfectly, on an individual observation basis this inflexibility is a drawback which results in worse AIC and BIC values. The worst fit is observed for the Bernoulli quasi-likelihood which is unable to capture any nonzero boundary probability mass. %Notable also is the way that each of the likelihoods handles the spike in LGD observations just above $0$, with most models' greatest residual being in the leftmost inner bin. Of these, however, the ZOC-TN likelihood achieves the closest fit, potentially due to its unique ability to capture \emph{W}-type fractional response distributions as presented in Table~\ref{tab:qualitative_fractional_response}. Further evidence of this can be found in the negative residual on the rightmost inner bin where the empirical distribution does not match the expected right peak of a \emph{W} distribution.
In Appendix~\ref{app_tests_tobit}, we also conduct log-likelihood ratio and Wald tests to compare the vanilla two-limit Tobit model to the ZOC-TN model. Both test statistics are highly significant, thus providing strong evidence that the transformation in \eqref{eq:zoctn_affine} meaningfully improves the fit.

\subsection{LGD Prediction}
We now move on to evaluate the prediction accuracy of various models on the task of forecasting one-year-ahead LGDs. This is done by calculating one-year-ahead, temporal out-of-sample LGDs using expanding window training data sets. This means that for each model, we fit a total of $15$ models, starting in 2008, to predict the LGD in the upcoming year. For example, a model used to predict LGDs in 2008 is trained on mortgages defaulting between 2000 and the end of 2007. This rolling-origin design mimics a practical risk management application in which an LGD model is periodically re-estimated using all information available at the time and then deployed to forecast loss severities for future defaults.
% , while a model used to predict LGD in 2009 additionally contains observations from 2008 
In addition to the independent linear models considered in Section \ref{subsec:stage2_results}, we use independent tree-boosting models, Gaussian process models with linear fixed-effects regression terms, and Gaussian process tree-boosting models as presented in Section \ref{subsec:zoctn_gpms}. Specifically, we use independent tree-boosting models with Bernoulli, ZOC-TN, censored beta, and censored gamma (quasi-)likelihoods as well as spatial and spatio-temporal GP models with both linear and nonlinear tree-boosted regression terms. We restrict our analysis of the spatial and spatio-temporal GP models to the ZOC-TN likelihood. This choice is motivated by the best-in-class performance of the ZOC-TN model on the LGD dataset observed in Section \ref{subsec:stage2_results}, as well as numerical instabilities (crashes) encountered in the training of both censored beta and gamma models. The models containing a tree-boosting or GP component were trained using the \texttt{GPBoost} Python library version 1.6.7. The zero-one-inflated Beta and two-limit Tobit likelihoods are not supported in \texttt{GPBoost} at the time of writing, and so these are excluded from the analysis. The prediction accuracy is measured using three metrics: MSE, log score, and CRPS. Details for the log score and CRPS calculations can be found in Appendix~\ref{app:ls_crps}. Moreover, we assess the calibration of the predictive distribution using PIT reliability diagrams. 

\subsubsection{Choice of Hyperparameters}\label{subsec:datasplit}
For the tree-boosting models, we conduct hyperparameter tuning by further splitting the training data into validation data consisting of the final year in the training data and inner training data as the remaining data, and select hyperparameters that minimize the  mean squared error on the validation data. Table~\ref{tab:tuning_param_desc} in Appendix~\ref{app:results} lists the hyperparameters considered as well as grid values used for the randomized search. We perform randomized grid search with 100 iterations. Tuning for the spatio-temporal tree-boosted models is restricted to number of trees due to time constraints, with all other hyperparameters being carried over from the corresponding spatial-only model. Information on the tuned hyperparameters for the spatial and spatio-temporal GP tree-boosted models can be found in Figure~\ref{fig:gpb_hps} in Appendix~\ref{app:results}.
For the Gaussian process models, we use a Matérn covariance function for the spatial GP ($\mathbf{s}_i = (x_i, y_i)$),
\begin{equation}\label{eq:matern_cov}
    c_{\boldsymbol{\gamma}}(\mathbf{s}_i, \mathbf{s}_j) = \sigma_{\mathrm{Matern}}^2 \frac{2^{\nu-1}}{\Gamma(\nu)}
\left(\frac{\sqrt{2\nu}\,\lVert \mathbf{s}_i - \mathbf{s}_j \rVert_2}{\rho_s}\right)^\nu
K_\nu\left(\frac{\sqrt{2\nu}\,\lVert \mathbf{s}_i - \mathbf{s}_j \rVert_2}{\rho_s}\right)
\end{equation}
where $\Gamma(\cdot)$ is the Gamma function, and $K_{\nu}(\cdot)$ is the modified Bessel function of the second kind, and $\boldsymbol{\gamma} = \{\sigma_{\mathrm{Matern}}^2, \rho_s, \nu\}$ are the marginal variance, spatial range, and smoothness (or shape) parameters, respectively. For the spatio-temporal models, we use an anisotropic spatio-temporal Matérn covariance function:
\begin{equation*}
    c_{\boldsymbol{\gamma}}(\mathbf{s}_i,\mathbf{s}_j)
=
\sigma_{\mathrm{Matern}}^2 \frac{2^{\nu-1}}{\Gamma(\nu)}
\left(
\sqrt{2\nu}\left\|\mathbf{R}(\mathbf{s}_i-\mathbf{s}_j)\right\|_2
\right)^\nu
K_\nu\left(
\sqrt{2\nu}\left\|\mathbf{R}(\mathbf{s}_i-\mathbf{s}_j)\right\|_2
\right)
\end{equation*}
where $\mathbf{R} = \mathrm{diag}(\rho_t^{-1}, \rho_s^{-1}, \rho_s^{-1})$. Hence in the spatio-temporal case, $\boldsymbol{\gamma} = \{\sigma_{\mathrm{Matern}}^2, \rho_t, \rho_s, \nu\}$. The hyperparameters $\boldsymbol{\gamma}$  are estimated jointly with the rest of the model parameters. For the smoothness parameter, we use $\nu=1.5$. This is done since estimating the smoothness parameter requires close-by observations, while in our data coordinates are aggregated by ZIP3 region and hence lack granularity. However, a robustness check is provided in Appendix~\ref{app:gp_cov_smoothness}.

\subsubsection{Results}

Table~\ref{tab:stage2_model_performance} presents the average prediction accuracy metrics across the 15 test years (weighted by the number of observations in each year). Missing values for the tree-boosted censored beta model reflect a crash due to numerical issues. Concerning the independent linear regression models, we see that the ZOC-TN model offers competitive prediction accuracy compared to the other likelihoods. Specifically, for probabilistic predictions, the censored beta, censored gamma, and ZOC-TN independent linear regression models yield similar log score and CRPS values. And with the exception of the BE-INF model, all vanilla linear models achieve comparable MSE scores. Considering the independent tree-boosting models, we first observe that tree-boosting clearly yields more accurate predictions compared to linear models. This highlights the need for flexible nonlinear models. Moreover, the ZOC-TN model delivers the best performance of all considered independent tree-boosting models for both the log score and CRPS. We also see a noticeable improvement in accuracy by including GPs to model residual spatially and spatio-temporally structured variability. For instance, the higher accuracy of the spatial GP linear model over independent linear models suggests the presence of unaccounted spatial effects, while the difference between spatial and spatio-temporal models additionally suggests that these correlations vary from year-to-year. Finally, the tree-boosted spatio-temporal ZOC-TN model clearly yields the most accurate predictions in all three metrics. This suggests the presence of both nonlinear effects of the observable covariates and unaccounted spatio-temporal effects. Table~\ref{tab:stage2_model_performance} also reports training runtimes. Note that, for tree-boosting models, these runtimes depend heavily on the selected hyperparameters.%However, the fact that no such improvement is observed between the independent and GP tree-boosted models indicates that some of these spatial and spatio-temporal effects can be compensated by nonlinear effects of observable covariates. %For instance, the low runtime of the ZOC-SG independent tree-boosting model is slightly misleading since the number of boosting iterations was tuned independently across models; the average number of trees across the ZOC-SG models was 270 trees, while the average numbers for the ZOC-TN and Quasi-B models were 366 and 489, respectively.
\begin{table}[ht!]
\centering
\caption{Average model performance across all test years with standard deviation given in parentheses and runtimes (in seconds).}
\label{tab:stage2_model_performance}
\footnotesize
\begin{tabular}{lrrrr}
\toprule
& Runtime & MSE & Log Score & CRPS \\
\midrule
\textit{Independent Linear Models} && \\
\midrule
\rowcolor{gray!15}
Quasi-B
& $0.377$s& $0.0847$& $0.691$& $0.245$ \\& 
& {\scriptsize $(0.0206)$}& {\scriptsize $(0.0596)$}& {\scriptsize $(0.0184)$} \\
\rowcolor{gray!15}
ZOC-N
& $0.998$s & $0.0829$& $0.441$& $0.168$ \\
& & {\scriptsize $(0.0193)$}& {\scriptsize $(0.192)$}& {\scriptsize $(0.0243)$} \\
\rowcolor{gray!15}
BE-INF
& $1.05$s& $0.0800$& $0.464$& $0.311$ \\
& & {\scriptsize $(0.0123)$}& {\scriptsize $(0.174)$}& {\scriptsize $(0.0652)$} \\
\rowcolor{gray!15}
ZOC-TN
& $1.69$s & $0.0846$& $0.430$& $0.171$ \\
& & {\scriptsize $(0.0190)$}& {\scriptsize $(0.196)$}& {\scriptsize $(0.0249)$} \\
\rowcolor{gray!15}
ZOC-SG
& $49.9$s& $0.0826$& $0.420$& $0.168$ \\
& & {\scriptsize $(0.0180)$}& {\scriptsize $(0.179)$}& {\scriptsize $(0.0230)$} \\
\rowcolor{gray!15}
ZOC-TB
& $132$s& $0.0832$ & $0.433$ & $0.169$ \\
& & {\scriptsize $(0.0193)$}& {\scriptsize $(0.178)$}& {\scriptsize $(0.0249)$} \\
\midrule
\textit{Independent Tree-Boosting} \\
\midrule
\rowcolor{gray!15}
Quasi-B & $75.4$s & $0.0657$ & $0.638$ & $0.226$ \\
& & {\scriptsize $(0.0123)$}& {\scriptsize $(0.0305)$}& {\scriptsize $(0.0135)$} \\
\rowcolor{gray!15}
ZOC-TN & $95.7$s & $0.0663$ & $-0.131$ & $0.146$ \\
& & {\scriptsize $(0.0127)$}& {\scriptsize $(0.130)$}& {\scriptsize $(0.0162)$} \\
\rowcolor{gray!15}
ZOC-SG & $70.7$s & $0.0669$ & $0.325$ & $0.149$ \\
& & {\scriptsize $(0.0106)$}& {\scriptsize $(0.149)$}& {\scriptsize $(0.0150)$} \\
\rowcolor{gray!15}
ZOC-TB & N/A & N/A & N/A & N/A \\
\midrule
\textit{GP Linear Models} \\
\midrule
\rowcolor{gray!15}
Spatial (\texttt{ZOCTN\_S\_Lin}) & $68.3$s & $0.0721$ & $-0.0816$ & $0.154$ \\
& & {\scriptsize $(0.0156)$}& {\scriptsize $(0.118)$}& {\scriptsize $(0.0168)$} \\
\rowcolor{gray!15}
Spatio-temporal (\texttt{ZOCTN\_ST\_Lin}) & $1,370$s & $0.0689$ & $-0.0979$ & $0.151$ \\
& & {\scriptsize $(0.0201)$}& {\scriptsize $(0.212)$}& {\scriptsize $(0.0279)$} \\
\midrule
\textit{GP Tree-Boosting} \\
\midrule
\rowcolor{gray!15}
Spatial (\texttt{ZOCTN\_S\_Bst}) & $764$s & $0.0722$ & $-0.0919$ & $0.153$ \\
& & {\scriptsize $(0.0199)$}& {\scriptsize $(0.177)$}& {\scriptsize $(0.0251)$} \\
\rowcolor{gray!15}
Spatio-temporal (\texttt{ZOCTN\_ST\_Bst}) & $7,510$s & $\mathbf{0.0572}$ & $\mathbf{-0.210}$ & $\mathbf{0.134}$ \\
& & {\scriptsize $(0.00593)$}& {\scriptsize $(0.0816)$}& {\scriptsize $(0.00732)$} \\
\bottomrule
\end{tabular}
\end{table}

Figure~\ref{fig:stage2_zoctn_performance} shows annual accuracy measures of the different ZOC-TN models across the 15 test years. Overall, most models exhibit two drops in prediction accuracy around the two recessions 2008/2009 and 2020/2021. These periods serve as robustness checks for the models in the face of changing market conditions. We find that the spatio-temporal tree-boosted ZOC-TN model (ZOCTN\_ST\_Bst) stands out not only for its overall best performance, but its consistently high prediction accuracy across all years. In contrast to all other models, this spatio-temporal tree-boosting model's performance does not deteriorate during the two recessions. 
\begin{figure}[ht!]
  \centering
  \includegraphics[width=0.6\linewidth]{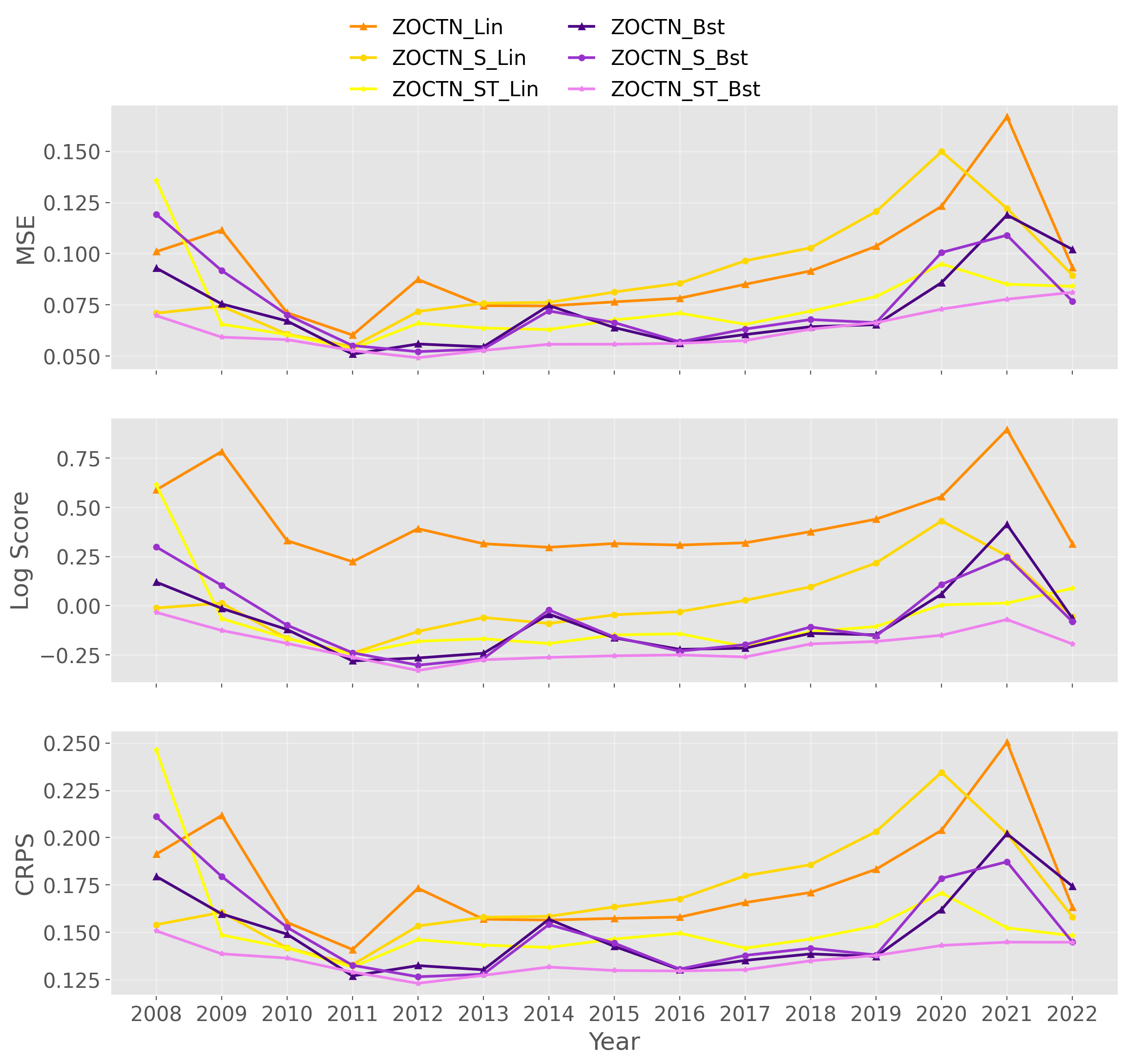}
  \caption{One-year-ahead prediction accuracy of ZOC-TN models vs. time.}
  \label{fig:stage2_zoctn_performance}
\end{figure}

In addition, Figure~\ref{fig:stage2_pitreliability} in Appendix~\ref{app:lgd_pit_qq} shows randomized PIT reliability residual diagrams to assess calibration with deviations from the horizontal zero line representing systematic over/under-dispersion or bias. Overall, we find that the linear and tree-boosted spatio-temporal ZOC-TN models (ZOCTN\_ST\_Lin and ZOCTN\_ST\_Bst) as well as the zero-one-inflated beta model (BEINF\_Lin) yield the best marginal calibration. Although the zero-one-inflated beta model shows low conditional prediction accuracy, it is nonetheless able to faithfully capture the marginal distribution of the data; similarly as observed in-sample with the suspended rootograms. Furthermore, we observe that predictions of the spatial linear and tree-boosted ZOC-TN models (ZOCTN\_S\_Lin and ZOCTN\_S\_Bst) are not well calibrated, which highlights the need for spatio-temporal models.

\subsection{Portfolio Loss and Capital Provisioning}
In what follows, we examine the economic implications of our models' prediction accuracy by assessing the models on the task of capital provisioning and $0.99$-quantile portfolio loss forecasting. Specifically, we use the models to forecast cumulative losses for a portfolio of mortgages resulting from all defaults occurring in the coming year, and we assess the prediction accuracy of these loss distributions by calculating the mean absolute error and the $0.99$-quantile loss \citep{Koenker1999}, $(L - q_{0.99})(0.99 - \mathbb{I}\{L\le q_{0.99}\})$, across all test years, 2008-2022. The $0.99$-quantile loss is of particular interest for risk management purposes as it evaluates the upper tail of predictive portfolio loss distributions. To generate predictive portfolio loss distribution, we sample the model's predictive LGD distribution for each loan defaulting in the coming year and multiply by the loan's UPB at the start of the year before summing these individual losses to find the forecast portfolio loss for the next year. This process is repeated $10,000$ times to generate an empirical loss distribution comprising $10,000$ observations. For models with GP components, we additionally first sample from the posterior predictive distribution of the GP.
%Under a ZOC-TN model, for example, the LGD of an upcoming default is modeled as a random variable,
% \[
%     \mathrm{LGD}_i \sim \text{ZOC-TN}(\mu_i, \sigma, a, b)
% \]
% for a location parameter, $\mu_i$, determined by the loan's features and the model type. Due to this, we estimate the loss distribution in a given year, $y$, empirically by simulating predictive loss $10,000$ times: $L^y_j = \sum_{i=1}^{N_y} \mathrm{LGD}_{ij}\times\mathrm{UPB}_i$, for $1 \le j \le 10,000$ and $N_y$ defaults occurring in year $y$. 
The results are shown in Table~\ref{tab:economic_metrics} in Appendix \ref{app:lgd_annual_performance}. Unlike the MAE, the $0.99$-quantile loss is simply summed across years, with a lower score representing better performance. We find that the spatio-temporal tree-boosted ZOC-TN model is superior to all other models in terms of the MAE and $0.99$-quantile loss. For instance, concerning the MAE results, we find that use of the spatio-temporal GP tree-boosted ZOC-TN model for forecasting next-year losses would have produced a reduction in error of around $\$277$M on average per year against the best independent linear model, $\$148$M against the spatio-temporal GP linear ZOC-TN model, and $\$39$M over the best independent tree-boosted model. %The total $0.99$-quantile loss appears broadly comparable across all models with a few notable exceptions being the independent linear BE-INF model with a loss over $5$x higher than the ZOC-TN model, and the spatial GP linear model, which although still outperformed by the spatio-temporal GP tree-boosted ZOC-TN model, produced a markedly lower loss than all other linear models, including the spatio-temporal GP linear model. %Although the independent tree-boosting models failed to offer a clear improvement in terms of $0.99$-quantile loss to the independent linear models, MAE for these models was roughly half that of their linear counterparts. 

\subsection{Model Interpretation}\label{subsec:freddiemac_interpretation}
In this final section, we  analyze potential drivers of LGD through the interpretation of the best-performing tree-boosting models. We begin by examining the posterior means for the random effects in the spatial and spatio-temporal GPBoost models presented in Figures~\ref{fig:sgpb_heatmap} and \ref{fig:stgpb_heatmap} in Appendix \ref{app:gp_interp}, respectively. The posterior distributions are calculated for models fit on the entire LGD dataset, with tree-boosting hyperparameters matching those of the 2022 models. In both cases, the posterior mean heatmaps closely mirror the plots of raw spatial and spatio-temporal average LGDs (see Section \ref{subsec:dataset_desc}). This means that a persistent portion of the LGD in the high-LGD regions identified in Section~\ref{subsec:dataset_desc} (Florida, the Rustbelt / Appalachia, New Mexico) cannot be explained by the included covariates alone. These high-residual clusters hint at geographical factors endemic to the housing market, or at least not correlated with the macroeconomic factors included in our analysis. One possible explanation for this is regional housing depressions caused by a net migration out of a region. Under this interpretation a persistent drop in demand for housing stock would drive down house prices and force banks to sell repossessed property for only a fraction of the original mortgage. This could feasibly be the case in the Rustbelt, as well as Appalachia where coal mining was an important part of the regional economy historically. This is also consistent with the broader economic literature, with \cite{harrison2023} finding that ``hypervacancy" in the Rustbelt, measured by metropolitan statistical areas with a long-term vacancy rate of more than $8\%$, recovered at a much slower rate between 2012-2019 than elsewhere in the country.
\begin{figure}[ht!]
  \centering
  \includegraphics[width=0.6\linewidth]{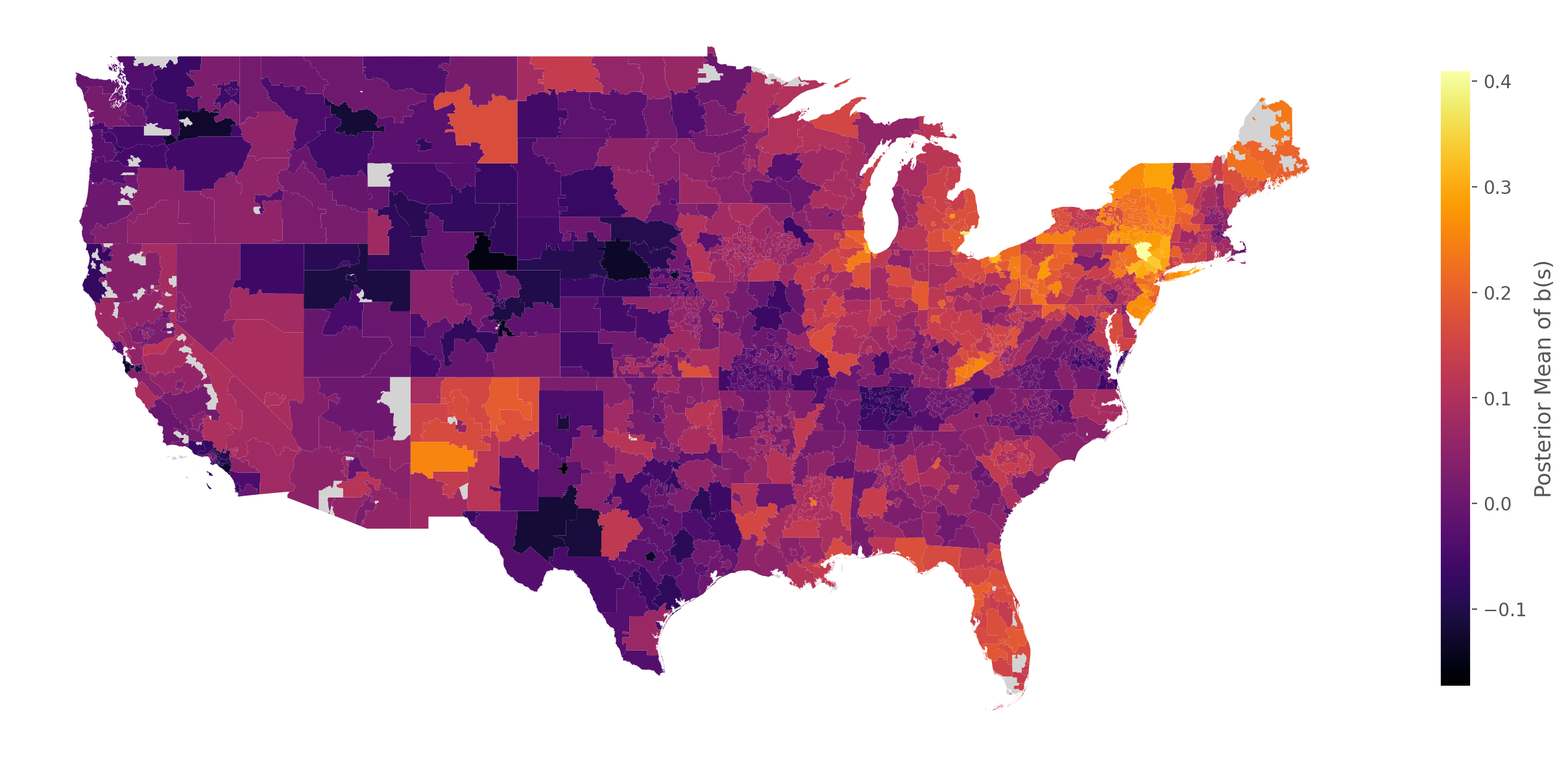}
  \caption{Posterior mean of latent Gaussian process in spatial tree-boosting model.}
  \label{fig:sgpb_heatmap}
\end{figure}

To understand potential nonlinear effects of covariates on the LGD, we analyze the spatio-temporal tree-boosted model trained on data up to the end of 2019 using SHAP (SHapley Additive exPlanations) values \citep{lundberg2017shap}. This model is chosen due to its high prediction accuracy and the clear difference in performance compared to the corresponding linear ZOC-TN model in the 2020 test fold (see Figure~\ref{fig:stage2_zoctn_performance}). This gap hints at the existence of nonlinear relationships in the data which our linear models are unable to capture. For computational convenience, as well as to prevent overcrowded figures, these SHAP plots are based on $10,000$ randomly selected training samples. According to the average absolute SHAP values shown in Figure~\ref{fig:shap_summary} in Appendix~\ref{app:gp_interp}, the five most important predictor variables are, in descending order, \texttt{original\_upb}, \texttt{insurance\_percent}, \texttt{ltv\_at\_default}, \texttt{loan\_purpose\_P}, and \texttt{ir\_spread}. Here, \texttt{original\_upb} denotes the original unpaid balance of the mortgage, and \texttt{loan\_purpose\_P} indicates whether the loan was a purchase transaction rather than a refinance transaction. Both \texttt{insurance\_percent} and \texttt{ltv\_at\_default} behave as one might expect, with greater insurance coverage on the loan resulting in lower proportional losses (negative SHAP value), while low loan-to-value ratios make it easier to recuperate the remaining unpaid balance.

Figure~\ref{fig:shap_dependence} reports SHAP dependence plots for the four most important numerical covariates. First, we find a clear inverse relationship between \texttt{original\_upb} and the LGD. This is in line with the results from the linear regression models where the coefficient of \texttt{original\_upb} is consistently negative across all models (see Appendix~\ref{app:lgd_ilms}). This can be explained as follows. If we assume, on average, that 1) under a prudent loan issuing process, the repayment progress prior to default, as a rate, should be independent of the loan size, and 2) through the liquidation process debt holders are able to recuperate the full outstanding amount minus some fixed administrative and servicing cost, we would expect to find a relationship as in Figure~\ref{fig:shap_dependence} due to the definition of the LGD. Besides, Figure~\ref{fig:shap_dependence} suggests distinctly nonlinear relationships between other covariates and the LGD. Of these, the dependence plots of \texttt{ir\_spread} and \texttt{insurance\_percent} appear roughly S-shaped, with the dispersion of the effect increasing on the tails, while the \texttt{ltv\_at\_default} shows a linear relationship between approximately $0$ and $100$, before plateauing beyond this point. The intuition behind this relationship is that when the home market value exceeds the outstanding loan value ($\texttt{ltv\_at\_default} < 100$), the underlying property can be sold immediately to recuperate the UPB, with the farther $\texttt{ltv\_at\_default}$ being away from $100$, the larger the cushion for incidentals and liquidation costs. Moreover, the concentration of highly insured loans around and above $\texttt{ltv\_at\_default} = 100$ provides some insight into the piecewise linear relationship, with issuers potentially adopting a different set of practices for the maintenance and liquidation of high-risk, high-LTV loans.
\begin{figure}[ht!]
  \centering
  \includegraphics[
    width=0.9\linewidth,
    trim={0 9.8cm 0 0},
    clip
  ]{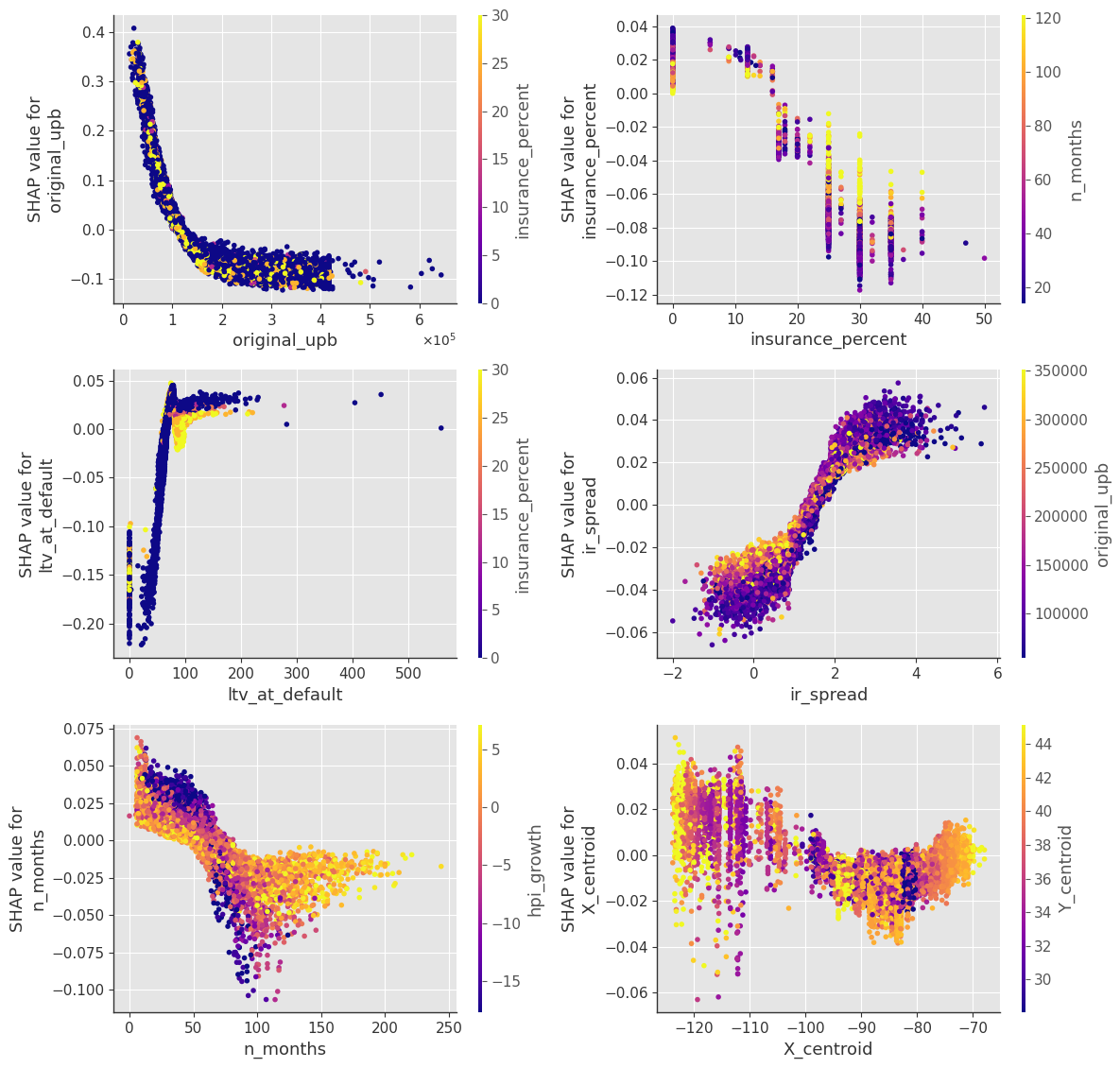}
  \caption{SHAP dependence plots for the four most important numerical covariates in the spatio-temporal tree-boosted model.}
  \label{fig:shap_dependence}
\end{figure}

\section{Conclusions}
This article introduces the zero-one censored transformed normal (ZOC-TN) distribution for bounded responses with possible mass at 0 and 1. The model combines a censored Gaussian variable with a two-parameter affine-logit transformation on the interior that can be seen as a low-dimensional approximation to a much broader class of possible smooth monotone logit-scale transformations. The simulation study and LGD application show that ZOC-TN is a flexible, computationally stable model for bounded responses with boundary mass. It performs competitively across diverse distributional settings, is substantially faster and more numerically reliable than the censored beta and shifted-gamma alternatives, and remains usable in richer tree-boosting and Gaussian process models. In the Freddie Mac application, the strongest results arise when the ZOC-TN likelihood is combined with nonlinear fixed effects and spatio-temporal random effects, suggesting that mortgage LGD is shaped by both complex covariate effects and residual dependence across space and time.

%These transformation parameters have a direct location-scale interpretation on the logit scale, and they reshape the interior distribution without altering the Gaussian-driven boundary masses. This decoupling gives the ZOC-TN likelihood more flexibility than an ordinary censored normal model, and allows it to flexibly represent various shapes. %We also establish large-sample properties for the independent linear model and derive an information decomposition showing that boundary observations identify the latent censored component, while interior observations identify the transformation parameters.

% In summary, the results here suggest that ZOC-TN is a practical addition to the fractional-response toolbox: it is more flexible than the standard censored normal, substantially cheaper than the strongest competing censored alternatives, and well suited for use as a likelihood within richer predictive models.

The paper leaves several directions for future work. On the methodological side, it would be natural to allow more than one distributional parameter to depend on the model features. For instance, introducing covariate dependence in the variance parameter would allow for a heteroscedastic ZOC-TN model. Another extension of the ZOC-TN model can be obtained by modifying the two-parameter affine-logit transformation to cover a broader set of monotone transformations. Concerning the latent variable which is transformed on $(0,1)$, other distributions such as extended-support beta and shifted gamma distributions can be used instead of a normal distribution. Alternatively, the censored construction could be abandoned altogether by estimating $p_0$ and $p_1$ separately and combining them with a transformed-normal over the interior to produce a two-tier hurdle model.

\bibliographystyle{abbrvnat}
\bibliography{bib_name.bib}
\clearpage

\appendix
%\chapter{Supplementary information}
% \label{app:supp}

\appendix

\if0\jbes{
\section*{Appendix}
}\fi

\setcounter{figure}{0}
\renewcommand{\thefigure}{A\arabic{figure}}
\setcounter{table}{0}
\renewcommand{\thetable}{A\arabic{table}}

\section{Proof of Proposition~\ref{prop:zoc-tn-interpretation}}\label{app:proof_of_zoctn_interpretation}
Here we prove the statements of Proposition~\ref{prop:zoc-tn-interpretation} on the interpretation of the interior location and concentration parameters, $(a, b)$, of the ZOC-TN model.
\begin{proof}
Since $\sigma^2>0$, the event $\{0<Z<1\}$ has positive probability, so the
conditional distribution of $Z$ given $0<Z<1$ is well defined. By construction,
$Z^\circ\in(0,1)$ almost surely. Hence $U=\logit(Z^\circ)$ is finite almost
surely. Also,
\[
Y^\circ
=
g_{a,b}(Z^\circ)
=
\expit\{a+b\logit(Z^\circ)\}
\in(0,1)
\]
almost surely. Therefore $V=\logit(Y^\circ)$ is finite almost surely, and
\[
V
=
\logit(Y^\circ)
=
\logit\left[\expit\{a+b\logit(Z^\circ)\}\right]
=
a+b\logit(Z^\circ)
=
a+bU.
\]
Thus
\[
V=a+bU
\qquad\text{almost surely}.
\]

% Part (i) follows from this affine identity. Whenever the relevant moments
% exist,
% \[
% \mathbb EV
% =
% \mathbb E(a+bU)
% =
% a+b\,\mathbb EU.
% \]
% For $m\ge 2$,
% \[
% V-\mathbb EV
% =
% b(U-\mathbb EU),
% \]
% and therefore
% \[
% \mathbb E\!\left[(V-\mathbb EV)^m\right]
% =
% b^m
% \mathbb E\!\left[(U-\mathbb EU)^m\right].
% \]
% The variance identity is the special case $m=2$.

Furthermore, let $a_2>a_1$ and fix $b>0$. Construct the two interior
responses from the same interior draw $Z^\circ$:
\[
Y^\circ_{a_j,b}
=
g_{a_j,b}(Z^\circ)
=
\expit\{a_j+b\logit(Z^\circ)\},
\qquad j=1,2.
\]
For every $z\in(0,1)$,
\[
g_{a_2,b}(z)
=
\expit\{a_2+b\logit(z)\}
>
\expit\{a_1+b\logit(z)\}
=
g_{a_1,b}(z),
\]
because $a_2>a_1$ and $\expit$ is strictly increasing. Hence
\[
Y^\circ_{a_2,b}
\ge
Y^\circ_{a_1,b}
\qquad\text{almost surely}.
\]
This coupling implies
\[
Y^\circ_{a_2,b}
\ge_{\mathrm{st}}
Y^\circ_{a_1,b}.
\]
Since $Y^\circ_{a,b}$ has the conditional distribution of
$Y_{a,b}$ given $0<Y_{a,b}<1$, the equivalent conditional stochastic
dominance statement follows.

% For part (ii), define
% \[
% h(u)=\expit(a+bu).
% \]
% Because $b>0$ and $\expit$ is strictly increasing, $h$ is strictly increasing.
% Since
% \[
% Y^\circ
% =
% h(U),
% \]
% strictly increasing transformations preserve unique medians. Therefore, if
% $m_U$ is the unique median of $U$, then the median of $Y^\circ$ is
% \[
% m_Y
% =
% h(m_U)
% =
% \expit(a+b\,m_U).
% \]
% Because $Y^\circ$ has the conditional distribution of $Y$ given $0<Y<1$, this
% is also the conditional median of $Y\mid(0<Y<1)$.
\end{proof}

\section{Proof of Proposition~\ref{prop:zoc-tn-local-density}}\label{app:local_density_proof}
The proof of local approximation of induced interior densities for the affine-logit transformation relies on the following result showing local approximation of the transformed variables on the logit scale.

\begin{proposition}[Local approximation on the logit scale]
\label{prop:zoc-tn-local-approx}
Let \(J\subset \mathbb{R}\) be an interval, and let \(h:J\to\mathbb{R}\) be twice continuously differentiable with \(h'(t)>0\) for all \(t\in\mathbb{R}\). Define
\[
T_h(z):=\mathrm{expit}\!\bigl(h(\mathrm{logit}\, z)\bigr),
\qquad z\in K,
\]
where \(K\subset(0,1)\) is compact and \(\mathrm{logit}(K)\subset J\). Fix \(t_0\in J\), and let
\[
b_0:=h'(t_0)>0,
\qquad
a_0:=h(t_0)-h'(t_0)t_0.
\]
Let \(H_K\) denote the convex hull of \(\mathrm{logit}(K)\cup\{t_0\}\), and assume \(H_K\subset J\).
Then the affine-logit transformation
\[
g_{a_0,b_0}(z):=\mathrm{expit}\!\bigl(a_0+b_0\mathrm{logit}\, z\bigr)
\]
satisfies
\[
\sup_{z\in K}\bigl|T_h(z)-g_{a_0,b_0}(z)\bigr|
\le
\frac{1}{8}
\sup_{t\in H_K} |h''(t)|
\cdot
\sup_{t\in\mathrm{logit}(K)} |t-t_0|^2.
\]
In particular, the ZOC-TN transformation provides a first-order local approximation to any smooth monotone transformation on the logit scale.
\end{proposition}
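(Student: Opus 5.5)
Fix $z\in K$ and set $t=\logit z\in\logit(K)$. By construction of $(a_0,b_0)$,
\[
a_0+b_0 t = h(t_0)+h'(t_0)(t-t_0),
\]
which is exactly the first-order Taylor polynomial of $h$ at $t_0$. The plan has two steps. First, bound the discrepancy on the logit scale using Taylor's theorem. Second, push that bound through $\expit$ using its Lipschitz constant.

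\emph{Step 1 (logit scale).} Since $h$ is twice continuously differentiable on $J$ and the segment between $t_0$ and $t$ lies in the convex hull $H_K\subset J$, Taylor's theorem with Lagrange remainder gives some $\xi$ between $t_0$ and $t$ with
\[
h(t)-\bigl(a_0+b_0t\bigr)=\tfrac12 h''(\xi)(t-t_0)^2 .
\]
Because $\xi\in H_K$, this yields
\[
\bigl|h(t)-(a_0+b_0t)\bigr|\le \tfrac12\sup_{s\in H_K}|h''(s)|\,|t-t_0|^2 .
\]
The supremum is finite: $K$ is compact, so $\logit(K)$ is compact, hence $H_K$ is a compact interval on which $h''$ is continuous.

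\emph{Step 2 (back to $(0,1)$).} The derivative of $\expit$ is $\expit(x)\{1-\expit(x)\}$, which is at most $1/4$. By the mean value theorem, $\expit$ is therefore $\tfrac14$-Lipschitz, so
\[
|T_h(z)-g_{a_0,b_0}(z)|=\bigl|\expit(h(t))-\expit(a_0+b_0t)\bigr|\le \tfrac14\,\bigl|h(t)-(a_0+b_0t)\bigr| .
\]
Combining with Step 1 gives the constant $\tfrac14\cdot\tfrac12=\tfrac18$ in front of $\sup_{H_K}|h''|\cdot|t-t_0|^2$. Taking the supremum over $z\in K$, equivalently over $t\in\logit(K)$, gives the stated bound.

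The "in particular" statement follows because the bound is quadratic in $\sup_{t\in\logit(K)}|t-t_0|$. The affine-logit map therefore matches $T_h$ to first order around $\expit(t_0)$.

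There is no real obstacle here. The only points needing care are, first, that the Taylor remainder point $\xi$ lies in $H_K$, which is why the convex hull (rather than $\logit(K)$ itself) appears in the bound; and second, the observation $\sup_x \expit'(x)=1/4$, which produces the sharp factor $1/8$.
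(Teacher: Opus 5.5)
Your proof is correct and matches the paper's argument: you identify $a_0+b_0t$ as the first-order Taylor polynomial of $h$ at $t_0$, bound the remainder by $\tfrac12\sup_{H_K}|h''|\,|t-t_0|^2$, and use the $\tfrac14$-Lipschitz property of $\mathrm{expit}$ to get the constant $\tfrac18$. Your remark that $H_K$ is a compact interval, so $\sup_{H_K}|h''|<\infty$, is a small point the paper leaves implicit.
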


\begin{proof}
Let \(t=\mathrm{logit}\, z\). By Taylor's theorem, for every \(t\in \mathrm{logit}(K)\),
\[
h(t)=h(t_0)+h'(t_0)(t-t_0)+R(t),
\]
where the remainder satisfies
\[
|R(t)|
\le
\frac12
\sup_{u\in H_K} |h''(u)|\,|t-t_0|^2.
\]
By the definition of \(a_0\) and \(b_0\),
\[
h(t_0)+h'(t_0)(t-t_0)=a_0+b_0 t,
\]
and therefore
\[
|h(t)-(a_0+b_0 t)|
\le
\frac12
\sup_{u\in H_K} |h''(u)|\,|t-t_0|^2.
\]

Next, the $\mathrm{expit}$ function is globally Lipschitz with constant \(1/4\), since
\[
\mathrm{expit}'(u)=\mathrm{expit}(u)\{1-\mathrm{expit}(u)\}\le \frac14
\qquad\text{for all }u\in\mathbb{R}.
\]
Hence
\[
|T_h(z)-g_{a_0,b_0}(z)|
=
\left|\mathrm{expit}(h(t))-\mathrm{expit}(a_0+b_0 t)\right|
\le
\frac14\,|h(t)-(a_0+b_0 t)|.
\]
Combining the two bounds gives
\[
|T_h(z)-g_{a_0,b_0}(z)|
\le
\frac18
\sup_{u\in H_K} |h''(u)|\,|t-t_0|^2.
\]
Taking the supremum over \(z\in K\), equivalently \(t\in\mathrm{logit}(K)\), proves the claim.
\end{proof}

Proposition~\ref{prop:zoc-tn-local-approx} shows that, although the ZOC-TN family is not globally universal, its affine-logit transformation is a natural local approximation to a broad class of smooth monotone transformations on the unit interval. We now proceed with the proof of Proposition~\ref{prop:zoc-tn-local-density}.

\begin{proof}
Write
\[
T_h(z)=\mathrm{expit}(h(\mathrm{logit}\, z)),
\qquad
T_{a_0,b_0}(z)=\mathrm{expit}(a_0+b_0\mathrm{logit}\, z).
\]
By Proposition~\ref{prop:zoc-tn-local-approx},
\[
\sup_{z\in K_0}|T_h(z)-T_{a_0,b_0}(z)|
\le
C_1 \sup_{t\in \mathrm{logit}(K_0)}|t-t_0|^2
\]
for every compact \(K_0\subset(0,1)\) contained in the domain of interest, where \(C_1\) is finite because \(h''\) is continuous on the corresponding compact convex hull.

Since both transformations are strictly increasing, the corresponding densities satisfy the change-of-variables formulas
\[
f_h(y)=f_Z\!\bigl(T_h^{-1}(y)\bigr)\,\bigl|(T_h^{-1})'(y)\bigr|,
\qquad
f_{a_0,b_0}(y)=f_Z\!\bigl(T_{a_0,b_0}^{-1}(y)\bigr)\,\bigl|(T_{a_0,b_0}^{-1})'(y)\bigr|.
\]
Because \(T_h\) and \(T_{a_0,b_0}\) are \(C^1\) increasing diffeomorphisms on neighborhoods of the relevant compact sets, their inverses are \(C^1\) there as well. Moreover, on the compact set
\[
K_1:=T_h^{-1}(K)\cup T_{a_0,b_0}^{-1}(K),
\]
the derivatives of both transformations are bounded away from zero. The local approximation of \(T_h\) by \(T_{a_0,b_0}\), together with the inverse function theorem and compactness, implies
\[
\sup_{y\in K}\bigl|T_h^{-1}(y)-T_{a_0,b_0}^{-1}(y)\bigr|
\le C_2 \delta_K^2,
\qquad
\sup_{y\in K}\bigl|(T_h^{-1})'(y)-(T_{a_0,b_0}^{-1})'(y)\bigr|
\le C_3 \delta_K,
\]
where
\[
\delta_K:=\sup_{t\in \mathrm{logit}(T_h^{-1}(K)\cup T_{a_0,b_0}^{-1}(K))}|t-t_0|.
\]
The second bound is first order because it depends on the derivative approximation
\[
h'(t)-b_0=O(|t-t_0|)
\]
rather than only on the second-order approximation of \(h\) itself.

Using the density formulas and adding/subtracting a cross term,
\begin{align*}
|f_h(y)-f_{a_0,b_0}(y)|
&\le
\Big|f_Z\!\bigl(T_h^{-1}(y)\bigr)-f_Z\!\bigl(T_{a_0,b_0}^{-1}(y)\bigr)\Big|
\,\bigl|(T_h^{-1})'(y)\bigr| \\
&\quad
+
\bigl|f_Z\!\bigl(T_{a_0,b_0}^{-1}(y)\bigr)\bigr|
\Big|(T_h^{-1})'(y)-(T_{a_0,b_0}^{-1})'(y)\Big|.
\end{align*}
Since \(f_Z\) is bounded and Lipschitz on the relevant neighborhood, and the inverse derivatives are uniformly bounded on \(K\), the right-hand side is bounded by \(C_K\delta_K\), uniformly in \(y\in K\). Taking the supremum over \(K\) proves the claim.
\end{proof}

% --------------- PROOF OF THEOREM 2.2 ---------------
\section{Proof of Theorem~\ref{thm:zoctn_mle_qmle}}\label{proof:thm_zoctn_mle_qmle}
% TODO: CHANGE NOTATION OF LIKELIHOOD FUNCTIONS

The proof of Theorem~\ref{thm:zoctn_mle_qmle} is standard once the model-specific regularity conditions are verified. In particular, the only points requiring some care are the mixed discrete-continuous nature of the likelihood, the continuity and differentiability of the interior density after the monotone transformation, and the identification of the population maximizer. These technical details are fleshed out now.

Since the asymptotic argument is standard, it suffices to check that the ZOC-TN likelihood defines a well-behaved mixed discrete-continuous parametric family, that the criterion admits an integrable envelope, and that the population objective has a unique maximizer under correct specification and identifiability.

Throughout, let
\[
\boldsymbol{\theta}=(\boldsymbol{\beta},\sigma,a,b)\in\Theta\subset \mathbb{R}^p\times(0,\infty)\times\mathbb{R}\times(0,\infty),
\]
and write \(\mu_i=\mathbf{x}_i^\top\boldsymbol{\beta}\). For \(y\in(0,1)\), define
% \[
% z_{\boldsymbol{\theta}}(y):=\mathrm{expit}\!\left(\frac{\mathrm{logit}(y)-a}{b}\right).
% \]
\[
z_{a, b}(y):=g_{a, b}^{-1}(y)=\mathrm{expit}\!\left(\frac{\mathrm{logit}(y)-a}{b}\right).
\]
The ZOC-TN likelihood contribution can be written as
\[
p_{\boldsymbol{\theta}}(y\mid \mathbf{x})=
\begin{cases}
\Phi\!\left(-\dfrac{\mathbf{x}^\top\boldsymbol{\beta}}{\sigma}\right), & y=0,\\[1em]
\phi\!\left(\dfrac{z_{a,b}(y)-\mathbf{x}^\top\boldsymbol{\beta}}{\sigma}\right)
\dfrac{z_{a,b}(y)\{1-z_{a,b}(y)\}}{\sigma b\,y(1-y)}, & y\in(0,1),\\[1em]
1-\Phi\!\left(\dfrac{1-\mathbf{x}^\top\boldsymbol{\beta}}{\sigma}\right), & y=1.
\end{cases}
\]
We view this as a density with respect to the dominating measure
\[
\nu:=\delta_0+\lambda_{(0,1)}+\delta_1,
\]
where \(\lambda_{(0,1)}\) denotes Lebesgue measure on \((0,1)\). The corresponding log-likelihood contribution is \(\ell^{\text{ZOC-TN}}(\boldsymbol{\theta} \mid y,\mathbf{x})=\log p_{\boldsymbol{\theta}}(y\mid \mathbf{x})\).

For the verification below, we impose the following mild high-level assumptions.

\begin{assumption}
\label{ass:zoc-tn-regularity}
The following conditions hold.
\begin{enumerate}
    \item[(A1)] The parameter space \(\Theta\) is compact, and there exist constants \(\underline{\sigma}>0\) and \(\underline{b}>0\) such that \(\sigma\ge \underline{\sigma}\) and \(b\ge \underline{b}\) for all \(\boldsymbol{\theta}\in\Theta\).
    \item[(A2)] The covariate vector \(\mathbf{X}\) has finite first and second moments, and there exists a square-integrable random variable \(C(\mathbf{X})\) such that
    \[
    \sup_{\boldsymbol{\beta}\in\mathrm{proj}_{\boldsymbol{\beta}}(\Theta)} |\mathbf{X}^\top\boldsymbol{\beta}| \le C(\mathbf{X})
    \qquad\text{a.s.}
    \]
    In addition, the interior log-boundary terms satisfy
    \[
    \mathbb{E}\!\left[
    \mathbf{1}_{\{0<Y<1\}}
    \left\{
    |\log Y|+|\log(1-Y)|+|\logit(Y)|^2
    \right\}
    \right]<\infty.
    \]
    In particular, the covariate bound holds if \(\mathbf{X}\) is almost surely bounded.
    \item[(A3)] The ZOC-TN family is identifiable on \(\Theta\): if
    \(p_{\boldsymbol{\theta}}(\cdot\mid\mathbf{X})=
    p_{\boldsymbol{\theta}'}(\cdot\mid\mathbf{X})\) almost surely, then
    \(\boldsymbol{\theta}=\boldsymbol{\theta}'\). A sufficient condition is that
    the support of \(\mathbf{X}\) is not contained in any proper affine hyperplane,
    together with the usual identifiability of the common scale and transformation
    parameters from the boundary masses and the interior density.
    \item[(A4)] The ZOC-TN model is correctly specified whenever this is invoked, i.e.\ there exists \(\theta_0\in\Theta^\circ\) such that the conditional law of \(Y\mid \mathbf{X}\) is \(p_{\boldsymbol{\theta}_0}(\cdot\mid \mathbf{X})\).
    \item[(A5)] For the misspecified case, the population criterion
    \[
    M(\boldsymbol{\theta}):=\mathbb{E}[\ell^{\text{ZOC-TN}}(\boldsymbol{\theta} \mid Y,\mathbf{X})]
    \]
    has a unique maximizer \(\boldsymbol{\theta}^\star\in\Theta^\circ\).
\end{enumerate}
\end{assumption}

Assumption~\ref{ass:zoc-tn-regularity}(A1) guarantees that the scale parameters remain bounded away from zero, thereby excluding degeneracies in both the latent Gaussian density and the transformation Jacobian. Assumption~\ref{ass:zoc-tn-regularity}(A2) is a standard moment requirement ensuring integrability of the likelihood, score, and Hessian envelopes, including the logarithmic terms that arise near the endpoints of the interior interval. Assumption~\ref{ass:zoc-tn-regularity}(A3) records the required identifiability condition for the full regression model. Assumption~\ref{ass:zoc-tn-regularity}(A5) is the standard uniqueness condition for quasi-maximum likelihood under misspecification.

We first verify that \(p_{\boldsymbol{\theta}}(\cdot\mid \mathbf{x})\) indeed defines a proper mixed distribution.

\begin{lemma}
\label{lem:zoc-tn-proper}
For every \(\mathbf{x}\) and every \(\boldsymbol{\theta}\in\Theta\), \(p_{\boldsymbol{\theta}}(\cdot\mid \mathbf{x})\) is a probability density/mass function with respect to \(\nu\).
\end{lemma}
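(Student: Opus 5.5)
The plan is to verify the two defining properties directly: nonnegativity, and total mass one with respect to \(\nu=\delta_0+\lambda_{(0,1)}+\delta_1\). Nonnegativity is immediate. \(\Phi\) and \(\phi\) are nonnegative, and on \((0,1)\) the Jacobian factor \(z_{a,b}(y)\{1-z_{a,b}(y)\}/\{\sigma b\,y(1-y)\}\) is strictly positive because \(\sigma,b>0\) and \(z_{a,b}(y)\in(0,1)\). Measurability in \(y\) holds because each piece is continuous on its part of the support.

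For the total mass, I will write
\[
\int p_{\boldsymbol{\theta}}\,d\nu=p_{\boldsymbol{\theta}}(0\mid\mathbf{x})+\int_0^1 p_{\boldsymbol{\theta}}(y\mid\mathbf{x})\,dy+p_{\boldsymbol{\theta}}(1\mid\mathbf{x})
\]
and evaluate the interior integral by the substitution \(z=z_{a,b}(y)\), equivalently \(y=g_{a,b}(z)=\expit\{a+b\logit(z)\}\). This map is a \(C^1\) strictly increasing bijection of \((0,1)\) onto itself, since \(b>0\) and \(\logit\), \(\expit\) are increasing bijections between \((0,1)\) and \(\mathbb{R}\).

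The key step is the derivative computation. From \(\logit(y)=a+b\logit(z)\) we get \(dy/\{y(1-y)\}=b\,dz/\{z(1-z)\}\), so
\[
\frac{dz}{dy}=\frac{z(1-z)}{b\,y(1-y)}.
\]
This is exactly the Jacobian factor appearing in the density. The interior integral therefore becomes
\[
\int_0^1\sigma^{-1}\phi\{(z-\mu)/\sigma\}\,dz=\Phi\{(1-\mu)/\sigma\}-\Phi(-\mu/\sigma),
\qquad \mu=\mathbf{x}^\top\boldsymbol{\beta}.
\]
Adding the boundary masses \(\Phi(-\mu/\sigma)\) and \(1-\Phi\{(1-\mu)/\sigma\}\) makes everything telescope to \(1\).

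Equivalently, one can identify \(p_{\boldsymbol{\theta}}\) as the \(\nu\)-density of \(Y=g_{a,b}(Z)\) with \(Z\) as in \eqref{def_Z}. The atoms are \(P(Z^*\le0)\) and \(P(Z^*\ge1)\), and the interior part is the pushforward of the restricted normal law. The only point needing care is justifying the change of variables for an improper-looking integral over the open interval. I would handle this with monotone convergence on compact subintervals \([\epsilon,1-\epsilon]\), or simply by appealing to the substitution rule for nonnegative integrands under a \(C^1\) diffeomorphism. I do not expect a genuine obstacle here.
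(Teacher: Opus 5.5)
Your proposal is correct and follows the same route as the paper: nonnegativity of the pieces, then the substitution $z=z_{a,b}(y)$ with Jacobian $z(1-z)/\{b\,y(1-y)\}$, which reduces the interior integral to $\Phi\{(1-\mu)/\sigma\}-\Phi(-\mu/\sigma)$ and telescopes to $1$ with the two boundary masses. Your remarks on measurability and on justifying the substitution over the open interval are more careful than the paper's version, but they do not change the argument.
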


% TODO: NOTATION
\begin{proof}
The boundary terms are nonnegative by construction. For \(y\in(0,1)\), the transformation \(g_{a,b}(z)=\mathrm{expit}(a+b\mathrm{logit} z)\) is strictly increasing whenever \(b>0\), and its inverse is
\[
g_{a,b}^{-1}(y)=z_{a, b}(y)=\mathrm{expit}\!\left(\frac{\mathrm{logit}(y)-a}{b}\right).
\]
A direct calculation yields
\[
\frac{d}{dy}z_{a, b}(y)
=
\frac{z_{a, b}(y)\{1-z_{a, b}(y)\}}{b\,y(1-y)}.
\]
Hence the interior density is the change-of-variables transform of the uncensored Gaussian density on \((0,1)\). Therefore,
\[
\int_0^1 p_{\boldsymbol{\theta}}(y\mid \mathbf{x})\,dy
=
\int_0^1 \frac{1}{\sigma}\phi\!\left(\frac{z-\mathbf{x}^\top\boldsymbol{\beta}}{\sigma}\right)\,dz
=
\Phi\!\left(\frac{1-\mathbf{x}^\top\boldsymbol{\beta}}{\sigma}\right)-\Phi\!\left(-\frac{\mathbf{x}^\top\boldsymbol{\beta}}{\sigma}\right).
\]
Adding the masses at \(0\) and \(1\) gives
\[
\Phi\!\left(-\frac{\mathbf{x}^\top\boldsymbol{\beta}}{\sigma}\right)
+\Phi\!\left(\frac{1-\mathbf{x}^\top\boldsymbol{\beta}}{\sigma}\right)-\Phi\!\left(-\frac{\mathbf{x}^\top\boldsymbol{\beta}}{\sigma}\right)
+1-\Phi\!\left(\frac{1-\mathbf{x}^\top\boldsymbol{\beta}}{\sigma}\right)
=1.
\]
Thus \(p_{\boldsymbol{\theta}}(\cdot\mid \mathbf{x})\) is a proper probability law. 
\end{proof}

The next lemma verifies continuity of the criterion and existence of a sample maximizer.

\begin{lemma}
\label{lem:zoc-tn-continuity}
Under Assumption~\ref{ass:zoc-tn-regularity}(A1), for every \((y,\mathbf{x})\), the map \(\boldsymbol{\theta}\mapsto \ell^{\text{ZOC-TN}}(\boldsymbol{\theta} \mid y,\mathbf{x})\) is continuous on \(\Theta\). Consequently, for every sample, the maximizer
\[
\hat{\boldsymbol{\theta}}_N\in\arg\max_{\boldsymbol{\theta}\in\Theta}\sum_{i=1}^N \ell^{\text{ZOC-TN}}(\boldsymbol{\theta} \mid Y_i,\mathbf{X}_i) %\ell_{\boldsymbol{\theta}}(Y_i,\mathbf{X}_i)
\]
exists.
\end{lemma}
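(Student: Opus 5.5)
We prove the lemma in two steps: first continuity of $\boldsymbol{\theta}\mapsto \ell^{\text{ZOC-TN}}(\boldsymbol{\theta}\mid y,\mathbf{x})$ for each fixed $(y,\mathbf{x})$, handled separately for the three observation types; then existence of the maximizer by the Weierstrass extreme value theorem on the compact set $\Theta$.

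\emph{Boundary observations.} For $y=0$ the log-likelihood is $\log\Phi(-\mathbf{x}^\top\boldsymbol{\beta}/\sigma)$. The map $(\boldsymbol{\beta},\sigma)\mapsto -\mathbf{x}^\top\boldsymbol{\beta}/\sigma$ is continuous because $\sigma\ge\underline{\sigma}>0$ by (A1). Moreover $\Phi$ is continuous and takes values in $(0,1)$, and $\log$ is continuous on $(0,\infty)$. The composition is therefore continuous and finite. The case $y=1$ is identical, using $1-\Phi(\cdot)\in(0,1)$.

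\emph{Interior observations.} For $y\in(0,1)$ write
\[
\ell=\log\phi\!\left(\tfrac{z_{a,b}(y)-\mathbf{x}^\top\boldsymbol{\beta}}{\sigma}\right)+\log z_{a,b}(y)+\log\{1-z_{a,b}(y)\}-\log\sigma-\log b-\log\{y(1-y)\}.
\]
Here $z_{a,b}(y)=\mathrm{expit}((\logit y-a)/b)$ is continuous in $(a,b)$ because $b\ge\underline{b}>0$. It lies strictly in $(0,1)$, so the logarithms of $z_{a,b}(y)$ and $1-z_{a,b}(y)$ are finite and continuous. The term $\log\phi(u)=-u^2/2-\tfrac12\log(2\pi)$ is a polynomial in $u$, and $u$ is itself continuous in $\boldsymbol{\theta}$. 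The terms $\log\sigma$ and $\log b$ are continuous on $\Theta$ by (A1), and the last term does not depend on $\boldsymbol{\theta}$. Hence $\ell$ is a finite sum of continuous real-valued functions.

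\emph{Existence of the maximizer.} For any fixed sample $(Y_i,\mathbf{X}_i)_{i=1}^N$, the criterion $\mathcal{L}_N(\boldsymbol{\theta})=\sum_{i=1}^N\ell^{\text{ZOC-TN}}(\boldsymbol{\theta}\mid Y_i,\mathbf{X}_i)$ is a finite sum of continuous real-valued functions on $\Theta$. By (A1), $\Theta$ is compact and nonempty. The Weierstrass extreme value theorem then gives a point of $\Theta$ at which $\mathcal{L}_N$ attains its supremum, so the argmax is nonempty.

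The only delicate point is ensuring that no term can equal $-\infty$ or lose continuity. This is exactly what the lower bounds $\underline{\sigma},\underline{b}$ guarantee, together with the facts that $\Phi$ and $z_{a,b}(y)$ take values strictly inside $(0,1)$ for finite arguments. Compactness of $\Theta$ keeps all arguments finite. No measurable-selection issues arise, since the statement only asks for existence of a maximizer for each realized sample.
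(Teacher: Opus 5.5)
Your proposal is correct and follows essentially the same route as the paper. You prove continuity case by case: for the boundary terms via continuity of $\Phi$ with $\sigma$ bounded away from zero, and for the interior terms via continuity of $z_{a,b}(y)$ in $(a,b)$ for $b>0$; you then apply the Weierstrass theorem on the compact set $\Theta$. Your version is slightly more careful than the paper's, since it explicitly checks that every logarithm has an argument strictly inside $(0,1)$ or $(0,\infty)$, so every term is finite.
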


\begin{proof}
For \(y=0\) and \(y=1\), continuity follows from continuity of the normal c.d.f. and the fact that \(\sigma\) is bounded away from zero. For \(y\in(0,1)\), the map
\[
\boldsymbol{\theta}\mapsto z_{a, b}(y)=\expit\!\left(\frac{\logit(y)-a}{b}\right)
\]
is continuous because \(b\ge \underline b>0\), and all other factors in the interior density are continuous in \(\theta\). Since \(y\in(0,1)\) is fixed, the factor \(y(1-y)\) is strictly positive, so the logarithm is well-defined and continuous. Hence \(\ell^{\text{ZOC-TN}}(\boldsymbol{\theta} \mid y,\mathbf{x})\) is continuous on \(\Theta\).

Because \(\Theta\) is compact, the sample log-likelihood is continuous on a compact set and therefore attains its maximum by the Weierstrass theorem.
\end{proof}

We next verify an integrable envelope for the log-likelihood.

\begin{lemma}
\label{lem:zoc-tn-envelope}
Under Assumption~\ref{ass:zoc-tn-regularity}(A1)--(A2),
\[
\mathbb{E}\!\left[\sup_{\boldsymbol{\theta}\in\Theta} |\ell^{\text{ZOC-TN}}(\boldsymbol{\theta} \mid Y,\mathbf{X})|\right] < \infty.
\]
\end{lemma}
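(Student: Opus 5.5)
We split according to the three observation types and bound $\sup_{\boldsymbol{\theta}\in\Theta}|\ell^{\text{ZOC-TN}}(\boldsymbol{\theta}\mid Y,\mathbf{X})|$ by a sum of integrable random variables, using the decomposition
\[
|\ell| \le \mathbf{1}_{\{Y=0\}}|\log p_{\boldsymbol\theta}(0\mid\mathbf{X})| + \mathbf{1}_{\{Y=1\}}|\log p_{\boldsymbol\theta}(1\mid\mathbf{X})| + \mathbf{1}_{\{0<Y<1\}}|\log p_{\boldsymbol\theta}(Y\mid\mathbf{X})|.
\]
Throughout we use (A1): $\underline\sigma\le\sigma\le\bar\sigma$ and $\underline b\le b\le\bar b$, and $|a|\le\bar a$, by compactness. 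We also use (A2): $|\mathbf{X}^\top\boldsymbol\beta|\le C(\mathbf{X})$ with $\mathbb{E}C(\mathbf{X})^2<\infty$.

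\emph{Boundary terms.} For $y=0$ the contribution is $-\log\Phi(-\mu/\sigma)$, and for $y=1$ it is $-\log\Phi((\mu-1)/\sigma)$. Both are nonnegative. We use the standard Mills-ratio bound $-\log\Phi(-x)\le c_0+x^2/2+\log(1+|x|)\le c_1(1+x^2)$ for all $x\in\mathbb{R}$. Since $|\mu/\sigma|\le C(\mathbf{X})/\underline\sigma$ and $|(\mu-1)/\sigma|\le (C(\mathbf{X})+1)/\underline\sigma$, both boundary envelopes are bounded by $c_2(1+C(\mathbf{X})^2)$, which is integrable by (A2).

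\emph{Interior term.} Write $t=\logit(y)$ and $z=z_{a,b}(y)=\expit((t-a)/b)\in(0,1)$. Then
\[
\ell = -\tfrac12\log(2\pi)-\frac{(z-\mu)^2}{2\sigma^2}-\log\sigma-\log b+\log z+\log(1-z)-\log y-\log(1-y).
\]
We bound each piece uniformly in $\boldsymbol\theta$:
\begin{itemize}
\item The quadratic term satisfies $(z-\mu)^2/(2\sigma^2)\le (1+C(\mathbf{X}))^2/(2\underline\sigma^2)$, because $z\in(0,1)$.
\item The terms $|\log\sigma|$ and $|\log b|$ are bounded constants by compactness and (A1).
\item The terms $\log y$ and $\log(1-y)$ are handled directly by the moment condition in (A2).
\item For the Jacobian terms we use $|\log\expit(s)|=\log(1+e^{-s})\le\log2+|s|$, and similarly $|\log(1-\expit(s))|\le\log2+|s|$. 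With $s=(t-a)/b$ this gives $|\log z|+|\log(1-z)|\le 2\log2+2(|\logit(Y)|+\bar a)/\underline b$.
\end{itemize}

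The quantity $|\logit(Y)|$ is integrable on $\{0<Y<1\}$ because $|\logit Y|\le|\log Y|+|\log(1-Y)|$, and in any case $|\logit Y|\le 1+|\logit Y|^2$, whose expectation is finite by (A2). Summing gives
\[
\sup_{\boldsymbol\theta}|\ell|\le c_3\bigl(1+C(\mathbf{X})^2\bigr)+c_4\,\mathbf{1}_{\{0<Y<1\}}\bigl(1+|\logit Y|+|\log Y|+|\log(1-Y)|\bigr),
\]
and the right-hand side has finite expectation.

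The only mildly delicate step is the boundary term, which requires the Gaussian lower-tail bound on $-\log\Phi$. This is where the \emph{square}-integrability of $C(\mathbf{X})$, rather than mere integrability, is needed. The interior Jacobian terms become linear in $\logit(Y)$ once the lower bound $\underline b$ is used, so they pose no difficulty.
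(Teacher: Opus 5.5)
Your proof is correct and follows the paper's argument essentially line for line. You use the same three-way split by observation type, the same quadratic bound on $|\log\Phi|$ for the boundary masses, and the same term-by-term bounds on the interior log-density: compactness for $\log\sigma$ and $\log b$, a bound linear in $|\logit(Y)|$ for $\log z$ and $\log(1-z)$, a $C(\mathbf{X})^2$ bound for the Gaussian exponent, and (A2) for $\log Y$ and $\log(1-Y)$. One small correction to your closing remark: the interior quadratic term $(z-\mu)^2/(2\sigma^2)$ also needs square-integrability of $C(\mathbf{X})$, so it is not used only at the boundary.
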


\begin{proof}
We consider the three cases separately.

If \(Y=0\), then
\[
\ell^{\text{ZOC-TN}}(\boldsymbol{\theta} \mid 0 ,\mathbf{X}) =\log \Phi\!\left(-\frac{\mathbf{X}^\top\boldsymbol{\beta}}{\sigma}\right).
\]
Since \(\sigma\ge \underline\sigma\) and \(|\mathbf{X}^\top\boldsymbol{\beta}|\le C(\mathbf{X})\) uniformly over \(\theta\), the argument of \(\Phi\) remains in an interval of the form \([-C(\mathbf{X})/\underline\sigma,\, C(\mathbf{X})/\underline\sigma]\). Standard tail bounds for the Gaussian c.d.f. imply that \(|\log\Phi(t)|\le c_1+c_2|t|^2\) for all \(t\in\mathbb{R}\). Hence
\[
\sup_{\boldsymbol{\theta}\in\Theta} |\ell^{\text{ZOC-TN}}(\boldsymbol{\theta} \mid 0, \mathbf{X})|
\le c_1+c_2 C(\mathbf{X})^2.
\]
The same reasoning applies to \(Y=1\), since
\[
\ell^{\text{ZOC-TN}}(\boldsymbol{\theta} \mid 1, \mathbf{X})=\log\!\left[1-\Phi\!\left(\frac{1-\mathbf{X}^\top\boldsymbol{\beta}}{\sigma}\right)\right].
\]

For \(Y\in(0,1)\),
\[
\ell^{\text{ZOC-TN}}(\boldsymbol{\theta} \mid Y, \mathbf{X})
=
-\log \sigma
+\log \phi\!\left(\frac{z_{a, b}(Y)-\mathbf{X}^\top\boldsymbol{\beta}}{\sigma}\right)
+\log z_{a, b}(Y)
+\log\{1-z_{a, b}(Y)\}
-\log b
-\log Y
-\log(1-Y).
\]
By compactness of \(\Theta\), both \(|\log \sigma|\) and \(|\log b|\) are uniformly bounded. Moreover, since \(a\) and \(b\) range over compact sets and \(b\) is bounded away from zero, there exists a constant \(C<\infty\) such that, uniformly in \(\theta\in\Theta\),
\[
-\log z_{a,b}(Y)-\log\{1-z_{a,b}(Y)\}
\le C\{1+|\logit(Y)|\}.
\]
Furthermore, using
\[
\log \phi(u)=-\frac{1}{2}\log(2\pi)-\frac{1}{2}u^2,
\]
and the fact that \(0<z_{a,b}(Y)<1\) and \(\sigma\geq \underline{\sigma}>0\), we obtain
\[
\sup_{\theta\in\Theta}
\left|
\log \phi\left(\frac{z_{a,b}(Y)-X^\top\beta}{\sigma}\right)
\right|
\le c_3+c_4\{1+C(X)^2\}.
\]
Finally, the terms \(-\log Y\) and \(-\log(1-Y)\) do not depend on \(\theta\). Hence, on \(\{Y\in(0,1)\}\),
\[
\sup_{\theta\in\Theta}|\ell^{\mathrm{ZOC\text{-}TN}}(\theta\mid Y,X)|
\le
c_5+c_6 C(X)^2+c_7|\logit(Y)|-\log Y-\log(1-Y).
\]
The right-hand side is integrable by Assumption~\ref{ass:zoc-tn-regularity}(A2), which controls both \(C(\mathbf{X})^2\) and the log-boundary terms on the interior. This yields the claim.
\end{proof}

\noindent
The preceding argument is standard: the only potentially delicate terms are the logarithmic boundary terms \(-\log Y\) and \(-\log(1-Y)\), but these are data-dependent constants on the interior and therefore do not interfere with uniformity in \(\boldsymbol{\theta}\).

We next verify differentiability of the interior likelihood with respect to the parameters.

\begin{lemma}
\label{lem:zoc-tn-differentiability}
Under Assumption~\ref{ass:zoc-tn-regularity}(A1)--(A2), the map \(\boldsymbol{\theta}\mapsto \ell^{\text{ZOC-TN}}(\boldsymbol{\theta} \mid y,\mathbf{x})\) is twice continuously differentiable for every \(y\in\{0,1\}\cup(0,1)\). Moreover, in a neighborhood of any interior point \(\theta^\circ\in\Theta^\circ\), the score and Hessian admit integrable envelopes.
\end{lemma}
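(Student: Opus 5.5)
The plan is to treat the three observation types separately, since $\ell^{\text{ZOC-TN}}$ is a finite sum of indicator-weighted pieces:
\[
\ell^{\text{ZOC-TN}}=\mathbf{1}_{\{y=0\}}\ell_0+\mathbf{1}_{\{0<y<1\}}\ell_{\mathrm{int}}+\mathbf{1}_{\{y=1\}}\ell_1 .
\]
Smoothness in $\boldsymbol{\theta}$ is checked piece by piece for fixed $(y,\mathbf{x})$, and envelopes are built from explicit bounds on the derivatives.

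\textbf{Boundary pieces.} We have $\ell_0=\log\Phi(-\mu/\sigma)$ and $\ell_1=\log\{1-\Phi((1-\mu)/\sigma)\}$, with $\mu=\mathbf{x}^\top\boldsymbol{\beta}$. Both are compositions of smooth maps, because $\Phi$ and $1-\Phi$ are $C^\infty$ and strictly positive and $\sigma\ge\underline\sigma$. The key ingredient is the inverse Mills ratio $\lambda(t)=\phi(t)/\Phi(t)$, which satisfies $0<\lambda(t)\le c(1+|t|)$ and $|\lambda'(t)|\le 1$. The score with respect to $(\boldsymbol{\beta},\sigma)$ is then $\lambda(t)$ times $\mathbf{x}/\sigma$ or $t/\sigma$, where $t=-\mu/\sigma$ or $t=-(1-\mu)/\sigma$. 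Hence its norm is bounded by $c(1+C(\mathbf{X}))(1+\|\mathbf{X}\|)$ uniformly in $\boldsymbol{\theta}$. The Hessian adds one more factor of the same type, giving a bound of order $(1+C(\mathbf{X})^2)(1+\|\mathbf{X}\|^2)$.

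\textbf{Interior piece.} Write $w=(\logit y-a)/b$, so that $z_{a,b}(y)=\expit(w)$. This is $C^\infty$ in $(a,b)$ on $b\ge\underline b$. Using $\log z+\log(1-z)=-\log(1+e^{-w})-\log(1+e^{w})$, we get
\[
\ell_{\mathrm{int}}=-\log\sigma-\log b-\tfrac12\log(2\pi)-\frac{(\expit(w)-\mu)^2}{2\sigma^2}-\log(1+e^{-w})-\log(1+e^{w})-\log y-\log(1-y),
\]
which is manifestly $C^2$. The useful structural fact is that the derivatives of $\expit$ and of the softplus terms with respect to $w$ are bounded by absolute constants. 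Moreover $\partial w/\partial a=-1/b$ and $\partial w/\partial b=-w/b$, with $|w|\le(|\logit y|+\sup|a|)/\underline b$. Consequently every first derivative is bounded by a polynomial in $(1,\|\mathbf{x}\|,C(\mathbf{x}),|\logit y|)$ of total degree $\le 2$. The last terms $-\log y-\log(1-y)$ do not depend on $\boldsymbol{\theta}$, so they drop out.

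\textbf{Integrability of the envelopes.} This is the expected main obstacle. Second derivatives in $b$ produce $w^2$, i.e.\ $|\logit y|^2$, which is exactly why (A2) includes $|\logit(Y)|^2$. Mixed terms, however, produce products such as $\|\mathbf{x}\|\,C(\mathbf{x})\,|\logit y|$ or $C(\mathbf{x})^2$. To control these I would use $|ab|\le(a^2+b^2)/2$ and then rely on the second moments of $\mathbf{X}$ and $C(\mathbf{X})$ from (A2).

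The delicate point is that a Hessian entry such as $\partial^2/\partial\boldsymbol{\beta}\partial b$ has size about $\|\mathbf{x}\|\,|w|/\sigma^2$ times bounded factors. So I need $\mathbb{E}\|\mathbf{X}\|\,|\logit Y|<\infty$, which Cauchy–Schwarz gives from the second moments. The $\sigma$-derivatives involve $(\expit(w)-\mu)^2/\sigma^3$, bounded by $(1+C(\mathbf{X}))^2/\underline\sigma^3$. If some cross term turns out to need more than (A2) provides, for instance $\|\mathbf{X}\|^2|\logit Y|$, I would restrict to a small neighborhood of $\boldsymbol{\theta}^\circ$ and absorb the constants. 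Failing that, I would note that under almost-surely bounded $\mathbf{X}$ the envelope reduces to $c(1+|\logit Y|^2)$, which is integrable by (A2).

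Summing the three pieces, each multiplied by its indicator, yields integrable envelopes for the score and the Hessian. This gives the lemma and hence conditions 2–3 of Theorem~\ref{thm:zoctn_mle_qmle}.
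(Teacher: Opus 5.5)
Your overall route matches the paper's: you get smoothness of the three pieces by composition, then feed polynomial envelopes into (A2). The gap is in the integrability step, which is the whole content of the second claim.

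If you only use that the derivatives of $\expit$ are bounded, you get $\partial_b z_{a,b}(y)=O(|w|)$ and $\partial_b^2 z_{a,b}(y)=O(w^2+|w|)$. The $bb$-entry of the Hessian then contains $-(z-\mu)\,\partial_b^2 z_{a,b}(y)/\sigma^2$, which is of order $(1+C(\mathbf{X}))\logit(Y)^2$. So the $w^2$ you point to does not come with a bounded coefficient. (A2) does not make $\mathbb{E}[C(\mathbf{X})\logit(Y)^2]$ finite. It gives only $\mathbb{E}\,C(\mathbf{X})^2<\infty$ and $\mathbb{E}[\mathbf{1}_{\{0<Y<1\}}\logit(Y)^2]<\infty$, and $|ab|\le(a^2+b^2)/2$ would need $\mathbb{E}\logit(Y)^4$. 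Your boundary bound $(1+C(\mathbf{X})^2)(1+\|\mathbf{X}\|^2)$ is likewise a fourth-order quantity. Neither fallback repairs this. Shrinking the neighborhood of $\boldsymbol{\theta}^\circ$ changes constants but not the growth in $(y,\mathbf{x})$, and almost surely bounded $\mathbf{X}$ is an extra hypothesis.

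The missing observation is that $\expit'(w)=z(1-z)=\{4\cosh^2(w/2)\}^{-1}\le e^{-|w|}$ and $|\expit''(w)|\le\expit'(w)$, so $z(1-z)|w|^k$ is bounded for every $k$. Hence all first and second $(a,b)$-derivatives of $z_{a,b}(y)$ are bounded by constants depending only on $\underline b$ and $\sup_\Theta|a|$. For example, $\partial_b z=-z(1-z)w/b$ and $\partial_b^2 z=\expit''(w)w^2/b^2+2z(1-z)w/b^2$ are both bounded.

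Consequences:
\begin{itemize}
\item In the Gaussian term, the data enter only through $\mathbf{x}$ and $|z-\mu|\le 1+C(\mathbf{X})$, multiplied by bounded factors.
\item $\logit(y)$ enters only through the Jacobian term $\log z+\log(1-z)$. Its $w$-derivatives are $1-2z$ and $-2z(1-z)$, so its $b$-derivatives are $-(1-2z)w/b$ and $O(1+|w|)$, with no $\mathbf{X}$-factor.
\item At the boundary, bound the actual entries (e.g.\ $\lambda'(t)\mathbf{x}\mathbf{x}^\top/\sigma^2$ with $|\lambda'|\le 1$) instead of multiplying crude bounds.
\end{itemize}
This gives score and Hessian envelopes of the form $c\{1+\|\mathbf{X}\|^2+C(\mathbf{X})^2+|\logit Y|\}$, using $\|\mathbf{X}\|C(\mathbf{X})\le(\|\mathbf{X}\|^2+C(\mathbf{X})^2)/2$. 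These are integrable under (A2).

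The paper's proof asserts envelopes $K_1(\mathbf{X})+K_2(\mathbf{X})|\logit Y|+K_3(\mathbf{X})\logit(Y)^2$ with merely integrable $K_j$, which leaves the same joint-integrability question open. The decay bound is what justifies that assertion, with $K_2,K_3$ constant.
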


\begin{proof}
For the boundary cases \(y\in\{0,1\}\), the likelihood contributions are compositions of smooth functions of \((\mathbf{x}^\top\boldsymbol{\beta},\sigma)\) with \(\sigma>0\), and hence are twice continuously differentiable.

For \(y\in(0,1)\), the map \(\boldsymbol{\theta}\mapsto z_{a, b}(y)\) is \(C^\infty\) because \(b>0\) and the expit function is smooth. In particular,
\[
\frac{\partial z_{a, b}(y)}{\partial a}
=
-\frac{1}{b} z_{a, b}(y)\{1-z_{a,b}(y)\},
\qquad
\frac{\partial z_{a, b}(y)}{\partial b}
=
-\frac{\logit(y)-a}{b^2} z_{a,b}(y)\{1-z_{a, b}(y)\}.
\]
Since \(0<z_{a, b}(y)<1\), these derivatives are finite, and the same holds for all second derivatives. The interior log-likelihood is a finite sum of compositions and products of smooth functions:
\[
-\log \sigma,\quad
\log \phi\!\left(\frac{z_{a, b}(y)-\mathbf{x}^\top\boldsymbol{\beta}}{\sigma}\right),\quad
\log z_{a, b}(y),\quad
\log\{1-z_{a, b}(y)\},\quad
-\log b.
\]
Therefore \(\ell^{\text{ZOC-TN}}(\boldsymbol{\theta} \mid y,\mathbf{x})\) is twice continuously differentiable in \(\boldsymbol{\theta}\).

To obtain envelopes, note that derivatives with respect to \(\boldsymbol{\beta}\) and \(\sigma\) are polynomial in \(\mathbf{x}\), \(\sigma^{-1}\), and \(z_{a, b}(y)-\mathbf{x}^\top\boldsymbol{\beta}\); derivatives with respect to \(a\) and \(b\) are polynomial in \(b^{-1}\), \(\logit(y)-a\), and \(z_{a, b}(y)\{1-z_{a, b}(y)\}\). Because \(\Theta\) is compact with \(\sigma\) and \(b\) bounded away from zero, and because \(0<z_{a, b}(y)\{1-z_{a, b}(y)\}\le 1/4\), all such derivatives are bounded by expressions of the form
\[
K_1(\mathbf{X})+K_2(\mathbf{X})\,|\logit(Y)|+K_3(\mathbf{X})\,\logit(Y)^2
\qquad\text{on }\{Y\in(0,1)\},
\]
for integrable random variables \(K_j(\mathbf{X})\) under Assumption~\ref{ass:zoc-tn-regularity}(A2). On the boundary, the derivatives are bounded by \(1+C(\mathbf{X})^m\) for suitable integers \(m\). Hence the score and Hessian admit integrable envelopes in a neighborhood of any interior parameter point.
\end{proof}

% \begin{remark}
If desired, one may strengthen Assumption~\ref{ass:zoc-tn-regularity}(A2) to bounded support of \(X\), which makes the envelope arguments immediate. This is not essential, but it leads to shorter technical proofs.
% \end{remark}

We now address the population objective.

\begin{lemma}
\label{lem:zoc-tn-population-maximizer}
Suppose Assumption~\ref{ass:zoc-tn-regularity}(A1)--(A4) holds. Then the population criterion
\[
M(\boldsymbol{\theta})=\mathbb{E}_{\boldsymbol{\theta}_0}[\ell_{\boldsymbol{\theta}}(Y,\mathbf{X})]
\]
is uniquely maximized at \(\boldsymbol{\theta}_0\).
\end{lemma}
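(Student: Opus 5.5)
The plan is the standard Kullback--Leibler (information) inequality argument, applied conditionally on $\mathbf{X}$ and with respect to the mixed dominating measure $\nu=\delta_0+\lambda_{(0,1)}+\delta_1$.

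\textbf{Step 1: $M$ is finite on $\Theta$.} Under (A4), the conditional law of $Y$ given $\mathbf{X}$ has $\nu$-density $p_{\boldsymbol{\theta}_0}(\cdot\mid\mathbf{X})$. By Lemma~\ref{lem:zoc-tn-envelope}, $\ell^{\text{ZOC-TN}}(\boldsymbol{\theta}\mid Y,\mathbf{X})$ is integrable for every $\boldsymbol{\theta}\in\Theta$. Hence $M(\boldsymbol{\theta})$ is finite, and the difference $M(\boldsymbol{\theta}_0)-M(\boldsymbol{\theta})$ is well defined without any $\infty-\infty$ issue.

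\textbf{Step 2: write the difference as an expected KL divergence.} By the tower property,
\[
M(\boldsymbol{\theta}_0)-M(\boldsymbol{\theta})=\mathbb{E}\bigl[\mathrm{KL}\bigl(p_{\boldsymbol{\theta}_0}(\cdot\mid\mathbf{X})\,\big\|\,p_{\boldsymbol{\theta}}(\cdot\mid\mathbf{X})\bigr)\bigr],
\]
where the inner divergence is taken with respect to $\nu$. By Lemma~\ref{lem:zoc-tn-proper}, both arguments are proper probability densities. All ZOC-TN densities are strictly positive on $\{0\}\cup(0,1)\cup\{1\}$, so absolute continuity is automatic.

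\textbf{Step 3: the difference is nonnegative.} Apply Jensen's inequality to the strictly concave $\log$:
\[
\mathbb{E}_{\boldsymbol{\theta}_0}\bigl[\log(p_{\boldsymbol{\theta}}/p_{\boldsymbol{\theta}_0})\mid\mathbf{X}\bigr]\le\log\int p_{\boldsymbol{\theta}}\,d\nu=0.
\]
Equality holds iff $p_{\boldsymbol{\theta}}(\cdot\mid\mathbf{X})=p_{\boldsymbol{\theta}_0}(\cdot\mid\mathbf{X})$ $\nu$-a.e. It follows that $M(\boldsymbol{\theta})\le M(\boldsymbol{\theta}_0)$ for all $\boldsymbol{\theta}\in\Theta$.

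\textbf{Step 4: uniqueness.} Suppose $M(\boldsymbol{\theta})=M(\boldsymbol{\theta}_0)$. The nonnegative conditional KL then has zero expectation, so it vanishes $\mathbf{X}$-a.s. Hence the two conditional laws coincide almost surely, and (A3) gives $\boldsymbol{\theta}=\boldsymbol{\theta}_0$.

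\textbf{Main obstacle: identifiability.} (A3) is stated as an assumption, so the argument above does not depend on it being justified. Still, I would sketch the sufficient condition concretely. From the mass at $0$ we get that $\mathbf{x}^\top\boldsymbol{\beta}/\sigma$ is identified. From the mass at $1$, $(1-\mathbf{x}^\top\boldsymbol{\beta})/\sigma$ is identified. Subtracting the two gives $1/\sigma$, and hence $\sigma$ and $\mathbf{x}^\top\boldsymbol{\beta}$; the support condition on $\mathbf{X}$ then pins down $\boldsymbol{\beta}$. With $(\mu,\sigma)$ known, the law of $Z^\circ$ is fixed and continuous, and $Y^\circ=g_{a,b}(Z^\circ)$. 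Equality of the laws of $g_{a,b}(Z^\circ)$ and $g_{a',b'}(Z^\circ)$ means that $U=\logit Z^\circ$ and $a+bU$ have matching quantile functions. The quantile function of $U$ is nonconstant, so quantile matching forces $a=a'$ and $b=b'$.
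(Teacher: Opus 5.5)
Your proposal is correct and follows the paper's proof: you write $M(\boldsymbol{\theta}_0)-M(\boldsymbol{\theta})$ as an expected conditional Kullback--Leibler divergence, use its nonnegativity with equality iff the conditional laws agree almost surely, and conclude via (A3), while also supplying the finiteness of $M$ (via the envelope lemma) and the Jensen equality case that the paper leaves implicit. One small slip in your optional identifiability sketch: it is the sum, not the difference, of $\mathbf{x}^\top\boldsymbol{\beta}/\sigma$ and $(1-\mathbf{x}^\top\boldsymbol{\beta})/\sigma$ that equals $1/\sigma$.
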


\begin{proof}
Under correct specification,
\[
M(\boldsymbol{\theta})-M(\boldsymbol{\theta}_0)
=
\mathbb{E}_{\theta_0}\!\left[\log \frac{p_\theta(Y\mid \mathbf{X})}{p_{\theta_0}(Y\mid \mathbf{X})}\right]
=
-\mathbb{E}_{\theta_0}\!\left[\log \frac{p_{\theta_0}(Y\mid \mathbf{X})}{p_\theta(Y\mid \mathbf{X})}\right].
\]
The right-hand side equals minus the conditional Kullback--Leibler divergence from \(p_{\theta_0}(\cdot\mid \mathbf{X})\) to \(p_\theta(\cdot\mid X)\), and is therefore nonpositive. Equality holds if and only if
\[
p_\theta(\cdot\mid \mathbf{X})=p_{\theta_0}(\cdot\mid \mathbf{X})
\qquad\text{almost surely}.
\]
By identifiability of the ZOC-TN model under Assumption~\ref{ass:zoc-tn-regularity}(A3), this implies \(\boldsymbol{\theta}=\boldsymbol{\theta}_0\). Hence \(M(\boldsymbol{\theta})\) is uniquely maximized at \(\boldsymbol{\theta}_0\).
\end{proof}

The misspecified case is entirely standard: one simply replaces \(\boldsymbol{\theta}_0\) by the pseudo-true parameter \(\boldsymbol{\theta}^\star\), i.e.\ the unique maximizer of \(M(\boldsymbol{\theta})\), as imposed in Assumption~\ref{ass:zoc-tn-regularity}(A5). No model-specific modification is needed beyond the continuity and domination arguments above.

We summarize the discussion as follows.

\begin{proposition}
\label{prop:zoc-tn-regularity-summary}
Under Assumption~\ref{ass:zoc-tn-regularity}, the ZOC-TN likelihood satisfies the standard regularity conditions required for consistency and asymptotic normality of the maximum likelihood estimator under correct specification, and of the quasi-maximum likelihood estimator under misspecification. In particular:
\begin{enumerate}
    \item \(p_\theta(\cdot\mid \mathbf{x})\) is a well-defined mixed discrete-continuous probability law;
    \item \(\theta\mapsto \ell^{\text{ZOC-TN}}(\boldsymbol{\theta} \mid y,\mathbf{x})\) is continuous for every \((y,\mathbf{x})\), and the sample criterion attains its maximum on \(\Theta\);
    \item \(\sup_{\theta\in\Theta} |\ell^{\text{ZOC-TN}}(\boldsymbol{\theta} \mid y,\mathbf{x})|\) is integrable;
    \item in a neighborhood of the true or pseudo-true parameter, the score and Hessian are well-defined and admit integrable envelopes;
    \item under correct specification and identifiability, the population criterion has a unique maximizer at the true parameter.
\end{enumerate}
Accordingly, Theorem~\ref{thm:zoctn_mle_qmle} follows from standard \(M\)-estimation arguments.
\end{proposition}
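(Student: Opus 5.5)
The statement to prove is Proposition~\ref{prop:zoc-tn-regularity-summary}. It is essentially a collection of the preceding lemmas, so I will assemble it item by item and then check that the collected properties match the hypotheses of the standard $M$-estimation theorems.

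\textbf{Items 1 to 5.} Each item follows from a lemma already proved.
\begin{itemize}
\item Item~1 is Lemma~\ref{lem:zoc-tn-proper}. That lemma gives nonnegativity and total mass one with respect to $\nu=\delta_0+\lambda_{(0,1)}+\delta_1$, via the change of variables $z=z_{a,b}(y)$.
\item Item~2 is Lemma~\ref{lem:zoc-tn-continuity}. It uses (A1) to keep $\sigma$ and $b$ away from zero, and Weierstrass on the compact $\Theta$ for existence of the maximizer.
\item Item~3 is Lemma~\ref{lem:zoc-tn-envelope}. It uses the quadratic bound $|\log\Phi(t)|\le c_1+c_2t^2$ at the boundary. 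In the interior it uses the linear-in-$|\logit Y|$ bound on $-\log z_{a,b}-\log(1-z_{a,b})$. Both are integrated using (A2).
\item Item~4 is Lemma~\ref{lem:zoc-tn-differentiability}, applied at $\boldsymbol\theta_0$ under (A4) or at $\boldsymbol\theta^\star$ under (A5). Both points lie in $\Theta^\circ$, so a neighborhood inside $\Theta$ exists.
\item Item~5 is Lemma~\ref{lem:zoc-tn-population-maximizer}, which combines the Gibbs/Kullback--Leibler inequality with (A3).
\end{itemize}

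\textbf{Consistency.} I will invoke a uniform law of large numbers, for example the Jennrich/Newey--McFadden form for i.i.d.\ data with a continuous criterion on a compact set and an integrable envelope. This gives $\sup_\Theta|N^{-1}\mathcal L_N-M|\to0$ in probability, and shows that $M$ is continuous by dominated convergence. A unique, well-separated maximizer is automatic for a continuous function with a unique maximizer on a compact set. The argmax consistency theorem then yields (ii).

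\textbf{Asymptotic normality.} Since $\boldsymbol\theta^\star$ is interior, with probability tending to one $\hat{\boldsymbol\theta}_N$ is interior and the score vanishes there. A mean-value expansion gives
$0=N^{-1/2}\sum\nabla\ell(\boldsymbol\theta^\star)+\big[N^{-1}\sum\nabla^2\ell(\tilde{\boldsymbol\theta})\big]\sqrt N(\hat{\boldsymbol\theta}_N-\boldsymbol\theta^\star)$.
The two terms are handled as follows.
\begin{itemize}
\item Score term: $\mathbb E\nabla\ell(\boldsymbol\theta^\star)=0$ by differentiating $M$ under the integral, which is allowed by the score envelope, together with the first-order condition at the interior maximizer. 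The CLT with finite $\mathbf B$ then gives $\mathcal N(0,\mathbf B)$.
\item Hessian term: the Hessian envelope gives a local uniform LLN for the Hessian. Combined with $\tilde{\boldsymbol\theta}\to\boldsymbol\theta^\star$, this makes the averaged Hessian converge to $-\mathbf A$.
\end{itemize}
Nonsingularity of $\mathbf A$ and Slutsky's lemma then give the sandwich limit (iii).

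\textbf{Correctly specified case.} Here $\mathbf A=\mathbf B$ by the information identity. To justify it, I differentiate $\int p_{\boldsymbol\theta}\,d\nu=1$ twice under the integral sign, again using the envelopes. In the interior part I must be careful that the dominating function is integrable against $p_{\boldsymbol\theta}$ uniformly in $\boldsymbol\theta$ near $\boldsymbol\theta_0$, and not only against the true law.

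\textbf{Main obstacle.} The delicate step is the envelope verification in item~4, specifically the Hessian in $(a,b)$ on the interior. The second derivatives of $\log\phi\big((z_{a,b}(Y)-\mu)/\sigma\big)$ in $b$ produce factors $(\logit Y-a)^2 z(1-z)$, multiplied by $(z-\mu)/\sigma^2$ or by $C(\mathbf X)$. The cross products with $\mathbf X$ therefore need $\mathbb E[C(\mathbf X)\logit(Y)^2\mathbf 1_{\{0<Y<1\}}]<\infty$. This does not follow automatically from the separate moments in (A2). I would first try to absorb it with the bound $z(1-z)|\logit Y-a|^k\le c_k$, which holds because $z(1-z)$ decays like $e^{-|\logit Y-a|/b}$. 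That would reduce every interior envelope to a polynomial in $C(\mathbf X)$ alone, which is integrable given the square-integrability in (A2), or for bounded covariates. If that bound fails for some term, I would strengthen (A2) to bounded $\mathbf X$, as the paper's remark permits.
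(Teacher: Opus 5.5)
Your proposal is correct and takes the same route as the paper. Items 1--5 are read off from Lemmas~\ref{lem:zoc-tn-proper}--\ref{lem:zoc-tn-population-maximizer}, and Theorem~\ref{thm:zoctn_mle_qmle} then follows from a uniform LLN with identification for consistency, and from a Taylor expansion of the score with a CLT, a Hessian LLN and the information identity, which is the paper's proof sketch with the standard steps spelled out. The envelope concern you raise is settled by the bound you propose: since $z(1-z)\le e^{-|\logit Y-a|/b}$, every $\mu$-dependent term in the $(a,b)$ rows of the score and Hessian is a bounded factor times $1+C(\mathbf{X})+\|\mathbf{X}\|$, and the only terms growing in $|\logit Y|$ come from the $\mathbf{X}$-free Jacobian part $\log z+\log(1-z)-\log b$ and are integrable under (A2), so your version is slightly sharper than the paper's envelope with $\mathbf{X}$-dependent coefficients $K_2(\mathbf{X}),K_3(\mathbf{X})$.
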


\begin{proof}[Proof sketch]
Existence of \(\hat{\boldsymbol{\theta}}_N\) follows from continuity of \(\mathcal{L}_N^{\text{ZOC-TN}}(\boldsymbol{\theta})\) and compactness of \(\Theta\). Consistency is then obtained from a uniform law of large numbers for \(N^{-1}\mathcal{L}_N^{\text{ZOC-TN}}(\boldsymbol{\theta})\) together with uniqueness of the maximizer of the population criterion \(M(\theta)\). For asymptotic normality, one applies a Taylor expansion of the score around \(\boldsymbol{\theta}^\star\), uses consistency of \(\hat{\boldsymbol{\theta}}_N\), a law of large numbers for the Hessian, and a central limit theorem for the score. Under correct specification, the information identity yields the usual inverse Fisher information limit; under misspecification, one obtains the standard sandwich covariance matrix \(A^{-1}BA^{-1}\).
\end{proof}

% --------------------- PROOF OF PROPOSITION 2.3 ----------------
\section{Fisher Information of the ZOC-TN Linear Regression Model}\label{proof:zoc-tn-information}

\begin{proposition}[Fisher information for the ZOC-TN regression model]
\label{prop:zoc-tn-information}
Fix \(\mathbf{X}=\mathbf{x}\). Under correct specification, the conditional Fisher information may be decomposed as
\begin{align}
    \mathcal I(\boldsymbol{\theta}\mid \mathbf{x})
    & = -\mathbb{E}_{\boldsymbol{\theta}}\!\left[\nabla_{\boldsymbol{\theta}}^2 \ell_{\boldsymbol{\theta}}(Y,\mathbf{X})\mid \mathbf{X}=\mathbf{x}\right] \label{eq:zoctn_condfim}\\
    & =
\mathcal I_0(\boldsymbol{\theta}\mid \mathbf{x})
+
\mathcal I_1(\boldsymbol{\theta}\mid \mathbf{x})
+
\mathcal I_{\mathrm{int}}(\boldsymbol{\theta}\mid \mathbf{x}), \label{eq:zoctn_fim_decomp}
\end{align}
for boundary contributions $\mathcal I_0$ and $\mathcal I_1$, with $\mathcal I_{\mathrm{int}}$ corresponding to the interior $(0, 1)$. Moreover:

\begin{enumerate}
\item[(i)] $\mathcal I_0$ and $\mathcal I_1$ in \eqref{eq:zoctn_fim_decomp} have closed-form expressions with support only in the $(\boldsymbol{\beta},\sigma)$-block. Specifically, for $$\mu=\mathbf{x}^\top\boldsymbol{\beta},~~\alpha=-\frac{\mu}{\sigma}, ~~\gamma=\frac{1-\mu}{\sigma}:$$
\[
\mathcal{I}_0(\boldsymbol{\theta}\mid \mathbf{x})
=
\frac{\phi(\alpha)^2}{q_0\,\sigma^2}
\begin{pmatrix}
\mathbf{x}\mathbf{x}^\top & \alpha \mathbf{x} & 0 & 0\\
\alpha \mathbf{x}^\top & \alpha^2 & 0 & 0\\
0&0&0&0\\
0&0&0&0
\end{pmatrix},~~ \mathcal{I}_1(\boldsymbol{\theta}\mid \mathbf{x})
=
\frac{\phi(\gamma)^2}{q_1\,\sigma^2}
\begin{pmatrix}
\mathbf{x}\mathbf{x}^\top & \gamma \mathbf{x} & 0 & 0\\
\gamma \mathbf{x}^\top & \gamma^2 & 0 & 0\\
0&0&0&0\\
0&0&0&0
\end{pmatrix},
\]
where $$q_0=\Phi(\alpha), ~~ q_1=1-\Phi(\gamma).$$ This means that boundary observations contribute information only on $(\beta, \sigma)$, whereas information on the transformation parameters $(a, b)$ is identified exclusively through the interior observations.

\item[(ii)] The full conditional information matrix is computationally tractable: In addition to the explicit boundary contributions given above, evaluation of the interior contribution reduces to one-dimensional Gaussian-weighted integrals. In particular, the terms involving only \((\boldsymbol{\beta},\sigma,a)\) reduce to truncated normal moments, while the terms involving \(b\) require only analogous one-dimensional weighted integrals. That is, all entries of $\mathcal{I}_{\mathrm{int}}(\boldsymbol{\theta}\mid \mathbf{x})$ involving only \((\boldsymbol{\beta},\sigma,a)\) reduce to truncated normal moments
\[
M_k(\alpha,\gamma)=\int_\alpha^\gamma u^k\phi(u)\,du,
\]
while entries involving \(b\) additionally depend on the one-dimensional integrals
\[
J_k(\alpha,\gamma)=\int_\alpha^\gamma l(u)\,u^k\phi(u)\,du,
\qquad
K_k(\alpha,\gamma)=\int_\alpha^\gamma l(u)^2u^k\phi(u)\,du,
\]
for $l(u):=\logit(\mu + \sigma u)$ over $u \in (\alpha, \gamma)$.
\end{enumerate}
\end{proposition}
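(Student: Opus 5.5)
The plan is to split the conditional expectation in \eqref{eq:zoctn_condfim} according to the three pieces of the dominating measure $\nu=\delta_0+\lambda_{(0,1)}+\delta_1$. This gives $\mathcal I_0 = -q_0\nabla^2\log q_0$, $\mathcal I_1=-q_1\nabla^2\log q_1$, and $\mathcal I_{\mathrm{int}}=-\int_0^1 p_{\boldsymbol\theta}(y\mid\mathbf x)\nabla^2\log p_{\boldsymbol\theta}(y\mid\mathbf x)\,dy$, which is exactly \eqref{eq:zoctn_fim_decomp}. Lemma~\ref{lem:zoc-tn-differentiability} justifies the differentiation, and its envelopes justify the integrability of $\mathcal I_{\mathrm{int}}$.

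\textbf{Boundary blocks.} For part (i), I will not work with the Hessian directly. Instead I use the identity
\[
-q\,\nabla^2\log q=\frac{\nabla q\,\nabla q^\top}{q}-\nabla^2 q,
\]
and I will set up the bookkeeping so that the $\nabla^2 q$ terms are accounted for. The key move is to apply the information identity piecewise, in outer-product form $\mathcal I=\mathbb E[\nabla\ell\nabla\ell^\top]$. Under this form the boundary contribution is $\nabla q\nabla q^\top/q$, and the $\nabla^2 q_0+\nabla^2 q_1+\int\nabla^2 p$ terms cancel because the total mass equals $1$ (Lemma~\ref{lem:zoc-tn-proper}).

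With $q_0=\Phi(\alpha)$, $\alpha=-\mu/\sigma$, we have $\nabla_{\boldsymbol\beta}\alpha=-\mathbf x/\sigma$ and $\partial_\sigma\alpha=\mu/\sigma^2=-\alpha/\sigma$. Hence $\nabla q_0=-\phi(\alpha)\sigma^{-1}(\mathbf x,\alpha,0,0)$, and the outer product over $q_0$ gives the displayed $\mathcal I_0$. The $\mathcal I_1$ block follows the same way from $q_1=1-\Phi(\gamma)$, with $\nabla\gamma=-\sigma^{-1}(\mathbf x,\gamma,0,0)$. Neither $q_0$ nor $q_1$ depends on $(a,b)$, so those rows and columns vanish. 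This is the claim that $(a,b)$ are informed only by interior observations.

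\textbf{Interior block.} For part (ii), I change variables in the interior integral from $y$ to $z=z_{a,b}(y)$, and then to $u=(z-\mu)/\sigma$. The interior density becomes $\phi(u)\,du$ on $(\alpha,\gamma)$. The interior score components are:
\begin{itemize}
\item $\nabla_{\boldsymbol\beta}\ell=u\mathbf x/\sigma$;
\item $\partial_\sigma\ell=(u^2-1)/\sigma$;
\item $\partial_a\ell$, which through $\partial z/\partial a=-z(1-z)/b$ and the Jacobian terms $\log z+\log(1-z)$ is a function of $u$ times $1/b$: namely $-(1/b)\{-u\,z(1-z)/\sigma+(1-2z)\}$ with $z=\mu+\sigma u$;
\item $\partial_b\ell$, which equals $(\logit z)$ times the same kind of expression minus $1/b$, since $(\logit y-a)/b=\logit z$.
\end{itemize}
Products of these scores integrated against $\phi(u)$ over $(\alpha,\gamma)$ give polynomials in $u$ for the $(\boldsymbol\beta,\sigma,a)$ entries. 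Here I must check that $z(1-z)$ and $1-2z$ are polynomials in $u$, which they are since $z$ is affine in $u$, so these entries are linear combinations of the $M_k$. Entries involving $b$ carry one or two factors of $l(u)=\logit(\mu+\sigma u)$, which yields the $J_k$ and $K_k$.

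\textbf{Main obstacle.} I expect the main difficulty to be justifying the information identity piecewise, in particular the interchange of differentiation and integration near the endpoints $u\to\alpha,\gamma$, where $l(u)$ blows up logarithmically. My first approach is to note that $|l(u)|^2\phi(u)$ is integrable on the bounded interval because the singularity is only logarithmic, and then to dominate the derivatives locally in $\boldsymbol\theta$ using the envelopes of Lemma~\ref{lem:zoc-tn-differentiability}.
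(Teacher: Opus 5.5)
Your proposal is correct and follows the paper's route: you compute the per-type scores, weight their outer products by $q_0$, $q_1$ on the boundary and by $\phi(u)\,du$ on $(\alpha,\gamma)$ after substituting $z=\mu+\sigma u$, and use the fact that $\psi$ is cubic in $u$ to get the $M_k$, $J_k$, $K_k$ structure. Your cancellation argument ($-q\nabla^2\log q=\nabla q\nabla q^\top/q-\nabla^2 q$, together with $\nabla^2 q_0+\nabla^2 q_1+\int_0^1\nabla^2 p_{\boldsymbol\theta}\,dy=0$) justifies a step the paper leaves implicit; note, though, that it means the Hessian-form pieces in your first paragraph are not the displayed $\mathcal I_0,\mathcal I_1$ and must be replaced by the outer-product pieces, which is in effect what you go on to use.
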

A complete derivation of the results in Proposition~\ref{prop:zoc-tn-information} is given below, including detailed expressions for $\mathcal{I}_{\mathrm{int}}(\boldsymbol{\theta}\mid \mathbf{x})$.

The information identity given in Equation~\eqref{eq:zoctn_condfim} is standard under correct specification, while the decomposition in Equation~\eqref{eq:zoctn_fim_decomp} follows from an application of the law of total expectation. Hence all that remains to be shown are the expressions for $\mathcal{I}_0, \mathcal{I}_1$, and $\mathcal{I}_{\mathrm{int}}$ presented in Proposition~\ref{prop:zoc-tn-information}, for which it suffices to compute the conditional score and take its outer product. This is done in the following proof.

% We restate the notation used in Proposition~\ref{prop:zoc-tn-information} here for convenience:
% \[
% \mu=\mathbf{x}^\top\boldsymbol{\beta},\qquad
% \alpha=-\frac{\mu}{\sigma},\qquad
% \gamma=\frac{1-\mu}{\sigma},
% \]
% and
% \[
% q_0=\Phi(\alpha),\qquad
% % q_{\mathrm{int}}=\Phi(\gamma)-\Phi(\alpha),\qquad
% q_1=1-\Phi(\gamma).
% \]

\begin{proof}
For the boundary masses,
\[
\ell_0(\boldsymbol{\theta}\mid \mathbf{x})=\log \Phi\!\left(-\frac{\mu}{\sigma}\right),
\qquad
\ell_1(\boldsymbol{\theta}\mid \mathbf{x})=\log\!\left(1-\Phi\!\left(\frac{1-\mu}{\sigma}\right)\right).
\]
Differentiation yields
\[
s_0(\boldsymbol{\theta};\mathbf{x})=
\begin{pmatrix}
-\dfrac{\phi(\alpha)}{q_0\,\sigma}\mathbf{x}\\[1em]
-\dfrac{\alpha\phi(\alpha)}{q_0\,\sigma}\\[0.8em]
0\\
0
\end{pmatrix},
\qquad
s_1(\boldsymbol{\theta};\mathbf{x})=
\begin{pmatrix}
\dfrac{\phi(\gamma)}{q_1\,\sigma}\mathbf{x}\\[1em]
\dfrac{\gamma\phi(\gamma)}{q_1\,\sigma}\\[0.8em]
0\\
0
\end{pmatrix}.
\]
Since \(Y=0\) occurs with probability \(q_0\) and \(Y=1\) with probability \(q_1\), the corresponding information contributions are
\[
\mathcal{I}_0(\boldsymbol{\theta}\mid \mathbf{x})=q_0\,s_0(\boldsymbol{\theta};\mathbf{x})s_0(\boldsymbol{\theta};\mathbf{x})^\top,
\qquad
\mathcal{I}_1(\boldsymbol{\theta}\mid \mathbf{x})=q_1\,s_1(\boldsymbol{\theta};\mathbf{x})s_1(\boldsymbol{\theta};\mathbf{x})^\top,
\]
which gives the stated formulas.

For the interior part, write \(z = \mu + \sigma u\), so that \(u \in (\alpha,\gamma)\) and the interior contribution has subdensity \(\phi(u)\,du\) on \((\alpha,\gamma)\). Rewriting the interior log-likelihood in terms of \(u\), differentiation gives
\[
s_{\mathrm{int}}(u;\mathbf{x})=
\begin{pmatrix}
\dfrac{u}{\sigma}\mathbf{x}\\[1em]
\dfrac{u^2-1}{\sigma}\\[1em]
\dfrac{\psi(u)}{b}\\[1em]
\dfrac{l(u)\psi(u)-1}{b}
\end{pmatrix}.
\]
where 
\[
l(u)=\mathrm{logit}(z), \qquad
r(u)=z(1-z),\qquad
\psi(u)=\frac{u\,r(u)}{\sigma}-(1-2z).
\]
Therefore
\[
\mathcal{I}_{\mathrm{int}}(\boldsymbol{\theta}\mid \mathbf{x})
=
\int_\alpha^\gamma \phi(u)\, s_{\mathrm{int}}(u;\mathbf{x})s_{\mathrm{int}}(u;\mathbf{x})^\top\,du.
\]

Finally, since \(z=\mu+\sigma u\),
\[
r(u)=z(1-z)=\mu(1-\mu)+\sigma(1-2\mu)u-\sigma^2u^2,
\]
and hence
\[
\psi(u)=\frac{u\,r(u)}{\sigma}-(1-2z)
=
\left(2\mu-1\right)
+\left(2\sigma+\frac{\mu(1-\mu)}{\sigma}\right)u
+\left(1-2\mu\right)u^2
-\sigma u^3.
\]
Thus \(\psi(u)\) is cubic in \(u\), which implies that all entries not involving \(l(u)\) reduce to linear combinations of truncated normal moments \(M_k\), whereas the \(b\)-block and its cross-terms involve the additional one-dimensional integrals \(J_k\) and \(K_k\). This proves the result.
\end{proof}

\clearpage
\section{Illustration of ZOC-TN density shapes}\label{app:density_shapes}
\begin{figure}[ht!]
  \centering
  \includegraphics[width=1\linewidth]{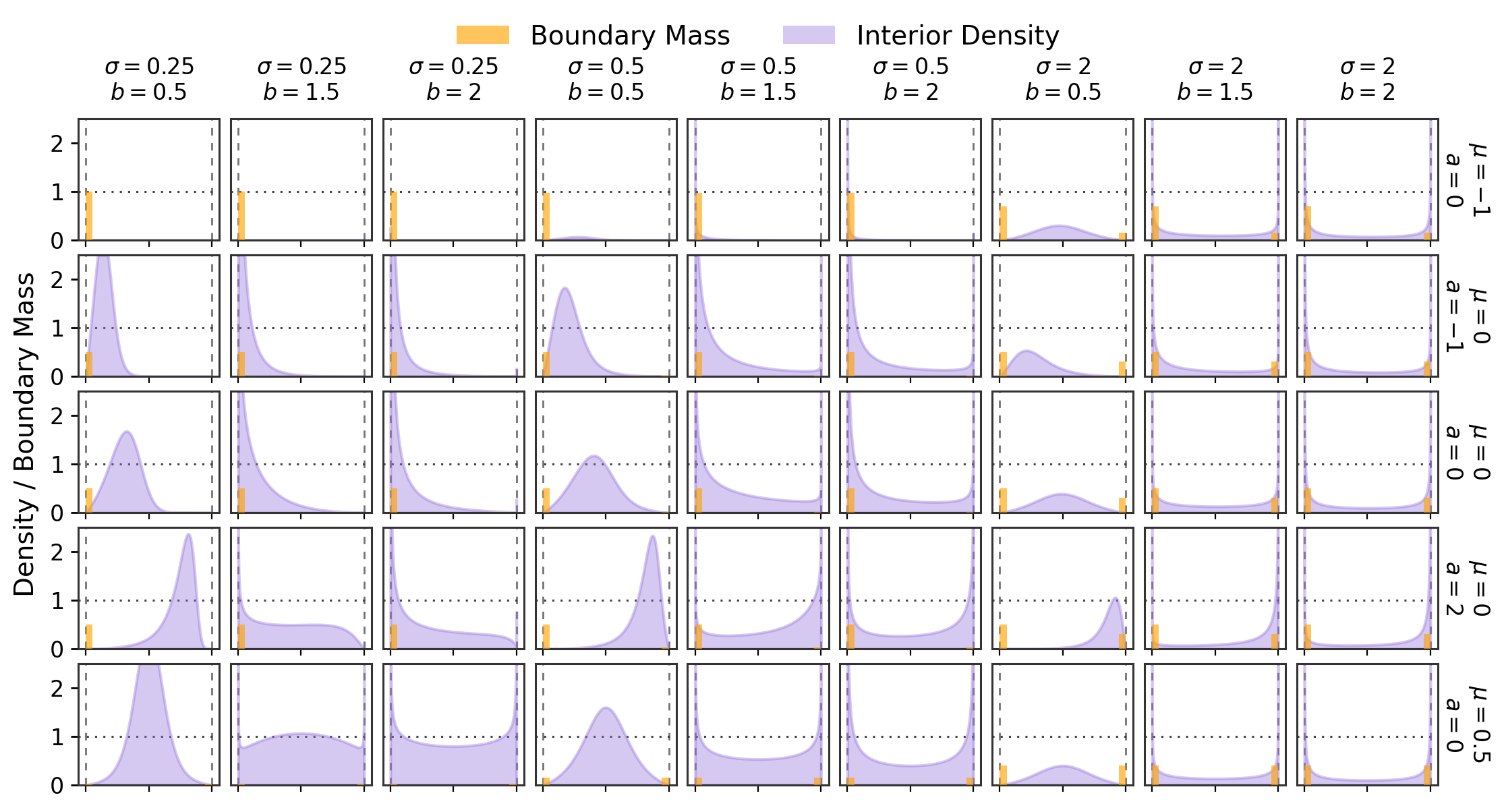}
  \caption{ ZOC-TN densities for select mean, variance, interior location, and interior concentration parameters: $\mu\in\{-1, 0, 0.5\}$, $\sigma \in \{0.25, 0.5, 2\}$, $a \in \{-1, 0, 2\}$, $b \in \{0.5, 1.5, 2\}$. The boundary probability mass (yellow bars) and the interior density (purple) are shown on the same scale.}
  \label{fig:zoctn_graphical_distributions}
\end{figure}

\newpage

\section{Overview of Benchmark Fractional Response Likelihoods}\label{app:simstudy_notation}

\subsection{Independent Linear Regression Models}
\begin{table}[ht!]
\caption{Independent linear regression fractional response models considered. The ZOC-TN model is included for comparison. The notation and parameterization of the beta and gamma distributions are discussed in subsequent sections.}
\label{tab:simstudy_models}
\centering
\renewcommand{\arraystretch}{1}
\scriptsize
\begin{tabular}{|p{1.5cm}| p{8.3cm}| p{2.2cm}| p{1.6cm}|}
\hline
\textbf{Name} & \textbf{(Quasi-)Density function} &  \textbf{Regression  parameter} & \textbf{Auxiliary parameters} \\
\hline

Quasi-B
&
\[
f^{\mathrm{BQ}}(y_i \mid \mathbf{x}_i)
=
p_i^{y_i}
+
(1 - p_i)^{(1-y_i)}
\]
&
$p_i = \mathrm{expit}(\hat{\mathbf{x}}_i^\top\boldsymbol{\beta})$
&
None
\\
\hline
ZOC-N
&
\[
f^{\mathrm{ZOC\text{-}N}}(y_i \mid \mathbf{x}_i)
=
\begin{cases}
\Phi\!\left(\dfrac{-\mu_i}{\sigma}\right), & y_i=0, \\[1.2ex]
\dfrac{1}{\sigma}\phi\!\left(\dfrac{y_i-\mu_i}{\sigma}\right), & y_i\in(0,1), \\[1.2ex]
1-\Phi\!\left(\dfrac{1-\mu_i}{\sigma}\right), & y_i=1
\end{cases}
\]
&
$\mu_i = \hat{\mathbf{x}}_i^\top\boldsymbol{\beta}$
&
\(\sigma\)
\\
\hline
BE-INF
&
\[
f^{\mathrm{BE\text{-}INF}}(y_i \mid \mathbf{x}_i)
=
\begin{cases}
\alpha(1-\gamma), & y_i=0, \\[1.2ex]
(1-\alpha)f_{(B)}(y\mid \mu_i, \varphi), & y\in(0,1) \\[1.2ex]
\alpha\gamma, & y_i=1
\end{cases}
\]
&
$\mu_i = \mathrm{expit}(\hat{\mathbf{x}}_i^\top\boldsymbol{\beta})$
&
\(\alpha, \gamma, \varphi\)
\\
\hline
ZOC-TN
&
\[
f^{\mathrm{ZOC\text{-}TN}}(y_i \mid \mathbf{x}_i)
=
\begin{cases}
\Phi\!\left(\dfrac{-\mu_i}{\sigma}\right), & y_i=0, \\[1.2ex]
\phi\!\left(
                \dfrac{z(y_i) - \mu_i}{\sigma}
            \right)
            \dfrac{z(y_i)(1-z(y_i))}{\,b\sigma y_i(1-y_i)},
        & y_i \in (0,1) \\[1.2ex]
1-\Phi\!\left(\dfrac{1-\mu_i}{\sigma}\right), & y_i=1
\end{cases}
\]
&
$\mu_i = \hat{\mathbf{x}}_i^\top\boldsymbol{\beta}$
&
$\sigma, a, b$
\\
\hline
ZOC-TB
&
\[
f^{\mathrm{ZOC\text{-}TB}}(y_i \mid \mathbf{x}_i)
=
\begin{cases}
F_{(B)}\!\left(\dfrac{u}{1+2u} \,\middle|\, \mu_i, \varphi\right), & y_i=0, \\[1.2ex]
f_{(B)}\!\left(y_i \,\middle|\, \mu_i,\varphi\right), & y_i\in(0,1), \\[1.2ex]
1-F_{(B)}\!\left(\dfrac{1+u}{1+2u} \,\middle|\, \mu_i,\varphi\right), & y_i=1
\end{cases}
\]
&
$\mu_i = \mathrm{expit}(\hat{\mathbf{x}}_i^\top\boldsymbol{\beta})$
&
\(\varphi,\ u\)
\\
\hline
ZOC-SG
&
\[
f^{\mathrm{ZOC\text{-}SG}}(y_i \mid \mathbf{x}_i)
=
\begin{cases}
F_{(G)}\!\left(\xi \,\middle|\, k,\dfrac{\mu_i}{k}\right), & y_i=0, \\[1.2ex]
f_{(G)}\!\left(y_i + \xi \,\middle|\, k,\dfrac{\mu_i}{k}\right), & y_i\in(0,1), \\[1.2ex]
1-F_{(G)}\!\left(1+\xi \,\middle|\, k,\dfrac{\mu_i}{k}\right), & y_i=1
\end{cases}
\]
&
$\mu_i = \exp(\hat{\mathbf{x}}_i^\top\boldsymbol{\beta})$
&
\(k,\ \xi\)
\\

\hline
\end{tabular}
\end{table}

\clearpage
\subsection{Zero-one Censored Transformed Beta (ZOC-TB)}
Introduced as the \textit{extended-support beta} distribution in \cite{Kosmidis2025}, the ZOC-TB likelihood adapts the beta distribution to accommodate both continuous observations over $(0, 1)$ and positive point-masses on the boundaries by recasting the fractional response $Y_i$ as a censored observation of a latent variable, $Y_i^*$, following the so-called four-parameter beta distribution: $Y_i^* \sim \text{B}4(\mu_i, \phi, -u, 1+u)$. This four-parameter beta distribution is specified by density
\begin{equation*}
    f_{(\text{B}4)}(y \; | \; \mu, \phi, u_1, u_2) = f_{(\text{B})}\left(\frac{y - u_1}{u_2 - u_1} \; | \; \mu, \phi \right)\frac{1}{u_2 - u_1}
\end{equation*}
for beta density $f_{(\text{B})}(\cdot \; |\; \mu, \phi)$ with $\mu \in (0, 1)$, $\phi > 0$, and $u_1 < u_2$ modified boundary parameters. In our case, we set $u_1 = -u$ and $u_2 = 1 + u$, with $u > 0$ representing a symmetric ad hoc exceedance to the left and right of $(0, 1)$. 

Under the parameterization used in Table~\ref{tab:simstudy_models}, we denote
\begin{equation*}
    f_{(\text{B})}(y \; |\; \mu, \phi) = \frac{\mathbb{I}\{0 < y < 1\}}{\text{B}(\mu\phi, (1 - \mu)\phi)}y^{\mu\phi - 1}(1 - y)^{(1 - \mu)\phi - 1}
\end{equation*}
with indicator function $\mathbb{I}\{\cdot\}$ and beta function, $\text{B}(\cdot, \cdot)$. This is consistent with the definition presented in \cite{Kosmidis2025}.

\subsection{Zero-one Censored Shifted Gamma (ZOC-SG)}
% TODO: FINISH
The zero-one censored shifted gamma (ZOC-SG) distribution \citep{Sigrist2011} extends the classical gamma regression framework to accommodate observations at the boundaries of the unit interval. This model considers a shifted latent response, $(Y_i^* + \xi) \sim \text{Gamma}\left(k, \theta_i\right)$ for given shape $k>0$, scale $\theta_i > 0$, and shift $\xi > 0$. 

For a given shape and scale parameter, the Gamma density function used in Table~\ref{tab:simstudy_models} is defined to be
\begin{equation*}
    f_{(\text{G})}(y \; |\; k, \theta) = \frac{1}{\theta^k\Gamma(k)}y^{k-1}e^{-y/\theta}, \;\; y>0
\end{equation*}
where $\Gamma(\cdot)$ is the gamma function. 

\section{Proof of Multiple Stationary Points on the Interior of ZOC-TN}\label{app:zoctn_vs_benchmark_proof}
The following proof demonstrates the third point outlined at the end of Section~\ref{subsec:zoctn_vs_benchmarks} on the structural advantages of the ZOC-TN likelihood when compared to its counterparts.

\begin{proof}
% For (i), observe that if \(a=0\) and \(b=1\), then
% \[
% g_{\mathrm{ZOC\text{-}TN}}(z;0,1)=\expit(\logit z)=z,
% \]
% so the interior transformation is the identity and the ZOC-TN likelihood
% reduces exactly to the ZOC-N likelihood.

% For (i), Bernoulli QMLE specifies only the conditional mean
% \[
% \mathbb{E}[Y \mid \mathbf{x}]=\expit(\mathbf{x}^\top\boldsymbol{\beta})\in(0,1),
% \]
% and hence does not define a mixed predictive law on \([0,1]\) with atoms at
% the boundaries. In contrast, the ZOC-TN model inherits boundary masses from
% the censored latent normal law, yielding 
% \begin{equation}\label{eq:zoctn_bdmass}
%     \Pr(Y=0\mid \mathbf{x})=\Phi\!\left(-\frac{\mu}{\sigma}\right),\qquad
%     \Pr(Y=1\mid \mathbf{x})=1-\Phi\!\left(\frac{1-\mu}{\sigma}\right),
% \end{equation}
    
% For (ii), the displayed boundary-mass formulas in Equation~\ref{eq:zoctn_bdmass} contain only \((\mu,\sigma)\). The parameters \(a\) and \(b\) enter only through the inverse transformation
% \[
%     z(y)=\expit\!\left(\frac{\logit(y)-a}{b}\right)
% \]
% in the interior density. Hence they alter only the interior shape.

The interior density of the censored shifted gamma model is
\[
f^{\text{ZOC-SG}}(y)\propto y^{k-1}e^{-y/\theta},\qquad y\in(0,1),
\]
so
\[
\frac{d}{dy}\log f^{\text{ZOC-SG}}(y)=\frac{k-1}{y}-\frac1\theta,
\]
which has at most one zero on \((0,1)\). Hence \(f^{\text{ZOC-SG}}\) has at most
one interior stationary point.

Likewise, the interior density of the transformed-beta /
extended-support-beta model is proportional to a beta density,
\[
f^{\text{ZOC-TB}}(y)\propto y^{\alpha-1}(1-y)^{\beta-1},\qquad y\in(0,1),
\]
so
\[
\frac{d}{dy}\log f^{\text{ZOC-TB}}(y)
=\frac{\alpha-1}{y}-\frac{\beta-1}{1-y},
\]
which also has at most one zero on \((0,1)\). Hence \(f^{\text{ZOC-TB}}\) has at
most one interior stationary point.
% the formulas for the derivatives of the log-densities of the
% censored shifted gamma and transformed-beta models are immediate from
% their interior densities. Each derivative has at most one zero on \((0,1)\),
% so each model has at most one interior stationary point.

% For the ZOC-TN model, the interior density is
% \[
% f_{\mathrm{TN}}(y\mid x)
% =
% \frac1\sigma
% \phi\!\left(\frac{z(y)-\mu}{\sigma}\right)
% \frac{z(y)(1-z(y))}{b\,y(1-y)},
% \qquad
% z(y)=\expit\!\left(\frac{\logit(y)-a}{b}\right).
% \]
% Writing \(y=\expit(\eta)\), differentiation gives
By contrast, for the ZOC-TN model, writing \(y=\expit(\eta)\) and
\[
z=\expit\!\left(\frac{\eta-a}{b}\right),
\]
the interior log-density satisfies
\[
\frac{d}{d\eta}\log f^{\text{ZOC-TN}}(y(\eta)\mid x)
=
-\frac{(z-\mu)z(1-z)}{b\sigma^2}
+\frac{1-2z}{b}
-(1-2y).
\]
This need not be monotone and may have multiple zeros. For example, if
\[
\mu=0.1,\qquad \sigma=0.2,\qquad a=2,\qquad b=1.5,
\]
then the above derivative is negative at \(y=0.2\), positive at \(y=0.4\),
and negative again at \(y=0.7\). By continuity, it therefore has at least two distinct zeros in \((0,1)\). Consequently, the ZOC-TN interior density can exhibit at least two interior stationary points, which is impossible under ZOC-SG, ZOC-TB, and BE-INF.
% \[
% \frac{d}{d\eta}\log f_{\mathrm{TN}}(y(\eta)\mid \mathbf{x})
% =
% -\frac{(z-\mu)z(1-z)}{b\sigma^2}
% +\frac{1-2z}{b}
% -(1-2y).
% \]
% For \((\mu,\sigma,a,b)=(0.1,0.2,2,1.5)\), direct evaluation shows that this
% quantity is negative at \(y=0.2\), positive at \(y=0.4\), and negative at
% \(y=0.7\). Hence, by the intermediate value theorem, it has at least two
% distinct zeros in \((0,1)\). Therefore the ZOC-TN interior density can have
% more than one interior stationary point, unlike the ZOC-SG and ZOC-TB
% families. This proves the claim.
\end{proof}

\section{Gradients for the Tree-boosted ZOC-TN Model}\label{app:zoc_tn_boosting_gradients}

This appendix gives the derivatives used for gradient tree boosting in the independent ZOC-TN model. For observation \(i\), write \(\mu_i=F(\mathbf{X}_i)\), and define
\[
    \alpha_i=-\frac{\mu_i}{\sigma},
    \qquad
    \gamma_i=\frac{1-\mu_i}{\sigma}.
\]
For \(y_i\in(0,1)\), let
\[
    z_i=z_{a,b}(y_i)
    =
    \operatorname{expit}\left(\frac{\logit(y_i)-a}{b}\right).
\]
The log-likelihood contribution is
\[
\ell_i
=
\ell^{\mathrm{ZOC-TN}}(\mu_i,\sigma,a,b\mid y_i).
\]

For boundary observations at zero,
\[
    \ell_i = \log \Phi(\alpha_i),
\]
and hence
\[
    \frac{\partial \ell_i}{\partial \mu_i}
    =
    -\frac{1}{\sigma}
    \frac{\phi(\alpha_i)}{\Phi(\alpha_i)},
\]
\[
    \frac{\partial^2 \ell_i}{\partial \mu_i^2}
    =
    -\frac{1}{\sigma^2}
    \frac{\phi(\alpha_i)}{\Phi(\alpha_i)}
    \left\{
        \alpha_i+
        \frac{\phi(\alpha_i)}{\Phi(\alpha_i)}
    \right\}.
\]

For boundary observations at one,
\[
    \ell_i = \log\{1-\Phi(\gamma_i)\},
\]
and therefore
\[
    \frac{\partial \ell_i}{\partial \mu_i}
    =
    \frac{1}{\sigma}
    \frac{\phi(\gamma_i)}{1-\Phi(\gamma_i)},
\]
\[
    \frac{\partial^2 \ell_i}{\partial \mu_i^2}
    =
    \frac{1}{\sigma^2}
    \frac{\phi(\gamma_i)}{1-\Phi(\gamma_i)}
    \left\{
        \gamma_i-
        \frac{\phi(\gamma_i)}{1-\Phi(\gamma_i)}
    \right\}.
\]

For interior observations \(y_i\in(0,1)\), the terms depending on \(\mu_i\) enter through the Gaussian density of \(z_i\). Thus,
\[
    \frac{\partial \ell_i}{\partial \mu_i}
    =
    \frac{z_i-\mu_i}{\sigma^2},
\]
and
\[
    \frac{\partial^2 \ell_i}{\partial \mu_i^2}
    =
    -\frac{1}{\sigma^2}.
\]

Equivalently, for the negative log-likelihood loss
\[
    \rho_i(F)
    =
    -\ell^{\mathrm{ZOC-TN}}(F(\mathbf{X}_i),\sigma,a,b\mid y_i),
\]
the first- and second-order quantities used in Newton boosting are
\[
    g_i
    =
    \frac{\partial \rho_i}{\partial F(\mathbf{X}_i)}
    =
    -\frac{\partial \ell_i}{\partial \mu_i},
    \qquad
    h_i
    =
    \frac{\partial^2 \rho_i}{\partial F(\mathbf{X}_i)^2}
    =
    -\frac{\partial^2 \ell_i}{\partial \mu_i^2}.
\]
% The tree fitted at boosting iteration \(m\) is therefore obtained from the local quadratic approximation
% \[
%     \sum_{i=1}^N
%     \left[
%         g_i f(\mathbf{X}_i)
%         +
%         \frac{1}{2}h_i f(\mathbf{X}_i)^2
%     \right],
% \]
% possibly with additional regularization on the tree structure and leaf weights. After fitting \(f_m\), the predictor is updated according to
% \[
%     F_m(\mathbf{x})
%     =
%     F_{m-1}(\mathbf{x})
%     +
%     \nu f_m(\mathbf{x}),
% \]
% where \(\nu\in(0,1]\) denotes the learning rate.

\clearpage

\section{Simulation Study: Experimental Settings}\label{app:simstudy_mds}
The covariates consist of three independent standard normal covariates and an intercept term:
$$
    \mathbf{X}_i = \begin{pmatrix}
        1 \\ \mathbf{Z}_i
    \end{pmatrix}, \mathbf{Z}_i \sim \mathcal{N}_3(\mathbf{0}, \mathbf{I}).
$$

\begin{table}[ht!]
\caption{Data generating processes used in simulation study. }\label{tab:simstudy_dgps}
\centering
\renewcommand{\arraystretch}{1.5}
\scriptsize
\begin{tabular}{|p{1.5cm}| p{7.0cm}| p{5.0cm}|}
\hline
\textbf{DGP} & \textbf{Bounded response model} &  \textbf{True parameters} \\\hline
ZOC-TN&
\[Y_i \sim \text{ZOC-TN}(\mu_i = ({\mathbf{X}}_i^\top\boldsymbol{\beta}_{\text{ZOC-TN}}), \sigma^2, a, b) \]
&
$\boldsymbol{\beta}_{\text{ZOC-TN}} = (0.5, 0.005, -0.015, 0.01)^\top$, $(\sigma, a, b) = (0.25, 0.0, 1.5)$
\\\hline
ZOC-TB&
\[Y_i \sim \text{ZOC-TB}(\mu_i = \mathrm{expit}({\mathbf{X}}_i^\top\boldsymbol{\beta}_{\text{ZOC-TB}}), \varphi, u)\]
&
$\boldsymbol{\beta}_{\text{ZOC-TB}} = (0.5, 0.005, -0.015, 0.01)^\top$, $(\varphi, u) = (5.0, 0.2)$
\\\hline
ZOC-SG&
\[Y_i \sim \text{ZOC-SG}(\mu_i = \mathrm{exp}({\mathbf{X}}_i^\top\boldsymbol{\beta}_{\text{ZOC-SG}}), k, \xi)\]
&
$\boldsymbol{\beta}_{\text{ZOC-SG}} = (0.05, 0.005, -0.015, 0.01)^\top$, $(k, \xi) = (20, 0.6)$
\\\hline
Right-Skew (MD)&
\[Y_i \sim \text{MD}_{\text{R}}(p_0({\mathbf{X}}_i), p_1({\mathbf{X}}_i), \mu_i = \mathrm{expit}({\mathbf{X}}_i^\top\boldsymbol{\beta}_{\text{R}}), \kappa)\]
&
$\boldsymbol{\beta}_{\text{R}} = (1.4, 0.1, 0.7, -0.05)^\top$, $\kappa = 15.0$
\\\hline
W-Shape (MD)&
\[Y_i \sim \text{MD}_{\text{W}}(p_0({\mathbf{X}}_i), p_1({\mathbf{X}}_i), \mu_i = \mathrm{expit}({\mathbf{X}}_i^\top\boldsymbol{\beta}_{\text{W}}), \mathbf{m}, \boldsymbol{\kappa})\]
&
$\boldsymbol{\beta}_{\text{W}} = (0.0, 0.1, 0.05, -0.15)^\top$, \; \;\;\; \;\;\; $\mathbf{m} = (0.15, 0.5, 0.85)$, $\boldsymbol{\kappa} = (5.0, 20.0, 5.0)$
\\\hline
\end{tabular}
\end{table}

The two Right-Skew and W-Shape mixture distributions contain point masses at $0$ and $1$ specified as
\begin{equation*}
\begin{split}
        \mathrm{Pr}(Y_i = 0 \mid \mathbf{X}_i) &= p_0(\mathbf{X}_i) = \frac{\exp(\mathbf{X}_i^\top\boldsymbol{\eta}_0)}{1 + \exp(\mathbf{X}_i^\top\boldsymbol{\eta}_0) + \exp(\mathbf{X}_i^\top\boldsymbol{\eta}_1)}, \\
        \mathrm{Pr}(Y_i = 1 \mid \mathbf{X}_i) &= p_1(\mathbf{X}_i) = \frac{\exp(\mathbf{X}_i^\top\boldsymbol{\eta}_1)}{1 + \exp(\mathbf{X}_i^\top\boldsymbol{\eta}_0) + \exp(\mathbf{X}_i^\top\boldsymbol{\eta}_1)},
\end{split}
\end{equation*}
where 
\begin{align*}
    \boldsymbol{\eta}_0 &= (-4.5, -3.0, 1.0, -0.5)^\top, \\ 
    \boldsymbol{\eta}_1 &= (-2.0, 1.0, -0.5, 0.25)^\top.
\end{align*}
The probability of $Y_i$ belonging to the continuous component is then $p_c(\mathbf{X}_i) = 1 - p_0(\mathbf{X}_i) - p_1(\mathbf{X}_i)$.

 % In the case of the right-skewed mixture distribution $\text{MD}_{\text{R}}$, this interior density consists of only a beta distribution, while the W-Shape mixture distribution $\text{MD}_{\text{W}}$ consists of a mixture of three beta components.
% Above, we introduce two mixture distributions referred to as the Right-Skew and W-Shape mixture distributions, denoted $\text{MD}_{\text{R}}$ and $\text{MD}_{\text{W}}$ respectively. Under these models the distribution of the bounded random variable $Y_i \in [0, 1]$ depends on its corresponding feature set, $\mathbf{X}_i \in \mathbb{R}^3$, consisting of three covariates. 

% As mentioned in the main text, the boundary mass functions are defined equivalently for both mixture DGPs. For a given $\mathbf{x}_i = (1 \;x_1\; x_2\;x_3)^\top$,
% \begin{equation*}
%     \mathrm{Pr}(Y_i=0 \mid \mathbf{x}_i) = p_0(\mathbf{x}_i), \;\; \mathrm{Pr}(Y_i=1 \mid \mathbf{x}_i) = p_1(\mathbf{x}_i)
% \end{equation*}
% where
% \begin{equation*}
%     p_0(\mathbf{x}_i) = \frac{\exp(\mathbf{x}_i^\top\boldsymbol{\eta}_0)}{1 + \exp(\mathbf{x}_i^\top\boldsymbol{\eta}_0) + \exp(\mathbf{x}_i^\top\boldsymbol{\eta}_1)}, \;\; p_1(\mathbf{x}_i) = \frac{\exp(\mathbf{x}_i^\top\boldsymbol{\eta}_1)}{1 + \exp(\mathbf{x}_i^\top\boldsymbol{\eta}_0) + \exp(\mathbf{x}_i^\top\boldsymbol{\eta}_1)}.
% \end{equation*}

Over $Y_i \in (0, 1)$, the Right-Skew interior is given by a single right-skewed beta distribution while the W-Shape interior is made from a mixture of three beta distributions. Starting with $\text{MD}_{\text{R}}$,
\begin{equation*}
    Y_i \mid (Y_i \in (0,1), \; \mathbf{x}_i) \sim \text{Beta}(\mu = \expit(\mathbf{x}_i^\top \boldsymbol{\beta}_R), \phi=\kappa)
\end{equation*}
using the same $\text{Beta}(\mu, \phi)$ parameterization as in Appendix~\ref{app:simstudy_notation}. For data generation, we set
\begin{equation*}
    \boldsymbol{\beta}_R = (1.4, 0.1, 0.7, -0.05)^\top,\; \kappa = 15.0
\end{equation*}

Construction of the W-Shape mixture distribution involves a mixture of three beta components. In order to produce the triple-peaked empirical distribution, we introduce a latent class variable, $S_i \in \{1, 2,3\}$, conditional on $Y_i \in (0, 1)$:
\begin{equation*}
    \mathrm{Pr}(S= j \mid Y_i \in (0, 1), \;\mathbf{x}_i) = \pi_j(\mathbf{x}_i)
\end{equation*}
such that $\pi_1(\mathbf{x}_i) + \pi_2(\mathbf{x}_i) + \pi_3(\mathbf{x}_i) = 1$. We compute these probabilities through a similar softmax method as before,
\begin{equation*}
    \pi_j(\mathbf{x}_i) = \frac{r_j(\mathbf{x}_i)}{r_1(\mathbf{x}_i) + r_2(\mathbf{x}_i) + r_3(\mathbf{x}_i)}
\end{equation*}
where 
\begin{equation*}
    r_1(\mathbf{x}_i) = \exp(-\tanh \mathbf{x}_i^\top\boldsymbol{\beta}_W),\; r_2(\mathbf{x}_i) = \exp(c),\; r_3(\mathbf{x}_i) = \exp(\tanh \mathbf{x}_i^\top\boldsymbol{\beta}_W)
\end{equation*}
Under this setting $c$ can be interpreted as the relative strength of the central peak against the two shoulders. From construction we see that smaller values of $(\mathbf{x}_i^\top\boldsymbol{\beta}_W)$ are more likely to be assigned to class $S_i = 1$, while larger values are more likely to fall into $S_i = 3$. The hyperbolic tangent function is applied to bound the weight contribution to $\pm 1$. We define
\begin{equation*}
    \boldsymbol{\beta}_W = (0.0, 0.1, 0.05, -0.15)^\top,\; c = 0.8
\end{equation*}

Each of these classes corresponds to its own beta distribution with prescribed mean and concentration parameters. Hence
\begin{equation*}
    Y_i \mid (Y_i \in (0,1),\; S_i =j) \sim \text{Beta}(m_j, \kappa_j)
\end{equation*}
For simulation these parameters are set to
\begin{align*}
    (m_1, m_2, m_3) &= (0.15, 0.5, 0.85)^\top \\
    (\kappa_1, \kappa_2, \kappa_3) &= (5.0, 20.0, 5.0)^\top
\end{align*}
Again, these configurations were selected to achieve the ``head and shoulder" peaks of the $W$-Shape interior, with $S_2$ responsible for the main central mode, and $S_1$, $S_3$ generating the two flanks.

\begin{figure}[ht!]
  \centering
  \includegraphics[width=1\linewidth]{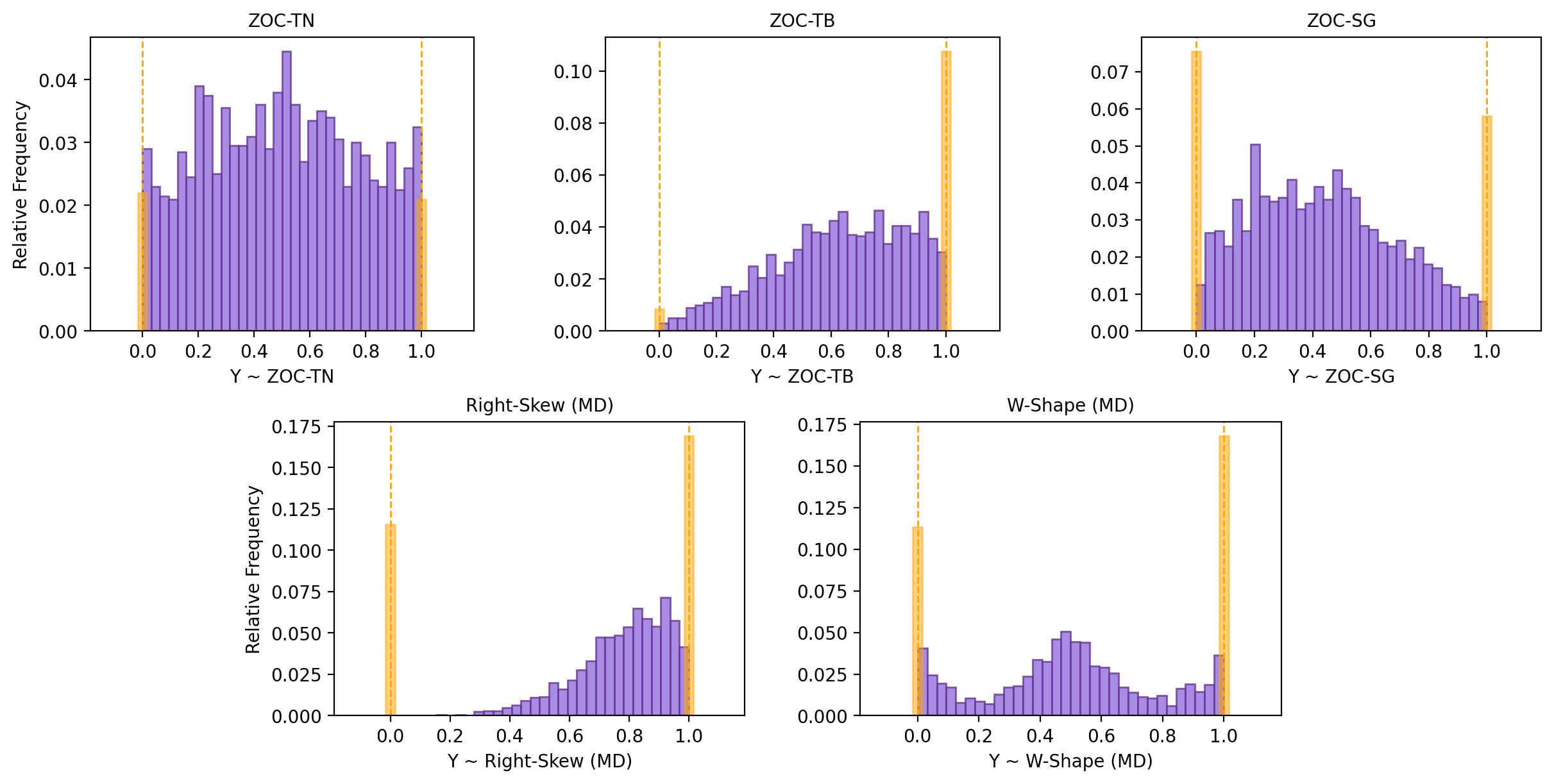}
  \caption{Randomly selected datasets used in simulation study. Each dataset consists of $2,000$ ($1,000$ train $+$ $1,000$ test) observations. Yellow bars represent boundary observations, purple bins cover the interior.}
  \label{fig:simulated_empirical_distribution}
\end{figure}

\clearpage
\section{Simulation Study Results}\label{app:simstudy_vanillapit}
\begin{table}[ht!]
\caption{Simulation study model performance across different DGPs. Average performance and (standard deviation) reported across 100 simulations. Runtimes reported in seconds.}
\label{tab:simstudy_summary_all}
\if1\jbes
  \tiny
\else
  \scriptsize
\fi
\centering
\begin{tabular}{lrrrrrrr}
\toprule
Method & Runtime & BIC & MSE & Log Score & CRPS & $\mathrm{AE}(p_0)$ & $\mathrm{AE}(p_1)$ \\
\midrule
\multicolumn{8}{l}{\rule{0pt}{2.4ex}\textbf{ZOC-TN Dataset}} \\
\midrule
\cellcolor{gray!20}ZOC-TN   & \cellcolor{gray!20} 0.156 & \cellcolor{gray!20} $\mathbf{448}$  & \cellcolor{gray!20} $\mathbf{0.0846}$ & \cellcolor{gray!20} $\mathbf{0.209}$ & \cellcolor{gray!20} $\mathbf{0.169}$ & \cellcolor{gray!20} 4.34 & \cellcolor{gray!20} 4.18 \\
    \cellcolor{gray!20}     & \cellcolor{gray!20} (0.0836) & \cellcolor{gray!20} (51.1) & \cellcolor{gray!20} (0.00267) & \cellcolor{gray!20} (0.0235) & \cellcolor{gray!20} (0.00290) & \cellcolor{gray!20} (2.45) & \cellcolor{gray!20} (2.23) \\
ZOC-TB   & 1.14  & 452  & $\mathbf{0.0846}$ & 0.213 & 0.170 & $\mathbf{4.05}$ & $\mathbf{3.89}$ \\
         & (0.120) & (51.3) & (0.00267) & (0.0232) & (0.00271) & (1.89) & (1.87) \\
ZOC-SG   & 9.64  & 575  & $\mathbf{0.0846}$ & 0.274 & 0.170 & 23.9 & 28.6 \\
         & (10.5) & (44.9) & (0.00266) & (0.0223) & (0.00295) & (5.27) & (4.38) \\
Quasi-B  & $\mathbf{0.0369}$ & 1411 & $\mathbf{0.0846}$ & 0.693 & 0.252 & 23.1 & 23.0 \\
         & (0.0346) & (1.72) & (0.00267) & (0.000915) & (0.000986) & (0.136) & (0.137) \\
ZOC-N    & 0.116 & 569  & $\mathbf{0.0846}$ & 0.273 & 0.170 & 26.4 & 26.2 \\
         & (0.00679) & (44.7) & (0.00266) & (0.0222) & (0.00300) & (4.48) & (4.06) \\
BE-INF   & 0.154 & 462  & 0.0847 & 0.217 & 0.433 & 5.19 & 5.00 \\
         & (0.102) & (52.1) & (0.00270) & (0.0233) & (0.0249) & (2.28) & (1.92) \\
\midrule
\multicolumn{8}{l}{\rule{0pt}{2.4ex}\textbf{ZOC-TB Dataset}} \\
\midrule
ZOC-TN  & 0.150 & 670  & $\mathbf{0.0699}$ & 0.324 & $\mathbf{0.151}$ & 2.28 & 12.1 \\
        & (0.0778) & (61.2) & (0.00260) & (0.0283) & (0.00299) & (1.52) & (4.80) \\
\cellcolor{gray!20}ZOC-TB  & \cellcolor{gray!20} 1.23  & \cellcolor{gray!20} $\mathbf{662}$  & \cellcolor{gray!20} $\mathbf{0.0699}$ & \cellcolor{gray!20} $\mathbf{0.323}$ & \cellcolor{gray!20} $\mathbf{0.151}$ & \cellcolor{gray!20} $\mathbf{1.63}$ & \cellcolor{gray!20} 12.2 \\
    \cellcolor{gray!20}    & \cellcolor{gray!20}(0.171) & \cellcolor{gray!20}(61.2) & \cellcolor{gray!20}(0.00260) & \cellcolor{gray!20}(0.0282) & \cellcolor{gray!20}(0.00304) & \cellcolor{gray!20}(0.959) & \cellcolor{gray!20}(4.66) \\
ZOC-SG  & 24.3  & 682  & $\mathbf{0.0699}$ & 0.331 & 0.152 & 3.30 & 18.4 \\
        & (7.56) & (58.9) & (0.00258) & (0.0266) & (0.00300) & (1.76) & (7.29) \\
Quasi-B & $\mathbf{0.0328}$ & 1305 & $\mathbf{0.0699}$ & 0.640 & 0.226 & 7.68 & 123 \\
        & (0.00369) & (10.8) & (0.00260) & (0.00603) & (0.00300) & (0.0144) & (0.151) \\
ZOC-N   & 0.137 & 675  & $\mathbf{0.0699}$ & 0.331 & 0.152 & 3.32 & 18.4 \\
        & (0.108) & (58.9) & (0.00258) & (0.0266) & (0.00295) & (1.78) & (7.28) \\
BE-INF  & 0.135 & 674  & $\mathbf{0.0699}$ & 0.327 & 0.173 & 2.11 & $\mathbf{9.16}$ \\
        & (0.00493) & (61.2) & (0.00259) & (0.0284) & (0.00982) & (1.61) & (5.46) \\
\midrule
\multicolumn{8}{l}{\rule{0pt}{2.4ex}\textbf{ZOC-SG Dataset}} \\
\midrule
ZOC-TN   & 0.145 & 861  & $\mathbf{0.0855}$ & 0.419 & $\mathbf{0.169}$ & 7.78 & $\mathbf{7.95}$ \\
        & (0.00496) & (62.6) & (0.00252) & (0.0253) & (0.00281) & (3.54) & (4.22) \\
ZOC-TB   & 1.47  & 884  & 0.0856 & 0.432 & $\mathbf{0.169}$ & 19.1 & 20.0 \\
        & (0.464) & (62.4) & (0.00251) & (0.0253) & (0.00261) & (6.28) & (6.12) \\
\cellcolor{gray!20}ZOC-SG   & \cellcolor{gray!20}0.426 & \cellcolor{gray!20}$\mathbf{855}$  & \cellcolor{gray!20}$\mathbf{0.0855}$ & \cellcolor{gray!20}$\mathbf{0.418}$ & \cellcolor{gray!20}$\mathbf{0.169}$ & \cellcolor{gray!20}$\mathbf{7.24}$ & \cellcolor{gray!20}8.42 \\
    \cellcolor{gray!20}    & \cellcolor{gray!20}(0.106) & \cellcolor{gray!20}(62.3) & \cellcolor{gray!20}(0.00252) & \cellcolor{gray!20}(0.0252) & \cellcolor{gray!20}(0.00263) & \cellcolor{gray!20}(3.42) & \cellcolor{gray!20}(4.19) \\
Quasi-B & $\mathbf{0.0404}$ & 1400 & $\mathbf{0.0855}$ & 0.687 & 0.249 & 65.5 & 63.5 \\
        & (0.0622) & (4.65) & (0.00253) & (0.00242) & (0.00143) & (0.188) & (0.253) \\
ZOC-N   & 0.120 & 879  & $\mathbf{0.0855}$ & 0.432 & $\mathbf{0.169}$ & 22.6 & 18.3 \\
        & (0.00542) & (62.5) & (0.00251) & (0.0251) & (0.00264) & (6.20) & (6.18) \\
BE-INF  & 0.143 & 879  & 0.0856 & 0.427 & 0.392 & 7.51 & 9.22 \\
        & (0.0606) & (62.1) & (0.00259) & (0.0262) & (0.00916) & (3.00) & (3.03) \\
\midrule
\multicolumn{8}{l}{\rule{0pt}{2.4ex}\textbf{Right-Skewed (Mixture Distribution) Dataset}} \\
\midrule
ZOC-TN   & 0.157 & 347 & 0.0583 & 0.162 & $\mathbf{0.119}$ & 86.3 & $\mathbf{77.6}$ \\
        & (0.0644) & (75.1) & (0.00308) & (0.0331) & (0.00340) & (5.02) & (6.59) \\
ZOC-TB   & 1.34 & 904 & 0.0608 & 0.440 & 0.131 & $\mathbf{80.1}$ & 94.0 \\
        & (0.176) & (63.6) & (0.00279) & (0.0275) & (0.00341) & (7.44) & (8.05) \\
ZOC-SG   & 109 & 933 & 0.0597 & 0.454 & 0.130 & 83.8 & 112 \\
        & (23.1) & (62.6) & (0.00254) & (0.0278) & (0.00314) & (7.25) & (8.21) \\
Quasi-B & $\mathbf{0.0363}$ & 1034 & $\mathbf{0.0568}$ & 0.506 & 0.166 & 108 & 162 \\
        & (0.00768) & (17.0) & (0.00286) & (0.00922) & (0.00386) & (6.69) & (5.04) \\
ZOC-N   & 0.127 & 926 & 0.0597 & 0.454 & 0.130 & 83.8 & 112 \\
        & (0.0101) & (62.6) & (0.00255) & (0.0277) & (0.00322) & (7.25) & (8.15) \\
BE-INF  & 0.145 & $\mathbf{59.5}$ & 0.0868 & $\mathbf{0.0131}$ & 0.204 & 143 & 113 \\
        & (0.00819) & (85.9) & (0.00542) & (0.0366) & (0.00806) & (7.22) & (4.50) \\
\midrule
\multicolumn{8}{l}{\rule{0pt}{2.4ex}\textbf{W-Shape (Mixture Distribution) Dataset}} \\
\midrule
ZOC-TN  & 0.184 & $\mathbf{1238}$ & 0.0991 & $\mathbf{0.602}$ & $\mathbf{0.177}$ & 73.6 & 21.9 \\
       & (0.0617) & (39.6) & (0.00405) & (0.0199) & (0.00410) & (4.38) & (4.76) \\
ZOC-TB  & 1.63 & 1249 & 0.0993 & 0.611 & 0.179 & $\mathbf{71.2}$ & $\mathbf{20.3}$ \\
       & (0.178) & (38.9) & (0.00410) & (0.0198) & (0.00433) & (4.91) & (4.85) \\
ZOC-SG  & 1.72 & 1268 & 0.0987 & 0.618 & $\mathbf{0.177}$ & 80.7 & 23.4 \\
       & (5.76) & (39.3) & (0.00392) & (0.0206) & (0.00405) & (4.66) & (5.24) \\
Quasi-B     & $\mathbf{0.0419}$ & 1311 & $\mathbf{0.0982}$ & 0.643 & 0.229 & 108 & 162 \\
       & (0.00530) & (12.0) & (0.00361) & (0.00773) & (0.00370) & (6.62) & (4.83) \\
ZOC-N   & 0.148 & 1265 & 0.0986 & 0.619 & $\mathbf{0.177}$ & 82.1 & 22.7 \\
       & (0.0237) & (39.8) & (0.00390) & (0.0207) & (0.00401) & (4.81) & (4.50) \\
BE-INF  & 0.169 & 1549 & 0.120 & 0.758 & 0.343 & 142 & 113 \\
       & (0.0112) & (42.3) & (0.00476) & (0.0271) & (0.0200) & (6.85) & (3.84) \\
\bottomrule
\end{tabular}
\end{table}

\begin{figure}[ht!]
  \centering
  \includegraphics[width=1.0\linewidth]{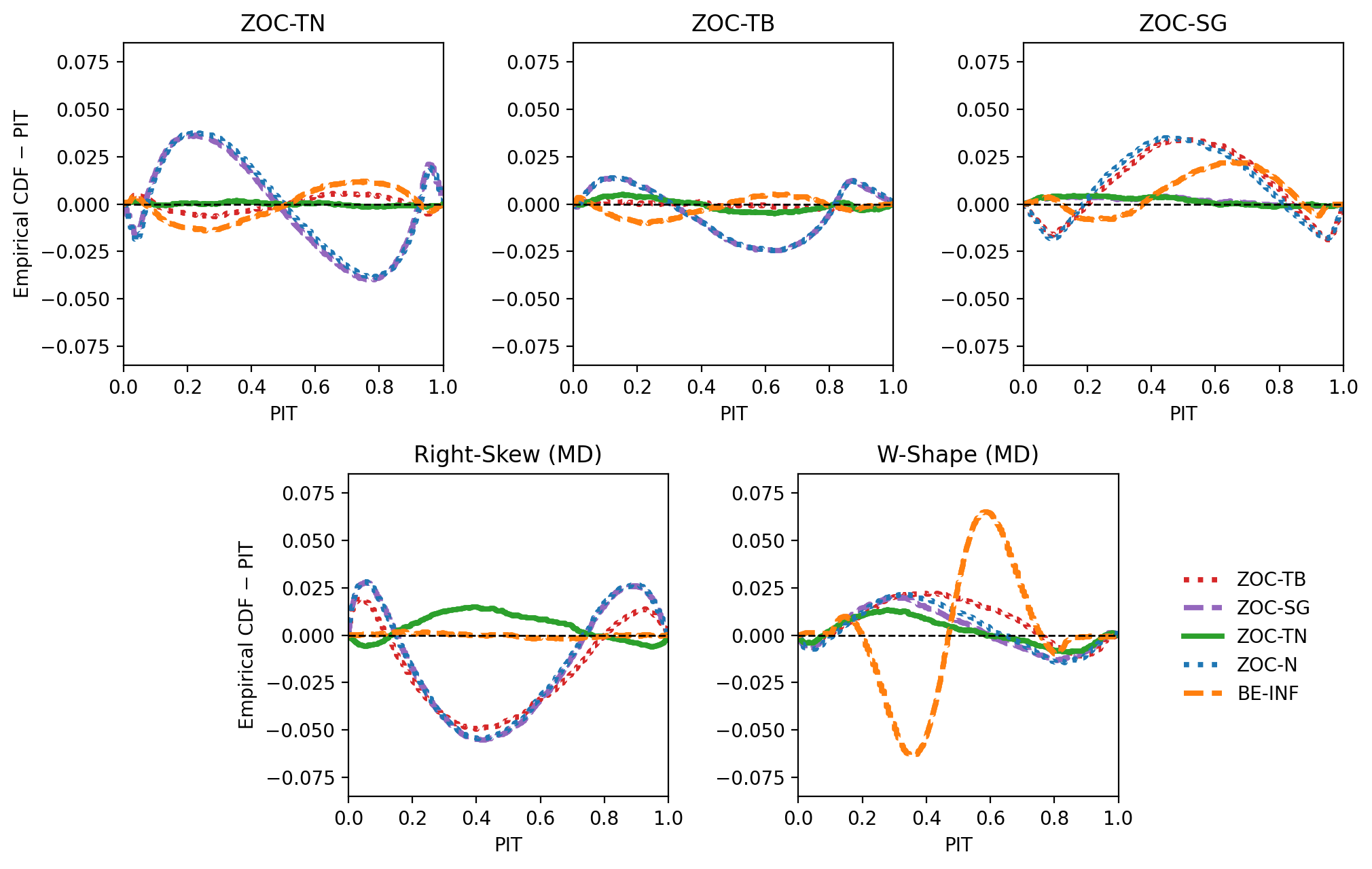}
  \caption{PIT reliability residual diagrams for competing models on different DGPs; predictions pooled across all test sets. Each curve plots the difference between the empirical PIT CDF and the uniform reference distribution. Deviations away from zero indicate miscalibration.}
  \label{fig:simstudy_pitresidual}
\end{figure}

In addition to the PIT reliability residual diagrams presented in Figure~\ref{fig:simstudy_pitresidual}, we provide the vanilla randomized PIT reliability QQ-plots in Figure~\ref{fig:simstudy_pitvanilla}.

\begin{figure}[ht!]
  \centering
  \includegraphics[width=0.8\linewidth]{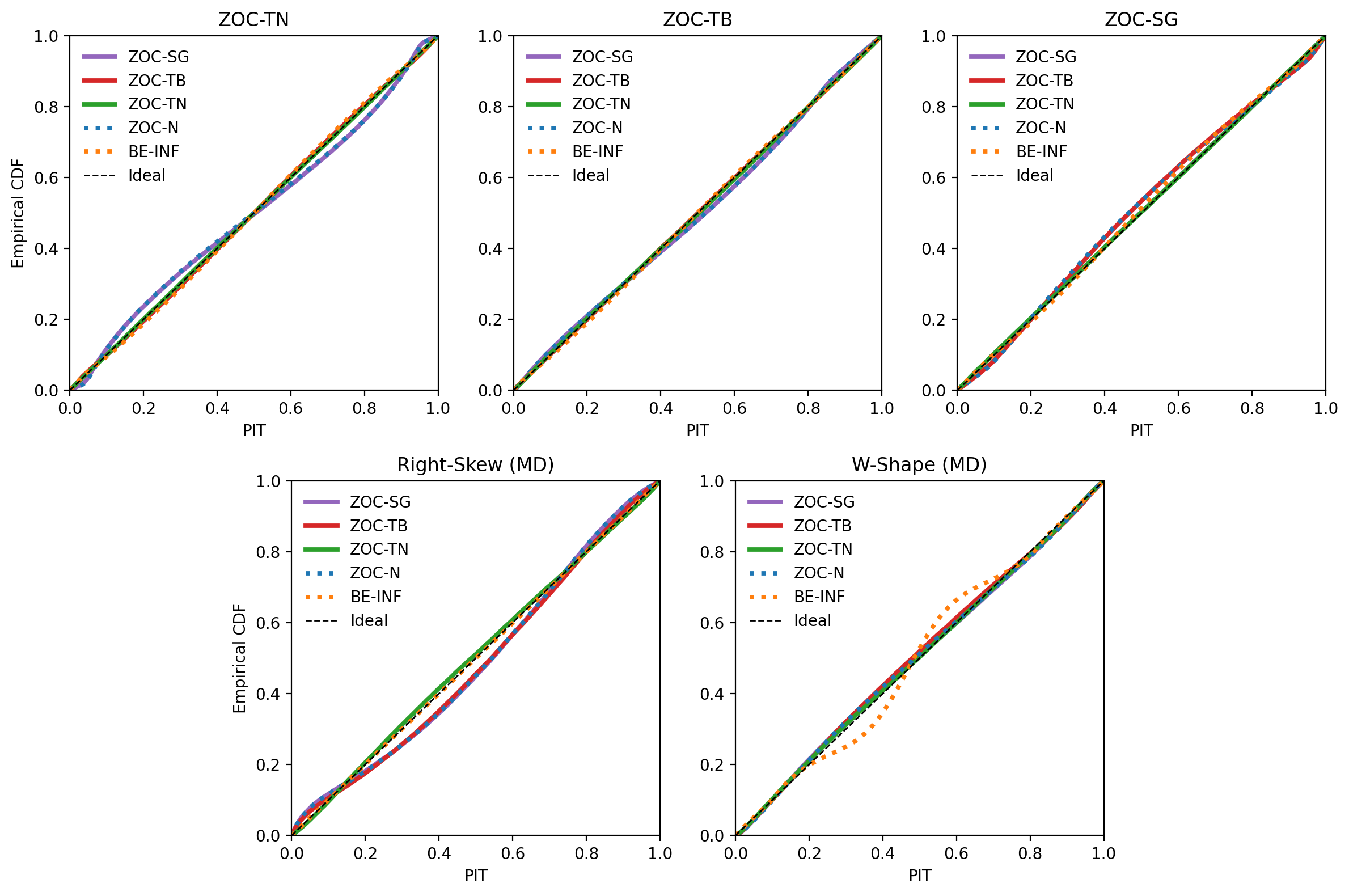}
  \caption{Randomized PIT diagrams for model test-set predictions, divided by DGP, pooled across all $100$ simulations. Curves show $\text{Uniform}(0,1)$ QQ-plots for empirical PIT CDFs of all independent linear models evaluated except Bernoulli QMLE (Quasi-B). Random uniform samples are drawn for probability masses at $0$ and $1$. The dashed black diagonal indicates perfect calibration.}
  \label{fig:simstudy_pitvanilla}
\end{figure}

\clearpage
\section{SFLLD LGD Exploratory Data Analysis}\label{app:data_exploration}

\begin{table}[ht!]
\centering
\scriptsize
\renewcommand{\arraystretch}{1.15}
\setlength{\tabcolsep}{6pt}
\caption{Description of predictor variables used in SFLLD LGD forecasting.}
\label{tab:predictors}
\begin{tabularx}{\linewidth}{@{} l >{\raggedright\arraybackslash}X c @{}}
\toprule
\textbf{Variable} & \textbf{Description} & \textbf{Data type} \\
\midrule
\texttt{credit\_score} &
Borrower credit score. &
Integer \\
\texttt{X\_centroid} &
Centroid longitude of property ZIP3 code. &
Decimal \\
\texttt{Y\_centroid} &
Centroid latitude of the property ZIP3 code. &
Decimal \\
\texttt{occupancy} &
Occupancy status: owner-occupied (P), second home (S), or investment property (I). &
Categorical \\
\texttt{nr\_units} &
Number of units in the property (1-4). &
Integer \\
\texttt{loan\_purpose} &
Loan purpose: cash-out refinance (C), no cash-out refinance (N), or purchase (P). &
Categorical \\
\texttt{first\_time\_homebuyer} &
Indicator for a first-time homebuyer (no residential property ownership in the prior three years): (Y/N). &
Categorical \\
\texttt{msa} &
Indicator for whether the property is located in a metropolitan statistical area (MSA). &
Categorical \\
\texttt{insurance\_percent} &
Percent of loss coverage provided by the insurer in the event of default. &
Percent \\
\texttt{original\_debt\_to\_income} &
Original debt-to-income ratio (monthly debt payments divided by monthly income). &
Decimal \\
\texttt{original\_ltv} &
Original combined loan-to-value (LTV): (first + secondary mortgage balances) divided by appraised value. &
Decimal \\
\texttt{original\_upb} &
Original unpaid principal balance (UPB) of the mortgage. &
Decimal \\
\texttt{ir\_spread} &
Interest-rate spread. &
Decimal \\
\texttt{multiple\_borrowers} &
Indicator for more than one obligated borrower on the loan. &
Categorical \\
\texttt{n\_months} &
Loan age in months. &
Integer \\
\texttt{ltv\_at\_default} &
Loan-to-value (LTV) of mortgage at time of default: current UPB divided by appraised value. &
Decimal \\
\texttt{gdp\_growth} &
Annual state-level GDP growth rate. &
Percent \\
\texttt{ln\_income\_per\_capita} &
Log per-capita personal income by state. &
Decimal \\
\texttt{ln\_expenditures\_per\_capita} &
Log per-capita personal consumption expenditures by state. &
Decimal \\
\texttt{unemployment\_rate} &
Annual state-level unemployment rate. &
Percent \\
\texttt{hpi\_growth} &
Annual state-level house price index (HPI) growth rate. &
Percent \\
\texttt{construction\_growth} &
Annual state-level GDP growth attributable to the construction industry. &
Percent \\
\texttt{gross\_operating\_surplus\_growth} &
Annual state-level growth in gross operating surplus (all industries). &
Percent \\
\texttt{inflation\_rate} &
Annual U.S. inflation rate. &
Percent \\
\texttt{djia\_growth} &
Annual growth rate of the year-end Dow Jones Industrial Average (DJIA). &
Percent \\
\bottomrule
\end{tabularx}
\end{table}

% TODO: ADD SUMMARY COUNTS FOR CATEGORICAL VARIABLES
\begin{table}[ht!]
\caption{Summary statistics for the numeric predictor variables.}
\label{tab:summary_stats}
\centering
\scriptsize
\begin{tabular}{lrrrrrr}
\toprule
 & Min. & Q.25\% & Median & Mean & Q.75\% & Max. \\
\midrule
X\_centroid & -123.758 & -96.044 & -85.557 & -91.015 & -81.387 & -67.866 \\
Y\_centroid & 25.531 & 34.570 & 39.217 & 38.123 & 41.833 & 48.428 \\
credit\_score & 300 & 652 & 689 & 690.8 & 731 & 850 \\
nr\_units & 1 & 1 & 1 & 1.04 & 1 & 4 \\
insurance\_percent & 0.000 & 0.000 & 0.000 & 7.906 & 22.000 & 55.000 \\
original\_debt\_to\_income & 1.000 & 32.000 & 38.000 & 38.294 & 45.000 & 65.000 \\
original\_loan\_to\_value & 6.000 & 75.000 & 80.000 & 81.301 & 90.000 & 949.000 \\
original\_upb & 6000 & 91000 & 142000 & 163546.2 & 217000 & 1233000 \\
ir\_spread & -4.580 & 0.845 & 1.475 & 1.473 & 2.125 & 6.960 \\
n\_months & 0.0 & 31.0 & 50.0 & 56.922 & 75.0 & 278.0 \\
ltv\_at\_default & 0.000 & 67.809 & 75.373 & 74.903 & 83.814 & 727.078 \\
gdp\_growth & -15.019 & 0.666 & 3.070 & 2.326 & 4.306 & 25.184 \\
ln\_income\_per\_capita & 9.937 & 10.480 & 10.576 & 10.595 & 10.689 & 11.415 \\
ln\_expenditures\_per\_capita & 9.723 & 10.306 & 10.412 & 10.402 & 10.486 & 11.188 \\
unemployment\_rate & 2.1 & 5.4 & 7.2 & 7.530 & 9.7 & 13.5 \\
hpi\_growth & -23.07 & -6.35 & -2.45 & -2.807 & 2.22 & 31.71 \\
gdp\_construction\_growth & -33.149 & -10.848 & -2.374 & -3.159 & 5.351 & 42.042 \\
gross\_operating\_surplus\_growth & -22.903 & 0.614 & 3.482 & 3.182 & 5.779 & 34.548 \\
inflation\_rate & -0.356 & 1.465 & 2.069 & 2.044 & 3.157 & 4.698 \\
DJIA\_growth & -33.84 & -0.61 & 7.26 & 3.603 & 18.82 & 26.50 \\
\bottomrule
\end{tabular}
\end{table}

\begin{table}[ht!]
\centering
\caption{Summary statistics for the categorical predictor variables.}
\label{tab:categorical_summary}
\scriptsize
\begin{tabular}{lll}
\toprule
 & Level & Count \\
\midrule
\texttt{occupancy} & I & 30155 \\
 & P & 326802 \\
 & S & 9902 \\
\texttt{loan\_purpose} & C & 136180 \\
 & N & 122975 \\
 & P & 107704 \\
\texttt{first\_time\_homebuyer} & 0 & 335117 \\
 & 1 & 31742 \\
msa & 0 & 78585 \\
 & 1 & 288274 \\
\texttt{multiple\_borrowers} & 1 & 218866 \\
 & 2 & 147993 \\
\bottomrule
\end{tabular}
\end{table}

\begin{figure}[ht!]
  \centering
  \includegraphics[width=0.6\linewidth]{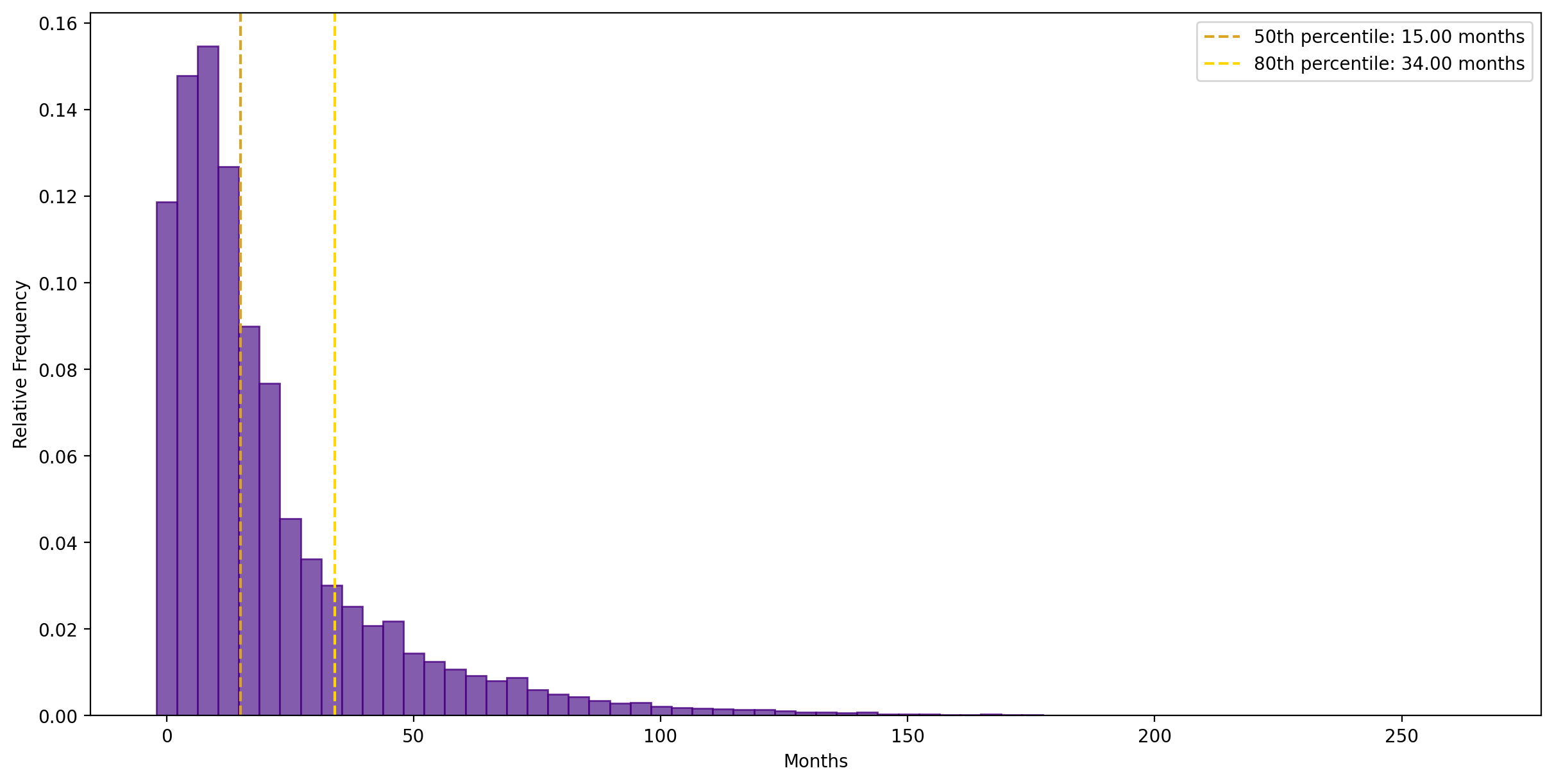}
  \caption{Empirical distribution of liquidation time in SFLLD: Time elapsed in months between default occurrence and final settlement.}
  \label{fig:lgd_liquidationdist}
\end{figure}

\begin{figure}[ht!]
  \centering
  \includegraphics[width=1\linewidth]{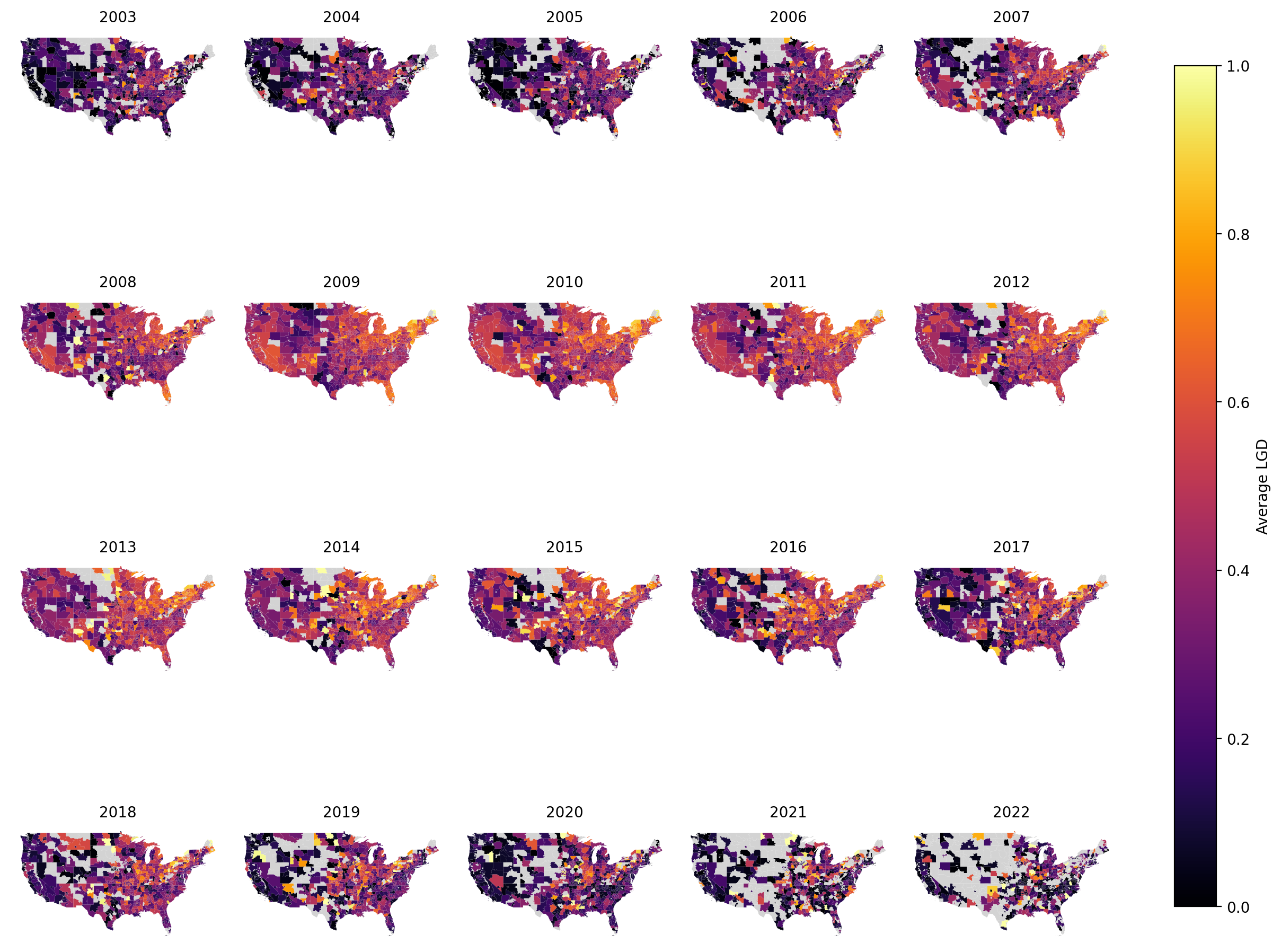}
  \caption{Heatmap of average LGD by ZIP3 locality across 2003-2022. Data for 2000-2002 not shown due to space constraints. No data available for gray regions.}
  \label{fig:lgd_spatiotemporal_heatmap}
\end{figure}

\clearpage
\section{GP Covariance Smoothness Sensitivity Checks}\label{app:gp_cov_smoothness}
In addition to the reported GP results in Section~\ref{subsec:stage2_results} with fixed $\nu=1.5$, we present here results for spatial models with smoothness estimated, and spatio-temporal models at $\nu = 1.5 \pm 1$. These results are shown in Table~\ref{tab:smoothness_results}. Under the spatial linear model, average fitted smoothness was found to be $\bar{\nu} = 0.448$, while under the spatial tree-boosting model, $\bar{\nu} = 0.132$. Despite this, in the case of the linear models, it is clear that variable smoothness provides at best marginal improvements to performance across MSE, log score and CRPS.

\begin{table}[ht!]
\centering
\caption{Out-of-sample forecasting results on LGD dataset for GP models with varying smoothness parameter.}
\label{tab:smoothness_results}
\footnotesize
\begin{tabular}{lrrr}
\toprule
& MSE & Log Score & CRPS \\
\midrule
\textit{GP Linear Models} \\
\midrule
\rowcolor{gray!15}
Spatial, $\nu=1.5$ & $0.0721$ & $-0.0816$ & $0.154$ \\
Spatial, $\nu$ estimated & $0.0718$ & $-0.0845$ & $0.1539$ \\

\rowcolor{gray!15}
Spatio-temporal, $\nu=1.5$ & $0.0689$ & $-0.0979$ & $0.151$ \\
Spatio-temporal, $\nu=0.5$ & $0.0964$ & $0.220$ & $0.188$ \\
Spatio-temporal, $\nu=2.5$ & $0.0631$ & $-0.143$ & $0.143$ \\
\midrule
\textit{GP Tree-Boosting} \\
\midrule
\rowcolor{gray!15}
Spatial, $\nu=1.5$ & $0.0722$ & $-0.0919$ & $0.153$ \\
Spatial, $\nu$ estimated & $0.0858$ & $0.191$ & $0.173$ \\
\rowcolor{gray!15}
Spatio-temporal, $\nu=1.5$ & $0.0572$ & $-0.210$ & $0.134$ \\
Spatio-temporal, $\nu=0.5$ & $0.0854$ & $0.0258$ & $0.172$ \\
Spatio-temporal, $\nu=2.5$ & $0.0590$ & $-0.192$ & $0.137$ \\
\bottomrule
\end{tabular}
\end{table}

\clearpage
\section{SFLLD LGD Independent Linear Model Details}\label{app:lgd_ilms}

% \subsection{Linear Model Coefficients}
% Tables~\ref{tab:lm_coeffs}, \ref{tab:lm_coeffs_continued1}, and \ref{tab:lm_coeffs_continued2} contain the estimated coefficients, and their corresponding standard errors, of the independent linear models evaluated in Table~\ref{tab:stage1_model_performance}. Models were fit on the entire LGD dataset.

\begin{figure}[ht!]
  \centering
  \includegraphics[width=0.7\linewidth]{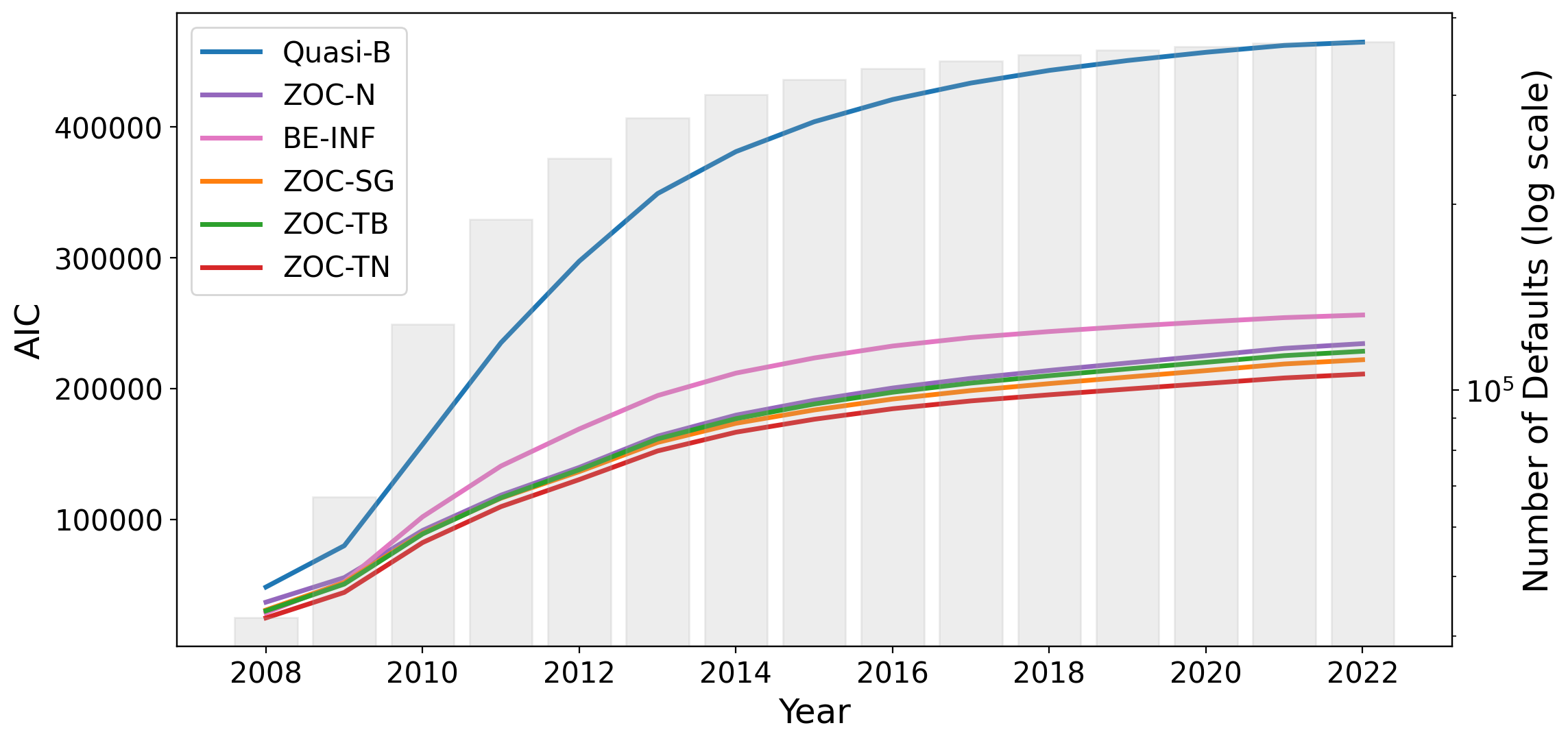}
  \caption{Evolution of linear independent model AIC scores across expanding window datasets.}
  \label{fig:stage1_aic}
\end{figure}

\begin{table}[ht!]
\centering
\caption{Estimated parameters for independent linear Quasi-B and BE-INF models, fitted on the full LGD dataset}
\label{tab:lm_coeffs}

\begin{minipage}[t]{0.48\textwidth}
\centering
\scriptsize
\subcaption{Quasi-B Model}
\begin{tabular}{lrr}
\toprule
 & Parameter & SE \\
\midrule
intercept & -0.0781 & 0.00351 \\
X\_centroid & 0.252 & 0.00402 \\
Y\_centroid & 0.102 & 0.00386 \\
credit\_score & -0.0806 & 0.00368 \\
nr\_units & 0.0767 & 0.00397 \\
insurance\_percent & -0.326 & 0.00457 \\
original\_debt\_to\_income & -0.0108 & 0.00361 \\
original\_loan\_to\_value & 0.0501 & 0.00708 \\
original\_upb & -0.363 & 0.00453 \\
ir\_spread & 0.134 & 0.00439 \\
n\_months & -0.0566 & 0.00474 \\
ltv\_at\_default & 0.176 & 0.00689 \\
gdp\_growth & -0.135 & 0.0126 \\
ln\_income\_per\_capita & -0.0597 & 0.0126 \\
ln\_expenditures\_per\_capita & 0.151 & 0.0125 \\
unemployment\_rate & 0.0793 & 0.00536 \\
hpi\_growth & -0.321 & 0.00714 \\
gdp\_construction\_growth & -0.0438 & 0.00714 \\
gross\_operating\_surplus\_growth & 0.0628 & 0.00902 \\
inflation\_rate & 0.0986 & 0.00718 \\
DJIA\_growth & -0.0294 & 0.00594 \\
occupancy\_P & -0.179 & 0.00451 \\
occupancy\_S & -0.0552 & 0.00413 \\
loan\_purpose\_N & -0.132 & 0.00415 \\
loan\_purpose\_P & -0.182 & 0.00476 \\
first\_time\_homebuyer\_1 & 0.0163 & 0.00409 \\
MSA\_1 & -0.0274 & 0.0036 \\
number\_of\_borrowers\_2 & -0.0671 & 0.00363 \\
\bottomrule
\end{tabular}
\end{minipage}
\hfill
\begin{minipage}[t]{0.48\textwidth}
\centering
\scriptsize
\subcaption{BE-INF Model}
\begin{tabular}{lrr}
\toprule
 & Parameter & SE \\
\midrule
intercept & -0.217 & 0.00181 \\
X\_centroid & 0.19 & 0.002 \\
Y\_centroid & 0.0638 & 0.00195 \\
credit\_score & -0.0484 & 0.00189 \\
nr\_units & 0.0463 & 0.00206 \\
insurance\_percent & -0.249 & 0.00234 \\
original\_debt\_to\_income & 0.00619 & 0.00187 \\
original\_loan\_to\_value & 0.0668 & 0.0035 \\
original\_upb & -0.226 & 0.0023 \\
ir\_spread & 0.1 & 0.00225 \\
n\_months & -0.069 & 0.00245 \\
ltv\_at\_default & 0.146 & 0.0033 \\
gdp\_growth & -0.0523 & 0.00658 \\
ln\_income\_per\_capita & -0.0927 & 0.00645 \\
ln\_expenditures\_per\_capita & 0.152 & 0.0064 \\
unemployment\_rate & 0.0735 & 0.00273 \\
hpi\_growth & -0.272 & 0.00359 \\
gdp\_construction\_growth & -0.0405 & 0.00364 \\
gross\_operating\_surplus\_growth & 0.0124 & 0.00476 \\
inflation\_rate & 0.0565 & 0.00368 \\
DJIA\_growth & -0.0231 & 0.00305 \\
occupancy\_P & -0.11 & 0.0024 \\
occupancy\_S & -0.0298 & 0.00215 \\
loan\_purpose\_N & -0.107 & 0.00215 \\
loan\_purpose\_P & -0.145 & 0.00244 \\
first\_time\_homebuyer\_1 & 0.00981 & 0.00208 \\
MSA\_1 & -0.0356 & 0.00185 \\
number\_of\_borrowers\_2 & -0.0387 & 0.00185 \\
alpha & 0.538 & 0.00122 \\
gamma & 0.897 & 0.000908 \\
phi & 2.74 & 0.0061 \\
\bottomrule
\end{tabular}
\end{minipage}

% \vspace{0.5cm}

\end{table}

\begin{table}[ht!]
\centering
\caption{Estimated parameters for independent linear ZOC-N and ZOC-TN models, fitted on the full LGD dataset}
\label{tab:lm_coeffs_continued1}

\begin{minipage}[t]{0.48\textwidth}
\centering
\scriptsize
\subcaption{ZOC-N Model}
\begin{tabular}{lrr}
\toprule
 & Parameter & SE \\
\midrule
intercept & 0.491 & 0.000496 \\
X\_centroid & 0.0627 & 0.000558 \\
Y\_centroid & 0.0266 & 0.00054 \\
credit\_score & -0.021 & 0.000522 \\
nr\_units & 0.0184 & 0.00053 \\
insurance\_percent & -0.0808 & 0.000632 \\
original\_debt\_to\_income & -0.00366 & 0.000512 \\
original\_loan\_to\_value & 0.0114 & 0.000912 \\
original\_upb & -0.0907 & 0.000623 \\
ir\_spread & 0.0345 & 0.00062 \\
n\_months & -0.015 & 0.000665 \\
ltv\_at\_default & 0.0409 & 0.000881 \\
gdp\_growth & -0.038 & 0.0018 \\
ln\_income\_per\_capita & -0.0156 & 0.00178 \\
ln\_expenditures\_per\_capita & 0.0418 & 0.00177 \\
unemployment\_rate & 0.023 & 0.000753 \\
hpi\_growth & -0.0736 & 0.000984 \\
gdp\_construction\_growth & -0.0104 & 0.00101 \\
gross\_operating\_surplus\_growth & 0.0181 & 0.00129 \\
inflation\_rate & 0.0263 & 0.00101 \\
DJIA\_growth & -0.00818 & 0.00084 \\
occupancy\_P & -0.0486 & 0.000627 \\
occupancy\_S & -0.0158 & 0.000583 \\
loan\_purpose\_N & -0.0335 & 0.000591 \\
loan\_purpose\_P & -0.0448 & 0.000672 \\
first\_time\_homebuyer\_1 & 0.00386 & 0.000575 \\
MSA\_1 & -0.00598 & 0.000511 \\
number\_of\_borrowers\_2 & -0.0177 & 0.000514 \\
sigma & 0.296 & 0.000383 \\
\bottomrule
\end{tabular}
\end{minipage}
\hfill
\begin{minipage}[t]{0.48\textwidth}
\centering
\scriptsize
\subcaption{ZOC-TN Model}
\begin{tabular}{lrr}
\toprule
 & Parameter & SE \\
\midrule
intercept & 0.544 & 0.000682 \\
X\_centroid & 0.059 & 0.000524 \\
Y\_centroid & 0.0255 & 0.000506 \\
credit\_score & -0.0187 & 0.000489 \\
nr\_units & 0.0162 & 0.000497 \\
insurance\_percent & -0.0769 & 0.000594 \\
original\_debt\_to\_income & -0.00266 & 0.000479 \\
original\_loan\_to\_value & 0.0123 & 0.000854 \\
original\_upb & -0.0803 & 0.000588 \\
ir\_spread & 0.0329 & 0.000581 \\
n\_months & -0.0137 & 0.000623 \\
ltv\_at\_default & 0.0399 & 0.000824 \\
gdp\_growth & -0.0307 & 0.00168 \\
ln\_income\_per\_capita & -0.0165 & 0.00167 \\
ln\_expenditures\_per\_capita & 0.0415 & 0.00166 \\
unemployment\_rate & 0.0234 & 0.000706 \\
hpi\_growth & -0.0704 & 0.000923 \\
gdp\_construction\_growth & -0.011 & 0.000949 \\
gross\_operating\_surplus\_growth & 0.0133 & 0.00121 \\
inflation\_rate & 0.0235 & 0.000949 \\
DJIA\_growth & -0.00728 & 0.000787 \\
occupancy\_P & -0.0441 & 0.000589 \\
occupancy\_S & -0.014 & 0.000546 \\
loan\_purpose\_N & -0.0311 & 0.000555 \\
loan\_purpose\_P & -0.042 & 0.00063 \\
first\_time\_homebuyer\_1 & 0.00349 & 0.000538 \\
MSA\_1 & -0.00634 & 0.000478 \\
number\_of\_borrowers\_2 & -0.0158 & 0.000482 \\
sigma & 0.278 & 0.000393 \\
a & -0.369 & 0.00361 \\
b & 1.21 & 0.00209 \\
\bottomrule
\end{tabular}
\end{minipage}

\end{table}

\begin{table}[ht!]
\centering
\caption{Estimated parameters for independent linear ZOC-SG and ZOC-TB models, fitted on the full LGD dataset}
\label{tab:lm_coeffs_continued2}

\begin{minipage}[t]{0.48\textwidth}
\centering
\scriptsize
\subcaption{ZOC-SG Model}
\begin{tabular}{lrr}
\toprule
 & Parameter & Std (Hessian) \\
\midrule
intercept & 0.155 & 0.008 \\
X\_centroid & 0.0554 & 0.000664 \\
Y\_centroid & 0.0229 & 0.000506 \\
credit\_score & -0.0188 & 0.000479 \\
nr\_units & 0.0165 & 0.000463 \\
insurance\_percent & -0.0729 & 0.00082 \\
original\_debt\_to\_income & -0.00305 & 0.000438 \\
original\_loan\_to\_value & 0.00905 & 0.000775 \\
original\_upb & -0.0781 & 0.000797 \\
ir\_spread & 0.03 & 0.000586 \\
n\_months & -0.0106 & 0.000548 \\
ltv\_at\_default & 0.0376 & 0.000851 \\
gdp\_growth & -0.0329 & 0.00147 \\
ln\_income\_per\_capita & -0.0123 & 0.00152 \\
ln\_expenditures\_per\_capita & 0.0367 & 0.00154 \\
unemployment\_rate & 0.0186 & 0.00069 \\
hpi\_growth & -0.0647 & 0.000998 \\
gdp\_construction\_growth & -0.0111 & 0.000916 \\
gross\_operating\_surplus\_growth & 0.0157 & 0.00101 \\
inflation\_rate & 0.0252 & 0.000865 \\
DJIA\_growth & -0.00468 & 0.000696 \\
occupancy\_P & -0.0407 & 0.000644 \\
occupancy\_S & -0.0124 & 0.000529 \\
loan\_purpose\_N & -0.0292 & 0.000574 \\
loan\_purpose\_P & -0.0392 & 0.000677 \\
first\_time\_homebuyer\_1 & 0.00357 & 0.000479 \\
MSA\_1 & -0.00621 & 0.000449 \\
number\_of\_borrowers\_2 & -0.016 & 0.000476 \\
k & 14.8 & 0.25 \\
xi & 0.681 & 0.00939 \\
\bottomrule
\end{tabular}
\end{minipage}
\hfill
\begin{minipage}[t]{0.48\textwidth}
\centering
\scriptsize
\subcaption{ZOC-TB Model}
\begin{tabular}{lrr}
\toprule
 & Parameter & SE \\
\midrule
intercept & -0.0196 & 0.00141 \\
X\_centroid & 0.182 & 0.0018 \\
Y\_centroid & 0.0802 & 0.00159 \\
credit\_score & -0.0596 & 0.0015 \\
nr\_units & 0.0557 & 0.00164 \\
insurance\_percent & -0.229 & 0.00208 \\
original\_debt\_to\_income & -0.0109 & 0.00145 \\
original\_loan\_to\_value & 0.0419 & 0.00279 \\
original\_upb & -0.258 & 0.00212 \\
ir\_spread & 0.0991 & 0.00182 \\
n\_months & -0.0386 & 0.0019 \\
ltv\_at\_default & 0.117 & 0.00267 \\
gdp\_growth & -0.104 & 0.00513 \\
ln\_income\_per\_capita & -0.0366 & 0.00508 \\
ln\_expenditures\_per\_capita & 0.111 & 0.00507 \\
unemployment\_rate & 0.0644 & 0.00217 \\
hpi\_growth & -0.211 & 0.00299 \\
gdp\_construction\_growth & -0.0279 & 0.00287 \\
gross\_operating\_surplus\_growth & 0.049 & 0.00366 \\
inflation\_rate & 0.0714 & 0.00289 \\
DJIA\_growth & -0.0239 & 0.00238 \\
occupancy\_P & -0.139 & 0.00192 \\
occupancy\_S & -0.0456 & 0.00168 \\
loan\_purpose\_N & -0.0935 & 0.00172 \\
loan\_purpose\_P & -0.126 & 0.00198 \\
first\_time\_homebuyer\_1 & 0.0105 & 0.00163 \\
MSA\_1 & -0.016 & 0.00145 \\
number\_of\_borrowers\_2 & -0.0496 & 0.00148 \\
phi & 4.83 & 0.0539 \\
u & 0.193 & 0.00315 \\
\bottomrule
\end{tabular}
\end{minipage}

\end{table}

% \subsection{Fitting Runtimes}
% % TODO ADD TEXT
% \begin{figure}[ht!]
%   \centering
%   \includegraphics[width=0.6\linewidth]{Figures/Freddiemac_Application/lm_times.png}
%   \caption{Independent linear model fitting times over expanding LGD dataset window.}
%   \label{fig:lm_speeds}
% \end{figure}

\clearpage
\subsection{Test for comparing the two-tiered Tobit model to the ZOC-TN model }\label{app_tests_tobit}
From the results seen so far, the transformation parameters of the ZOC-TN model seem to provide two valuable additional degrees of freedom to match the interior of the empirical LGD distribution when compared to the vanilla two-tiered Tobit model. We verify this empirically by performing both log-likelihood ratio and Wald tests to evaluate the significance of the transformation parameters, $a$ and $b$. This is done by noting that the ZOC-N likelihood is a nested model of the ZOC-TN likelihood, recovered when $(a, b) =(0, 1)$. Using the negative log-likelihood and estimated parameters and covariance matrix of the fitted ZOC-TN model found in Table~\ref{tab:stage1_model_performance} and Appendix~\ref{app:lgd_ilms}, we compute the test statistics for the two tests as
% TODO: CHANGE NOTATION
\begin{align*}
\begin{aligned}
    L &= 2 (\ell_{\mathrm{ZOC-TN}} - \ell_{\mathrm{ZOC-N}}), &
    W &= (\hat{\mathbf{a}}_{\mathrm{ZOC-TN}} - \mathbf{a}_{\mathrm{ZOC-N}})^\top
    \widehat{\mathrm{Var}}(\hat{\mathbf{a}}_{\mathrm{ZOC-TN}})^{-1}
    (\hat{\mathbf{a}}_{\mathrm{ZOC-TN}} - \mathbf{a}_{\mathrm{ZOC-N}}) \\
      &\approx 23,290, &
      &\approx 19,988
\end{aligned}
\end{align*}
where $\hat{\mathbf{a}}_{\mathrm{ZOC-TN}} = (\hat{a}, \hat{b})$, and $\mathbf{a}_{\mathrm{ZOC-N}} = (0, 1)$. Under the two tests, both test statistics are assumed to follow a $\chi^2_2$ distribution, producing p-values $\ll 0.01$ and providing strong evidence that the transformation meaningfully improves fit.
%TODO: ADD NOTE ON RELATIVE SPEED
% TODO: Merge with Goodness-of-fit table
% \begin{table}
% \centering
% \caption{Runtimes associated with fitting independent linear models on full LGD dataset. Standard error (S.E.) calculations return information on uncertainty of estimate parameters, based on Hessian of negative log-likelihood. }
% \label{tab:stage1_times}
% \footnotesize
% \begin{tabular}{lrr}
% \toprule
%   & No S.E. Calculation & With S.E. Calculation \\
% \midrule
% Quasi-B & $0.51$s & $0.898$s \\
% ZOC-N & $1.24$s & $1.84$s \\
% BE-INF & $1.29$s & $1.85$s \\
% ZOC-SG & $1$m $0.8$s &	$1$m $3.4$s  \\
% ZOC-TB & $2$m $35$s & $2$m $40$s \\
% ZOC-TN & $2.06$s & $55.9$s \\
% \bottomrule
% \end{tabular}
% \end{table}

\clearpage
\section{Additional Results for LGD Prediction}\label{app:lgd_annual_performance}

Figures \ref{fig:lgd_annual_performance_glm} and \ref{fig:lgd_annual_performance_tb} show annual forecasting accuracy of the independent linear and tree-boosting models, respectively, on the Freddie Mac LGD dataset. 

\begin{figure}[ht!]
  \centering
  \includegraphics[width=0.8\linewidth]{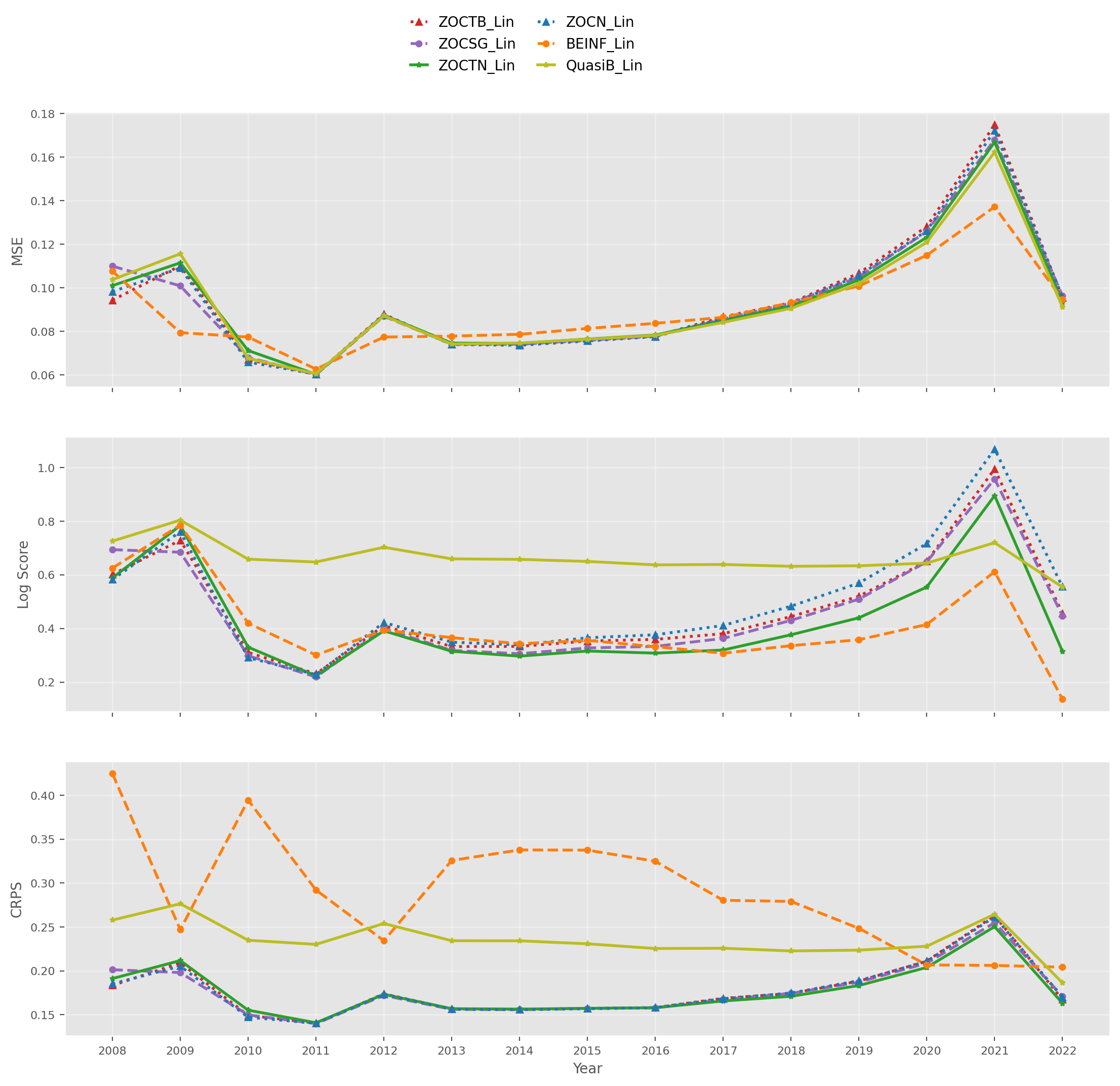}
  \caption{One-year-ahead prediction performance of independent linear models across all folds, 2008-2022.}
  \label{fig:lgd_annual_performance_glm}
\end{figure}

\begin{figure}[ht!]
  \centering
  \includegraphics[width=0.8\linewidth]{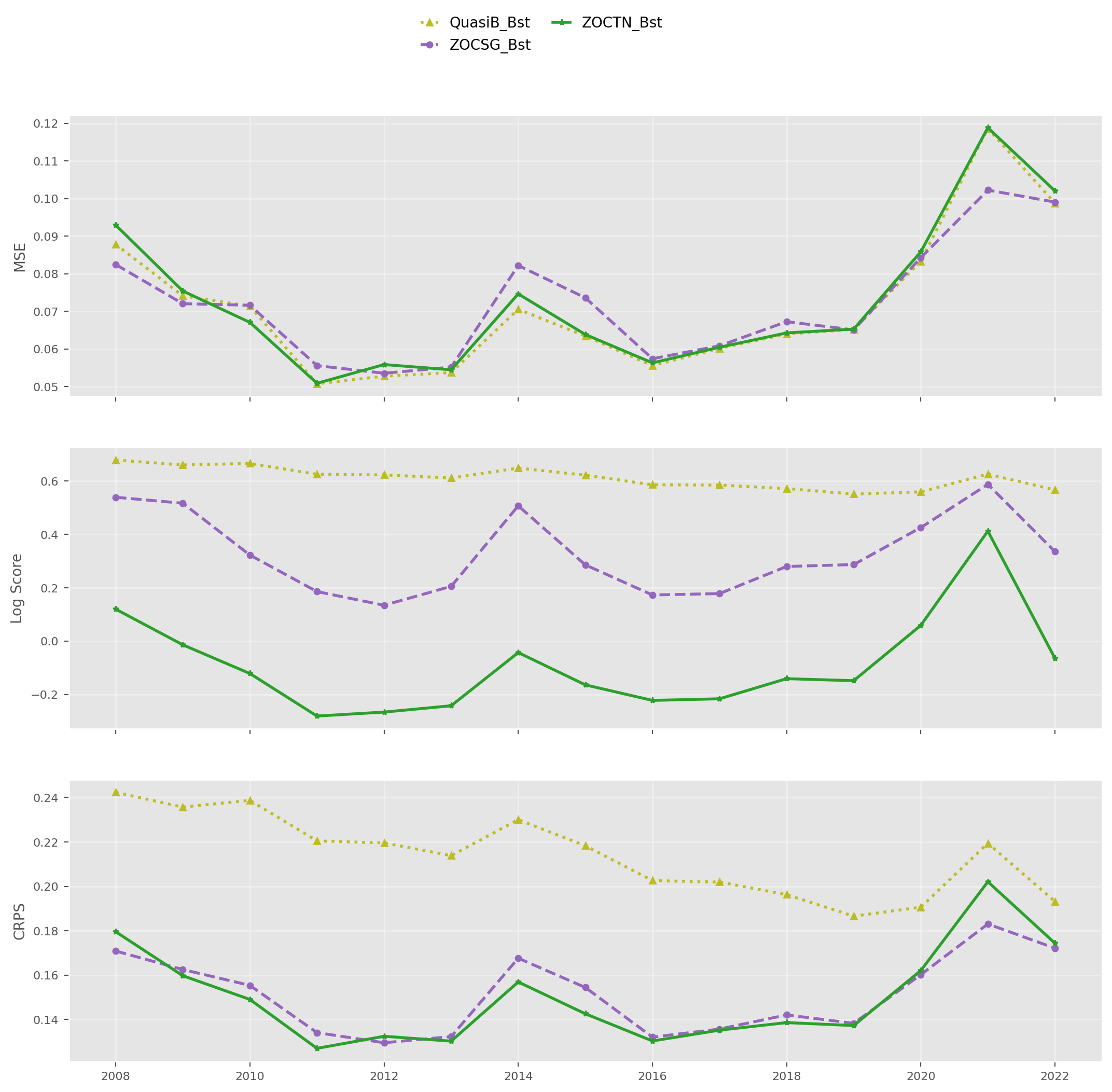}
  \caption{One-year-ahead prediction performance of independent tree-boosting models across all folds, 2008-2022.}
  \label{fig:lgd_annual_performance_tb}
\end{figure}

\begin{table}[ht!]
\centering
\caption{Average portfolio LGD distribution prediction accuracy metrics. }
\label{tab:economic_metrics}
\footnotesize
\begin{tabular}{lrr}
\toprule
 & Mean Absolute Error & 99Q Loss \\
\midrule
\multicolumn{3}{l}{\textit{Independent Linear Models}} \\
\midrule
Quasi-B & $4.22 \times 10^8$ & $1.62 \times 10^9$\\
ZOC-N & $4.14 \times 10^8$ & $1.50 \times 10^9$\\
BE-INF & $7.73 \times 10^8$ & $1.00 \times 10^{10}$\\
ZOC-TN & $4.33 \times 10^8$ & $1.80 \times 10^9$\\
ZOC-SG & $4.11 \times 10^8$ & $1.72 \times 10^9$\\
ZOC-TB & $4.22 \times 10^8$ & $1.54 \times 10^9$\\
\midrule
\multicolumn{3}{l}{\textit{GP Linear Models}} \\
\midrule
Spatial (\texttt{ZOCTN\_S\_Lin}) & $3.51 \times 10^8$ & $1.04 \times 10^9$\\
Spatio-temporal (\texttt{ZOCTN\_ST\_Lin}) & $2.82 \times 10^8$ & $1.90 \times 10^9$\\
\midrule
\multicolumn{3}{l}{\textit{Independent Tree-Boosting}} \\
\midrule
Quasi-B & $2.01 \times 10^8$ & $1.88 \times 10^9$\\
ZOC-TN & $1.73 \times 10^8$ & $1.80 \times 10^9$\\
ZOC-SG & $2.12 \times 10^8$ & $2.16 \times 10^9$\\
\midrule
\multicolumn{3}{l}{\textit{GP Tree-Boosting}} \\
\midrule
Spatial (\texttt{ZOCTN\_S\_Bst}) & $3.59 \times 10^8$ & $4.23 \times 10^9$ \\
Spatio-temporal (\texttt{ZOCTN\_ST\_Bst}) & $1.34 \times 10^8$ & $8.90\times 10^8$ \\
\bottomrule
\end{tabular}
\end{table}

% \begin{figure}[ht!]
%   \centering
%   \includegraphics[width=0.8\linewidth]{Figures/Freddiemac_Application/econ_metrics_glm_bigger.png}
%   \caption{Predictive portfolio loss distribution metrics for independent linear models across all folds.}
%   \label{fig:econ_metrics_glms}
% \end{figure}

\clearpage
\section{Log Score and CRPS}\label{app:ls_crps}
The log score is computed as the mean negative log-likelihood of the test set under a model's predictive distribution. This is done in the straightforward way, as prescribed by Equation~\eqref{eq:ind_ll} in the case of the independent models, while we employ a Monte Carlo estimator in the case of GP models. This is because to calculate the likelihood under the GP models (Equation~\eqref{eq:gp_likelihood}), we must marginalize out the latent Gaussian process, $\mathcal{G}_i := \mathcal{G}(\mathbf{s}_i)$. This is done by approximating the posterior distribution of $\mathcal{G}_i$ to be normal, $\mathcal{G}_i\sim \mathcal{N}(\omega_i, \nu_i^2)$, and estimating
\begin{align*}
    p^{\mathrm{model}}(Y_i \mid \mathbf{x}_i, \mathcal{D}_{\mathrm{train}}) &= \int p(Y_i \mid \mathbf{x}_i, \mathcal{G}_i) p(\mathcal{G}_i \mid \mathcal{D}_{\mathrm{train}}) d\mathcal{G}_i \\
    &\approx \frac{1}{N_{\mathrm{MC}}} \sum_{j=1}^{N_{\mathrm{MC}}} p(Y_i \mid \mathbf{x}_i, \mathcal{G}_i^j), \; \mathcal{G}_i^j \sim \mathcal{N}(\omega_i, \nu_i^2)
\end{align*}

For CRPS, a sampling method is employed for both independent and GP models, targeting the identity
\begin{align*}
    \mathrm{CRPS}(F^{\mathrm{model}}_i, Y_i) &= \mathbb{E}_{Z \sim F}[|Z - Y_i|] - \frac{1}{2}\mathbb{E}_{Z, Z' \sim F}[|Z - Z'|] \\
    &\approx \frac{1}{N_{\mathrm{MC}}}\sum_{j=1}^{N_{\mathrm{MC}}} |Z_i^j - Y_i| - \frac{1}{2N_{\mathrm{MC}}^2}\sum_{j, k=1}^{N_{\mathrm{MC}}} |Z_i^j - Z_i^k|, \; Z_i^j, Z_i^k \sim F^{\mathrm{model}}_i
\end{align*}
with $F^{\mathrm{model}}_i$ being the predictive distribution of the $i^{\mathrm{th}}$ observation under the model, and $Y_i$ the corresponding test-set observation. For the independent linear models, samples are drawn through the appropriate generating process for the respective likelihood, while in the GP case a single sample from the posterior GP distribution is first drawn for each $Z_i^j, Z_i^k$ in order to simulate from $F^{\mathrm{model}}_i$.

\clearpage
\section{SFLLD LGD Hyperparameters}\label{app:results}

\begin{table}[ht!]
\centering
\caption{List of tuned tree-boosting hyperparameters.}
\label{tab:tuning_param_desc}
\scriptsize
\begin{tabularx}{\textwidth}{lXX}
\hline
\textbf{Parameter} & \textbf{Description} & \textbf{Grid values} \\
\hline
\texttt{learning\_rate} & Shrinkage applied to each boosting step & \{0.001, 0.01, 0.1, 1, 10\} \\
\texttt{min\_data\_in\_leaf} & Minimum observations allowed in a terminal leaf & \{1, 10, 100, 1000\} \\
\texttt{num\_leaves} & Maximum number of terminal leaves per tree & \{4, 8, 16, 32, 64, 128, 256, 512\} \\
\texttt{lambda\_l2} & $L_2$ regularization penalty on leaf weights & \{0, 0.1, 1, 10\} \\
\texttt{feature\_fraction} & Fraction of features randomly sampled for each tree & \{0.70, 0.80, 0.90\} \\
\texttt{line\_search\_step\_length} & Whether line search is used for the step length & \{\texttt{True}, \texttt{False}\} \\
\hline
\end{tabularx}
\end{table}

\begin{figure}[ht!]
  \centering
  \includegraphics[width=0.8\linewidth]{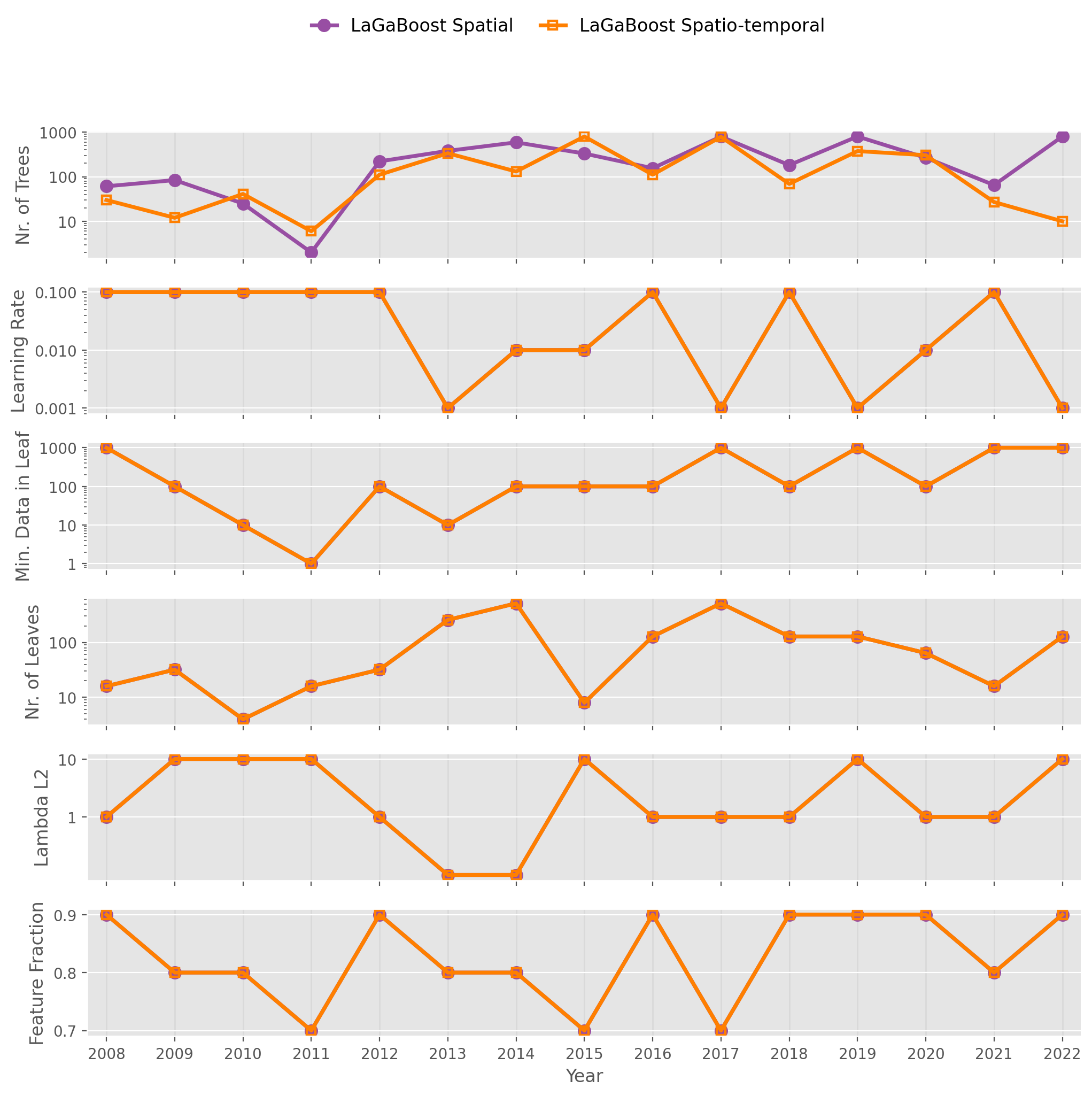}
  \caption{Tuned tree-boosting hyperparameters of spatial and spatio-temporal GP tree-boosting models. Configurations tuned via random grid search with $100$ samples. Due to computational constraints, only number of boosting iterations ($\texttt{N\_trees}$) is tuned independently for spatio-temporal models.}
  \label{fig:gpb_hps}
\end{figure}

\clearpage
\section{SFLLD LGD PIT Reliability Diagrams}\label{app:lgd_pit_qq}

\begin{figure}[ht!]
  \centering
  \includegraphics[width=1\linewidth]{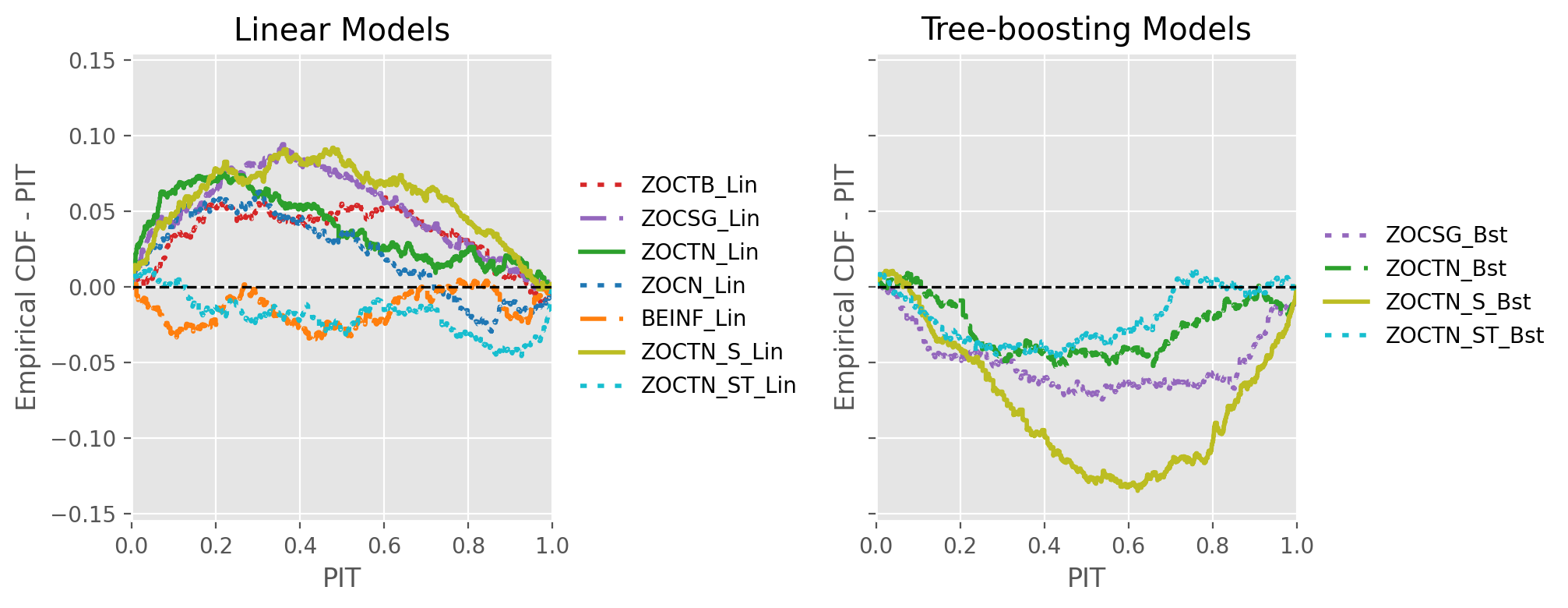}
  \caption{PIT reliability residual diagrams pooled across all test data sets.}
  \label{fig:stage2_pitreliability}
\end{figure}

Figure~\ref{fig:stage2_pitvanilla} presents the PIT reliability diagrams shown in Figure~\ref{fig:stage2_pitreliability} as $\text{Uniform}(0,1)$ QQ-plots.
\begin{figure}[ht!]
  \centering
  \includegraphics[width=0.8\linewidth]{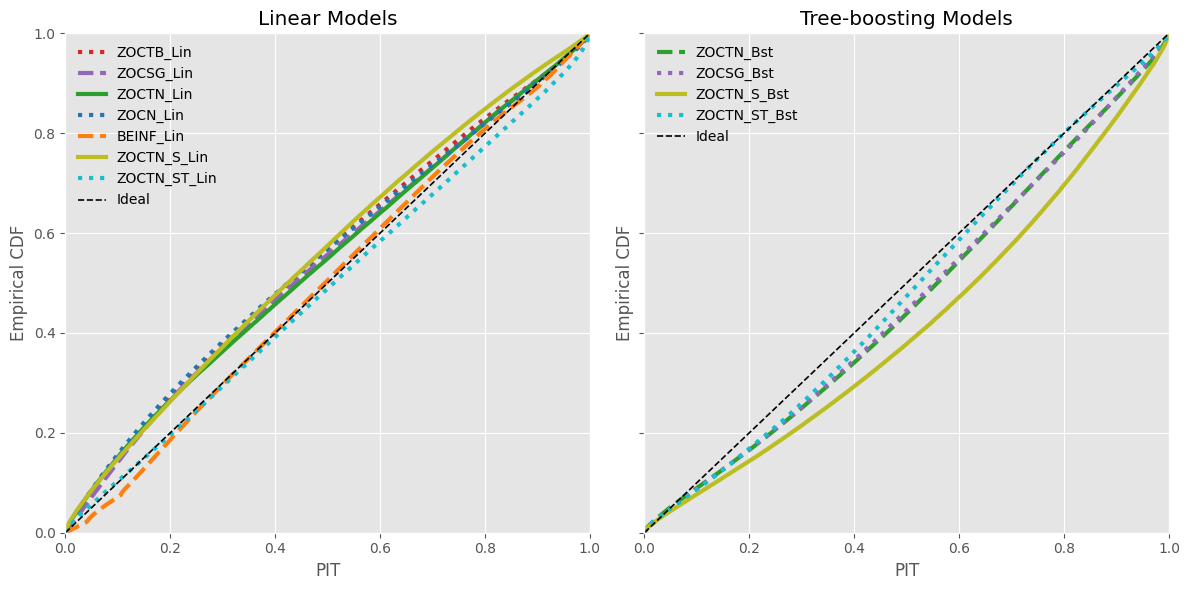}
  \caption{Randomized PIT diagrams of LGD forecasts of linear (left) and tree-boosting (right) models. Curves show $\text{Uniform}(0,1)$ QQ-plots of empirical PIT CDFs for one-year-ahead defaults, pooled across all evaluation folds, 2008-2022. Random uniform samples are drawn for probability masses at $0$ and $1$. The dashed black diagonal indicates perfect calibration.}
  \label{fig:stage2_pitvanilla}
\end{figure}

\clearpage
\section{SFLLD LGD Model Interpretation}\label{app:gp_interp}

\begin{figure}[ht!]
  \centering
  \includegraphics[width=1\linewidth]{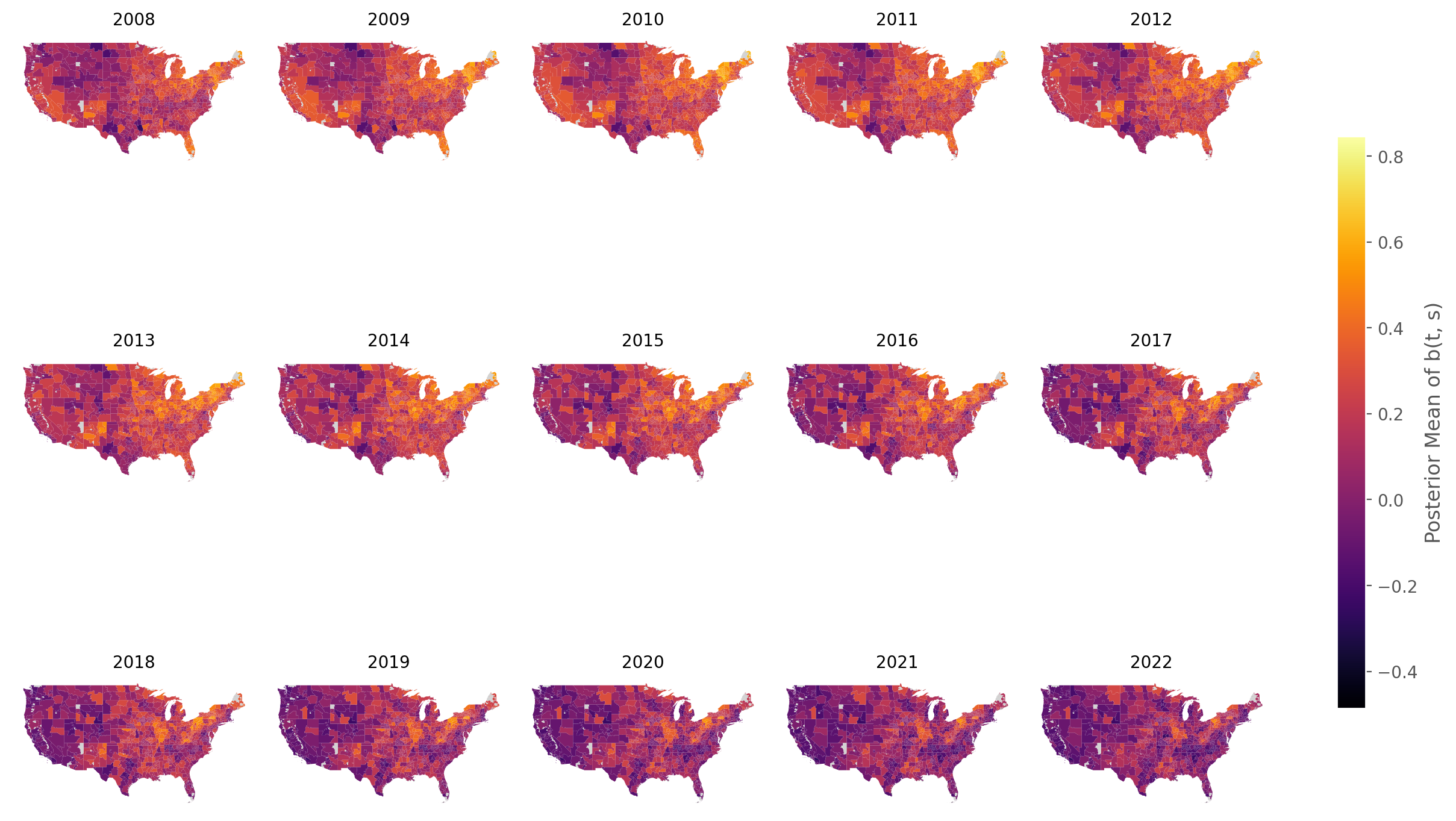}
  \caption{Posterior mean of latent Gaussian process in spatio-temporal tree-boosting model.}
  \label{fig:stgpb_heatmap}
\end{figure}

\begin{figure}[ht!]
  \centering
  \includegraphics[width=0.6\linewidth]{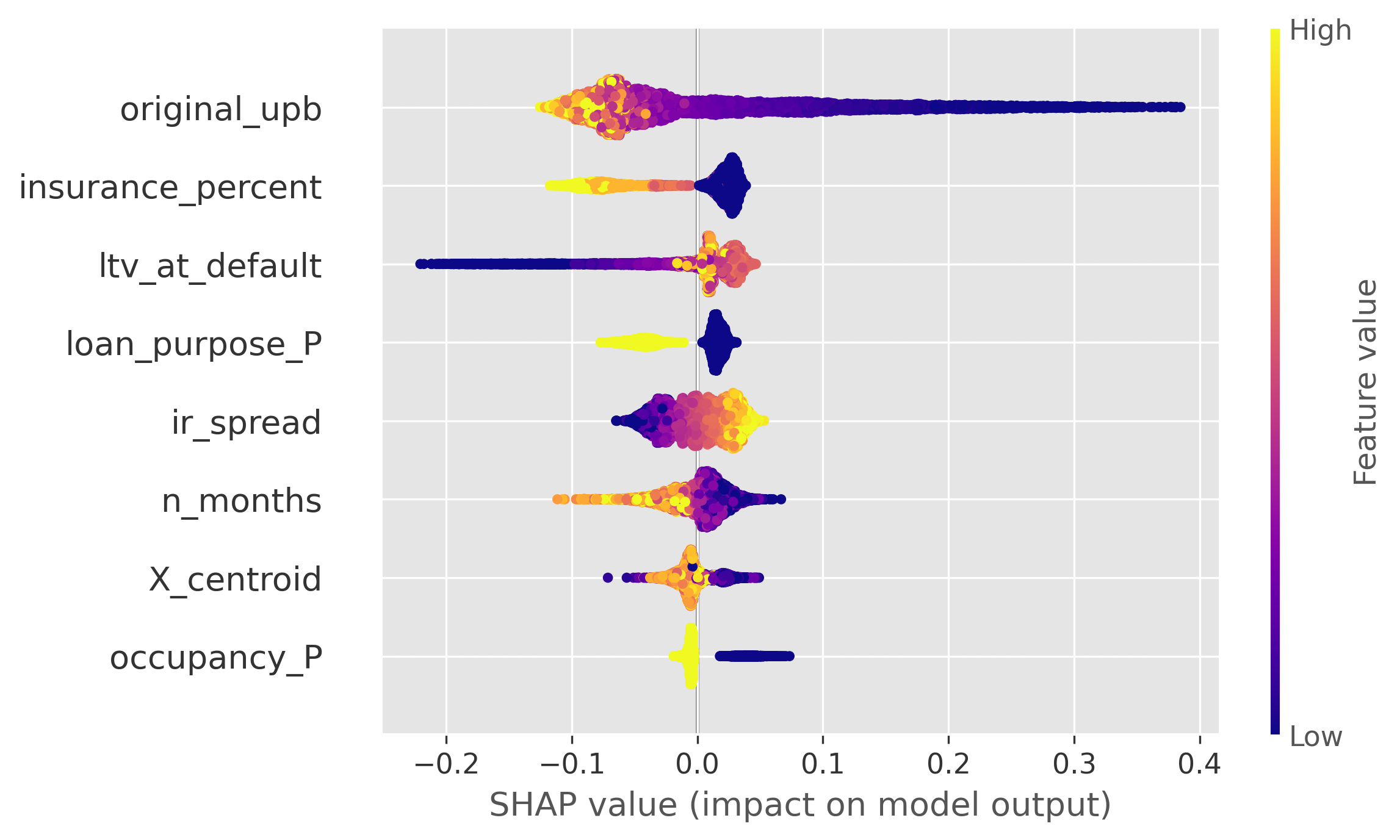}
  \caption{SHAP values for spatio-temporal GP tree-boosted model trained on data up to 2020. Predictor variables listed in descending order of average absolute SHAP value, top eight shown.}
  \label{fig:shap_summary}
\end{figure}

\begin{figure}[ht!]
  \centering
  \includegraphics[width=0.8\linewidth]{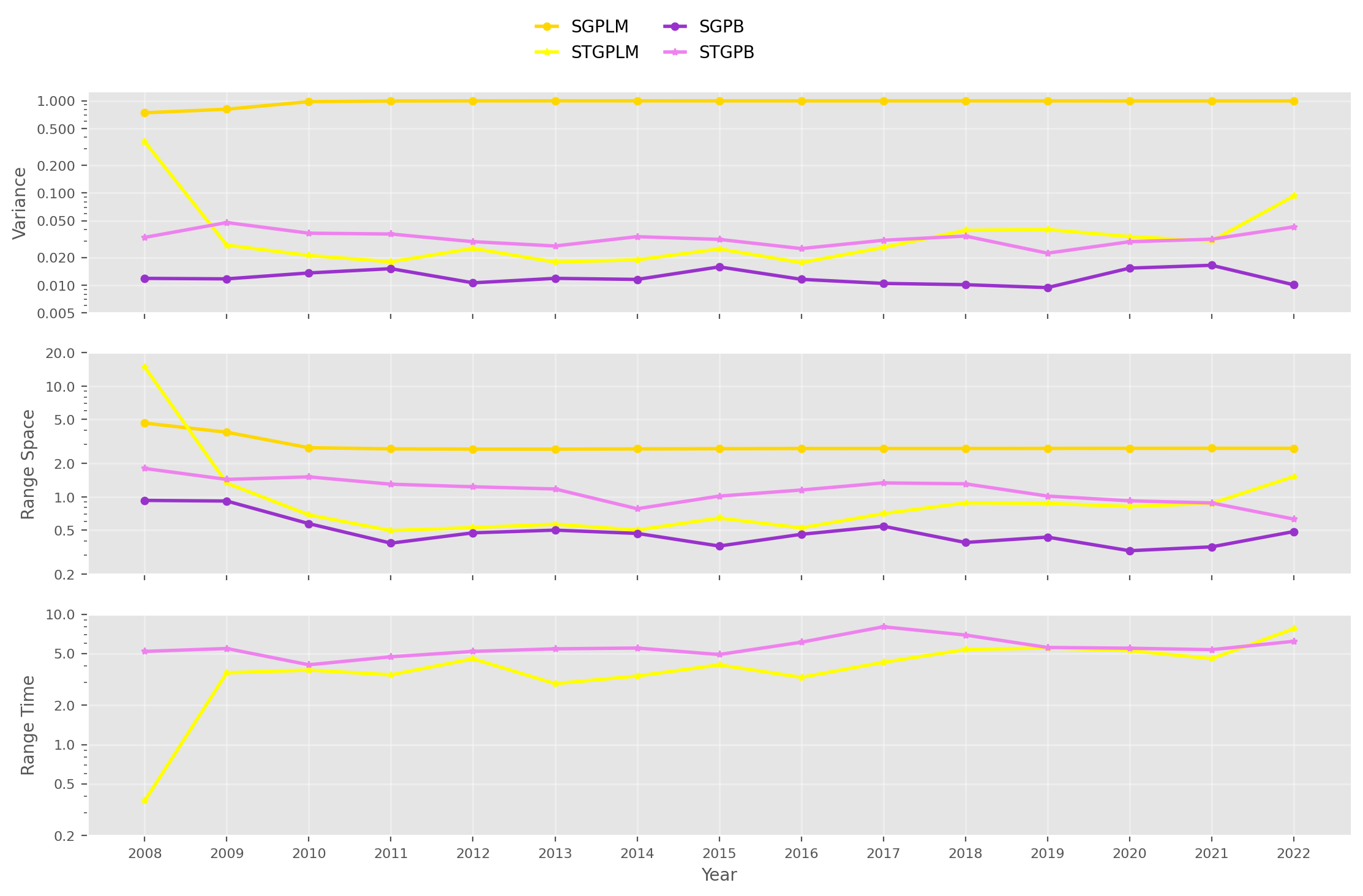}
  \caption{Estimated covariance parameters for GP models: spatial (S) / spatio-temporal (ST), linear (GPLM) / tree-boosting (GPB). Annual estimates correspond to distinct models trained on data up to the start of the year.}
  \label{fig:gpmodel_covpars}
\end{figure}

\end{document}